\title{Flattability of Priority Vector Addition Systems} 
\author{Roland Guttenberg}{Technical University of Munich, Germany}{guttenbe@in.tum.de}{https://orcid.org/0000-0001-6140-6707}{}
\authorrunning{Roland Guttenberg} 
\keywords{Priority Vector Addition Systems, Semilinear, Inductive Invariants, Geometry, Flattability, Almost Semilinear, Transformer Relation} 
\pgfplotsset{width=5cm,compat=1.10}
\newcommand{\N}{\mathbb{N}}
\newcommand{\Z}{\mathbb{Z}}
\newcommand{\Q}{\mathbb{Q}}
\newcommand{\UpwardClosure}[1]{\lceil#1\rceil}
\definecolor{niceredbright}{HTML}{bd0310}
\definecolor{nicebluebright}{HTML}{197b9b}
\definecolor{nicered}{HTML}{7f0a13}
\definecolor{niceblue}{HTML}{104354}
\definecolor{nicegreen}{HTML}{217516}
\definecolor{nicepurple}{HTML}{884bab}
\definecolor{nicebg}{HTML}{f6f0e4}
\definecolor{niceredlight}{HTML}{c9888d}
\definecolor{nicebluelight}{HTML}{78a4b8}
\definecolor{nicegreenlight}{HTML}{76de68}
\definecolor{nicepurplelight}{HTML}{bc87db}
\newcommand{\vect}[1]{\mathbf{#1}}
\newcommand{\vectSet}[1]{\mathbf{#1}}
\newcommand{\CanReach}[1]{\to_{#1}^{\ast}}
\DeclareMathOperator{\VAS}{\mathcal{V}}
\DeclareMathOperator{\FO}{FO}
\DeclareMathOperator{\dir}{dir}
\DeclareMathOperator{\dirOfRun}{ends}
\DeclareMathOperator{\target}{tgt}
\DeclareMathOperator{\source}{src}
\DeclareMathOperator{\Rel}{Rel}
\DeclareMathOperator{\dimD}{d}
\DeclareMathOperator{\mtc}{mtc}
\DeclareMathOperator{\IncX}{x++}
\DeclareMathOperator{\IncY}{y++}
\DeclareMathOperator{\IncZ}{z++}
\newcommand{\Expression}[0]{\vectSet{E}}
\DeclareMathOperator{\rank}{rank}
\DeclareMathOperator{\Component}{Comp}
\begin{document}

\maketitle

\begin{abstract}
Vector addition systems (VAS), also known as Petri nets, are a popular model of concurrent systems. Many problems from many areas reduce to the
\emph{reachability problem} for VAS, which consists of deciding whether a target configuration of a VAS is reachable from a given initial configuration. One of the main approaches to solve the problem on practical instances is called \emph{flattening}, intuitively removing nested loops. This technique is known to terminate for semilinear VAS due to \cite{Leroux13}. In this paper, we prove that also for VAS with nested zero tests, called Priority VAS, flattening does in fact terminate for all semilinear reachability relations. Furthermore, we prove that Priority VAS admit semilinear inductive invariants. Both of these results are obtained by defining a well-quasi-order on runs of Priority VAS which has good pumping properties.
\end{abstract}

\section{Introduction}\label{SectionIntroduction}


Vector addition systems (VAS), also known as Petri nets, are a popular model of concurrent systems. VAS have a very rich theory and have been intensely studied. In particular, the \emph{reachability problem} for VAS, which consists of deciding whether a target configuration of a VAS is reachable from a given initial configuration, has been studied for over 50 years. It was proved decidable in the 1980s \cite{Mayr81,Kosaraju82, Lambert92}, but its complexity (Ackermann-complete) could only be determined recently \cite{CzerwinskiLLLM19, CzerwinskiO21, Leroux21}. 

In \cite{Leroux09} and  \cite{Leroux13}, Leroux proved two fundamental results about the reachability sets of VAS.
In \cite{Leroux09}, he showed that every configuration outside the reachability set $\vect{R}$ of a VAS is separated from $\vect{R}$ by a semilinear inductive invariant (for basic facts on semilinear sets see e.g. \cite{Haase18}). This immediately led to a very simple algorithm for the reachability problem consisting of two semi-algorithms, one enumerating all possible paths to certify reachability, and one enumerating all semilinear sets and checking if they are separating inductive invariants. 

In \cite{Leroux13}, he proved that if the reachability set of a VAS is semilinear, then it is \emph{flattable}. Flattability states the existence of a finite sequence
 \(\rho_1, \dots, \rho_r\) of transition sequences such that every reachable vector can be reached via a sequence in \(\rho_1^{\ast} \dots \rho_r^{\ast}\), i.e., by means of a ``flat'' expression without nested loops. Flattability leads to an algorithm for deciding whether a semilinear set is included in or equal to the reachability set of a given VAS, i.e. whether a VAS has the set of desired behaviours. If it is not included, guess the violating configuration and check it is unreachable, otherwise guess a linear path scheme and verify it. 
  
One major branch of ongoing research in the theory of VAS studies whether results like the above extend to more general systems \cite{Reinhardt08, Bonnet11, Bonnet12, RosaVelardoF11, AtigG11, LerouxPS14, LerouxST15, HofmanLLLST16, LazicS16, FinkelLS18, LerouxS20, BlondinL23}. In particular, the reachability problem has been proved decidable for Priority VAS, an extension of VAS in which counters can be tested for zero, albeit in restricted manner: there is a total order on the counters such that whenever a counter is tested for \(0\), all smaller counters are simultaneously tested as well. In a famous but very technical paper, Reinhardt proved that the reachability problem remains decidable for Priority VAS \cite{Reinhardt08}. In \cite{Bonnet11} and later in his thesis \cite{Bonnet12}, Bonnet presented a more accessible proof which was obtained by extending the result of \cite{Leroux09}, separability by inductive semilinear sets.

In this paper we extend the result of \cite{Leroux13} to arbitrary Priority VAS, and on the way obtain another proof that \cite{Leroux09} extends. That is, we show that 
1. Priority VAS admit semilinear inductive invariants, and  2. semilinear Priority VAS are flattable. Notice that 2. was not known even for the special case of one testable counter. Furthermore, as remarked in \cite{LerouxS20}, while two-dimensional vector addition systems with a zero test and a reset are effectively semilinear \cite{FinkelLS18}, they are not flattable in general. Hence, our second result establishes a theoretical limit of flattability. 

These results are obtained via two technical contributions of independent interest. 

\subparagraph{Regular expressions for Priority VAS.} We give a new characterization of the reachability relations of Priority VAS.
More precisely, we show that a relation is the reachability relation of a Priority VAS if and only if it can be represented as a regular expression over the reachability relations of standard VAS, with the restriction that the Kleene star operation can only be applied to monotone relations. For example in case of the Priority VASS in Figure \ref{FigureExampleVASS} as \(\VAS\), we would consider the VASS without the zero test transition as \(\VAS_0\), and if we are interested in the reachability relation starting at \(q_s\), ending at \(q_s\) and requiring counter x to start and end at \(0\), formally \(\rightarrow_{\VAS, q_s \to q_s}^{\ast} \cap \{x_{in}=x_{out}=0\}\), then we would rewrite \(\rightarrow_{\VAS, q_s \to q_s}^{\ast} \cap \{x_{in}=x_{out}=0\}=(\rightarrow_{\VAS_0, q_s \to q_t}^{\ast} \cap \{x_{in}=x_{out}=0\})^{\ast}\). I.e., instead we consider the inner normal VASS starting at \(q_s\), ending at \(q_t\) and fixing \(x\) to \(0\) at start and end, then taking the reflexive transitive closure of this relation. In general zero testing a coordinate will generate an expression of the form \(\Expression^{\ast}\), where \(\Expression\) fixes some coordinates to \(0\) at start and end. One important aspect of this characterization is that all intersections with linear relations (for example here with \(x_{in}=x_{out}=0\)) can be pushed purely to the inner VASS level, where they were dealt with in \cite{Leroux13}. Hence in our arguments we only have to consider how to deal with \(\circ\) and \(\ast\), not with intersections or projections.

\begin{figure}[h!]
\begin{minipage}{4.5cm}
	\begin{tikzpicture}[-, auto, node distance=0.5cm]
		\tikzset{every place/.append style={minimum size=0.4cm, nicebluebright}}
		\tikzset{every label/.append style={text=nicebluebright}}
		\tikzset{every transition/.style={minimum size=0.25cm}}
		\tikzset{every edge/.append style={font=\scriptsize}}
		
		\newcommand*{\distancesubx}{2.5cm}
	
		\node[place] (A) at (0.5,0) {\(q_s\)};
		\node[place] (B) at (0.5+0.5*\distancesubx, 0) {\(q_1\)};
		\node[place] (C) at (0.5+1.0*\distancesubx, 0) {\(q_2\)};
		\node[place, double] (D) at (0.5+1.5 * \distancesubx, 0) {\(q_t\)};
		\node[white!100] (E) at (0,0) {};
		
		\path[->, thick] (A) edge[] (B);
		\path[->, thick] (B) edge[] (C);
		\path[->, thick] (C) edge[] node[above] {\(\IncZ\)}(D);
		\path[->, thick, looseness=5] (B) edge[] node[above] {\(\IncX\)} (B);
		\path[->, thick, looseness=5] (C) edge[] node[above] {\(x--;\IncY\)} (C);
		\path[->, thick, in=90, out=90, looseness=1.5] (D) edge[] node[above] {\(x==0\)}(A);
		\path[->, thick] (E) edge[] (A);
		
	\end{tikzpicture}
\end{minipage}%
\begin{minipage}{10cm}
	\begin{tikzpicture}[-, auto, node distance=0.5cm]
		\tikzset{every place/.append style={minimum size=0.4cm, nicebluebright}}
		\tikzset{every label/.append style={text=nicebluebright}}
		\tikzset{every transition/.style={minimum size=0.25cm}}
		\tikzset{every edge/.append style={font=\scriptsize}}
		
		\newcommand*{\distancesubx}{2.5cm}
	
		\node[place] (A) at (0.5,0) {\(q_s\)};
		\node[place] (B) at (0.5+0.5*\distancesubx, 0) {\(q_1\)};
		\node[place] (C) at (0.5+1.0*\distancesubx, 0) {\(q_2\)};
		\node[place] (D) at (0.5+1.5 * \distancesubx, 0) {\(q_t\)};
		\node[white!100] (E) at (0,0) {};
		\node[place] (A2) at (0.5+2.1 * \distancesubx,0) {\(q_s'\)};
		\node[place] (B2) at (0.5+2.45*\distancesubx, 0) {\(q_1'\)};
		\node[place] (C2) at (0.5+2.95*\distancesubx, 0) {\(q_2'\)};
		\node[place, double] (D2) at (0.5+3.5 * \distancesubx, 0) {\(q_t'\)};
		
		\path[->, thick] (A2) edge[] (B2);
		\path[->, thick] (B2) edge[] (C2);
		\path[->, thick] (C2) edge[] node[above] {\(\IncZ\)} (D2);
		\path[->, thick, looseness=5] (B2) edge[] node[above] {\(\IncX\)} (B2);
		\path[->, thick, looseness=5] (C2) edge[] node[above] {\(x--;\IncY\)} (C2);
		
		\path[->, thick] (E) edge[] (A);
		
		\path[->, thick] (A) edge[] (B);
		\path[->, thick] (B) edge[] (C);
		\path[->, thick] (C) edge[] node[above] {\(\IncZ\)} (D);
		\path[->, thick, in=90, out=90, looseness=1.5] (D) edge[] node[above] {\(x==0\)}(A);
		
		\path[->, thick] (D) edge[] node[above] {\(x==0\)}(A2);
	\end{tikzpicture}
\end{minipage}
\caption{Example of a PVASS and an equivalent (for \(x_{in}=0\)) flattened version.} \label{FigureIntroductoryExample} \label{FigureExampleVASS}
\end{figure}
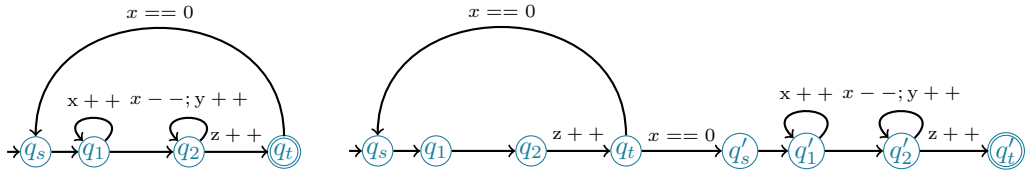

Characterizations of complicated relations using RegEx over simpler relations have already proven useful in other contexts \cite{PiskacK08, HaaseZ19}.

\subparagraph{A well-quasi-order (wqo) on the set of runs of Priority VAS.}  A wqo on a set is a partial order such that every subset has finitely many minimal elements. There exist wqos on many kinds of objects: vectors, sequences, trees, or graphs. A wqo on the set of runs of a PVAS provides the following decomposition: Let \(\source(\rho)\) denote the source of a run \(\rho\), \(\target(\rho)\) denote its target and \(\dirOfRun(\rho)=(\source(\rho), \target(\rho))\) its \emph{pair of ends}, i.e. source/target pair. Let \(\Omega_{min}\) be the finite set of minimal runs. Then the reachability relation \(\rightarrow_{\VAS}^{\ast}\) can be written as
\(\rightarrow^{\ast}=\bigcup_{\rho \in \Omega_{min}} \{\dirOfRun(\rho') \mid \rho' \geq \rho \},\)
i.e. we observe that every pair of configurations \(\vect{c}, \vect{c}'\) such that \(\vect{c} \rightarrow^{\ast} \vect{c}'\) is witnessed by some run, and then we split runs into one group for every minimal run. Intuitively, this reduces the problem of proving flattability of a VAS to proving flattability in each of the groups determined by the minimal runs. The proof that semilinear VAS are flattable follows this scheme.  More precisely, the proof, given in \cite{Leroux13}, takes the wqo on the runs of a VAS introduced by Jancar in \cite{Jancar90}, proves that it satisfies certain pumping properties, properties shown in \cite{Jancar90, Leroux11, Leroux12} and new ones, and derives the result. We proceed in the same way, starting from a wqo on the set of runs of a Priority VAS\footnote{The wqo for VAS does not respect the zero tests and hence does not work, a new ordering is necessary.}.

Let us now cconsider the concrete case of pumping and flattening in the example in Figure \ref{FigureExampleVASS}. Consider the run \(\rho: q_s \to q_1 \to q_2 \to q_t\) which does not use any of the self-loops. Intuitively, if we did not have the \(x_{in}=x_{out}=0\) restriction, both the \(x\) and \(y\) coordinates can be pumped arbitrarily along this run. In order to pump the \(x\) coordinate simply add one use of the self-loop on \(q_1\), and to increase the \(y\) coordinate add both self-loops once. Observe that despite using a loop which is not non-negative in the second case, and hence could not be arbitrarily often repeated by itself, this loop only decreases a coordinate which was already pumped prior in the run, i.e. the sequence of loops can be pumped. 

Using this basic image of pumping, we can now describe the idea of our proof. In the example, define \(\Expression:=\rightarrow_{\VAS_0, q_s \to q_t}^{\ast} \cap \{x_{in}=x_{out}=0\}\), then the target is \(\Expression^{\ast}\) as explained above. One first determines by induction hypothesis a decomposition of \(\Expression\) into groups of runs. In this case there is one group, as the run above is the unique minimal run.

Now we proceed to describe runs of \(\Expression^{\ast}\) as sequences of \emph{which group of \(\Expression\) was used}. I.e. we performed one important mental step here: A run of \(\Expression^{\ast}\) is now viewed as a sequence \(\vect{C}_0 \to_{\Expression} \vect{C}_1 \to_{\Expression} \dots \to_{\Expression} \vect{C}_r\) of steps in \(\Expression\), i.e. every step is now one application of the outer loop, and we \emph{abstract away} the information of how precisely these steps look, in particular how often the inner self-loops were taken. This leads to pumping a vector into a run having two cases: Either add more outer loops/transitions, the same as for VASSs, or \emph{increase one of the already existing loops}. For example for the run \(q_s(0,0,0) \to_{\Expression} q_t(0,0,1)\) described above, with respect to \(\Expression^{\ast}\), we can pump both the \(y\) and \(z\) coordinates arbitrarily. Pumping \(y\) is an instance of \emph{increasing an existing loop}: Change the existing loop by taking the inner self-loops. On the other hand, pumping \(z\) is the other type: We simply add more instances of the outer loop. The resulting PVASS without any nested loops is depicted in the right of  Figure \ref{FigureExampleVASS}. Observe in particular that for pumping those vectors it was not necessary to add arbitrarily many repetitions of the outer loop which take the inner loops arbitrarily often.

\subparagraph{Structure of the paper.}  In Section \ref{SectionSimplePreliminaries} we define a few preliminaries. Section \ref{SectionVAS} introduces VAS and Priority VAS.  Our first result, the characterization of the reachability relations of Priority VAS in terms of regular expressions, is proved in Section \ref{SectionRegExEquivalence}. In Section \ref{SectionWQO} we define well-quasi-orders, and in particular our novel wqo on runs of Priority VAS. Section \ref{SectionGeometricPreliminaries} introduces geometric preliminaries and previous results about VAS needed to state and prove our results. Section \ref{SectionProofFinalTheorem} defines flattability, and proves our main result.

\section{Preliminaries} \label{SectionSimplePreliminaries}


We let $\N, \mathbb{Z}, \mathbb{Q}, \mathbb{Q}_{\geq 0}$ denote the sets of natural numbers containing \(0\), integers, and (non-negative) rational numbers. We use uppercase letters for sets/relations and boldface for vectors and sets/relations of vectors. For the \(i\)-th entry of a vector \(\vect{x} \in \Q^d\) we write \(\vect{x}(i)\).

Given sets \(\vectSet{X},\vectSet{Y} \subseteq \mathbb{Q}^d, Z \subseteq \mathbb{Q}\), we write \(\vectSet{X}+\vectSet{Y}:=\{\vect{x}+\vect{y} \mid \vect{x} \in \vectSet{X}, \vect{y} \in \vectSet{Y}\}\) for the Minkowski sum and \(Z \cdot \vectSet{X}:=\{\lambda \cdot \vect{x} \mid \lambda \in Z, \vect{x} \in \vect{X}\}\). By identifying elements \(\vect{x}\in \mathbb{Q}^d\) with \(\{\vect{x}\}\), we define \(\vect{x}+\vectSet{X}:=\{\vect{x}\}+\vectSet{X}\), and similarly \(\lambda \cdot \vectSet{X}:=\{\lambda\} \cdot \vectSet{X}\) for \(\lambda \in \mathbb{Q}\). 

A set \(\vectSet{L} \subseteq \N^d\) is \emph{linear} if \(\vectSet{L}=\vect{b}+\N \vect{p}_1 + \dots + \N \vect{p}_r\) with \(\vect{b}, \vect{p}_1, \dots, \vect{p}_r \in \N^d\). A relation \(\vectSet{L} \subseteq \N^{d'} \times \N^{d''}\) is linear if it is linear when viewed as a set. A set/relation \(\vect{S}\) is \emph{semilinear} if it is a finite union of linear sets/relations. The semilinear sets/relations coincide with the sets/relations definable via formulas \(\varphi \in \FO(\mathbb{N}, +)\), also called Presburger Arithmetic.

Given relations \(\vectSet{R}_1 \subseteq \N^{d'} \times \N^{d_{mid}}\) and \(\vectSet{R}_2 \subseteq \N^{d_{mid}} \times \N^{d''}\), we write \(\vectSet{R}_1 \circ \vectSet{R}_2 =\{(\vect{v}, \vect{w}) \in \N^{d'} \times \N^{d''} \mid \exists \vect{x} \in \N^{d_{mid}}: (\vect{v}, \vect{x}) \in \vectSet{R}_1, (\vect{x}, \vect{w}) \in \vectSet{R}_2\}\) for composition. Given \(\vectSet{R} \subseteq \N^{d'} \times \N^{d'}\), we write \(\vectSet{R}^{\ast}\) for the reflexive and transitive closure (w.r.t. \(\circ\)).

Let \(j,d',d'' \in \N\) with \(j \leq d', d''\). A relation \(\vectSet{R} \subseteq \N^{d'} \times \N^{d''}\) is \emph{monotone in the \(j\)-th last coordinate} if for every \((\vect{x}, \vect{y}) \in \vectSet{R}\) we also have \((\vect{x}+\vect{e}_{d'+1-j}, \vect{y}+\vect{e}_{d''+1-j}) \in \vectSet{R}\), where \(\vect{e}_k\) is the \(k\)-th unit vector\footnote{The reason for starting to count coordinates from the end will be explained in the next section.}.  A relation \(\vectSet{R} \subseteq \N^{d'} \times \N^{d''}\) is \emph{monotone} if \(d'=d''\) and 
\(\vectSet{R}\) is monotone in every coordinate.

\section{Vector Addition Systems and Priority Vector Addition Systems} \label{SectionVAS}


A \emph{priority vector addition system with states} (PVASS) \(\VAS\) of dimension \(d \in \N\) is a finite directed multigraph \((Q,E)\), whose edges \(e\) are labelled with a pair of a vector \(f(e)\in \Z^d\) and a number \(g(e)\in \{0,\dots,d\}\). The set of configurations of \(\VAS\) is \(Q \times \N^d\). An edge \(e=(p,p')\) with label \((f(e), g(e))\) induces a relation \(\to_{e}\) on configurations via \(\vect{c}=(q, \vect{x}) \to_{e} \vect{c}'=(q',\vect{x}')\) if and only if \(q=p, q'=p'\), \(\vect{x}(j)=0\) for all \(1 \leq j \leq g(e)\) and \(\vect{x}'=\vect{x}+f(e)\). Intuitively, the edge can only be used in state \(p\) to move to state \(p'\) and adds the vector \(f(e)\) to the current configuration. However, two conditions have to be fulfilled: We have to again arrive at a configuration \(\vect{c}'\) (i.e. \(\vect{x}'\) has to stay non-negative), and \(\vect{x}\) must be \(0\) on the first \(g(e)\) coordinates. We say that these coordinates are \emph{tested for 0}. Observe that contrary to Minsky machines, if a counter \(i\) is tested for \(0\), also all smaller counters \(j \leq i\) are tested for \(0\).

We write \(\to_{\VAS}=\bigcup_{e\in E} \to_{e}\) and let \(\to_{\VAS}^{\ast}\) denote its reflexive and transitive closure. A run of \(\VAS\) is a finite sequence \(\rho=(\vect{c}_0, \vect{c}_1, \dots, \vect{c}_k)\) of configurations such that \(\vect{c}_i \to_{\VAS} \vect{c}_{i+1}\) for all \(0 \leq i \leq k-1\). The \emph{source} of the run \(\rho\) is the configuration \(\source(\rho):=\vect{c}_0\), and the \emph{target} is \(\target(\rho):=\vect{c}_k\). The \emph{pair of ends} of \(\rho\) is \(\dirOfRun(\rho)=(\source(\rho), \target(\rho))\). A configuration \(\target\) is reachable from \(\source\) in \(\VAS\) if \( \source \to_{\VAS}^{\ast} \target\), or equivalently if there exists a run \(\rho\) with \(\dirOfRun(\rho)=(\source,\target)\).

A \emph{priority vector addition system} (PVAS) \(\VAS\) is a PVASS with only one state, a \emph{vector addition system with states} (VASS) is a PVASS where \(g(e)=0\) for every edge \(e\), i.e. no counter is ever tested for \(0\). A VAS is a PVAS which is also a VASS.

Since the class of reachability relations of PVASS is lacking some important closure properties, and we do not want to distinguish between PVASS and PVAS all the time, we instead consider a larger class of sets (which coincides for the two models). Intuitively, not every run is accepting anymore, instead a run has to start in a given initial state \(p\) and end in a given final state \(q\), and certain counters have to start and/or end with fixed values. The idea is to then view the relation as subset of \(\N^{d'} \times \N^{d''}\) where \(d', d''\) are the number of input and respectively output counters which are not fixed.

%
%

\begin{definition}
\cite{ClementeCLP17} A relation \(\vect{X} \subseteq \N^{d'} \times \N^{d''}\) is a \emph{(P)VASS section} if there exists a (P)VASS \(\VAS\) of dimension \(d \geq d', d''\), states \(p,q\) and vectors \(\vect{b}_s \in \N^{d-d'}, \vect{b}_t \in \N^{d-d''}\) such that

\(\vectSet{X}=\{(\vect{x}, \vect{y})\in \N^{d'} \times \N^{d''} \mid (p,(\vect{b}_s, \vect{x})) \CanReach{\VAS} (q, (\vect{b}_t, \vect{y}))\}\).
\end{definition}

This is the reason for defining monotonicity counting from the end: The same counter has different indices as unit vector because of fixing a different number of coordinates.

We write the section defined by the PVASS \(\VAS\), the states \(p,q\) and the vectors \(\vect{b}_s, \vect{b}_t\) as \(\binom{\VAS,p,q}{\vect{b}_s, \vect{b}_t}\). If \(\VAS\) is a PVAS, then we leave away the unique states and only write \(\binom{\VAS}{\vect{b}_s, \vect{b}_t}\). The reason for this notation is that PVASS sections should be viewed as an intersection of two relations: The reachability relation with fixed source state and target state, and the linear relation defined by the fixed coordinates. In the notation we like to split these parts. 

We have three remarks on the definition of PVASS sections.

\begin{remark} \label{RemarkFixedCoordinates}
We fix coordinates starting from the first, i.e. the most often zero tested coordinates are fixed first. This does not restrict the class of PVAS sections.
\end{remark}

\begin{remark} \label{RemarkStates}
At the cost of increasing the dimension by \(3\), states are a special case of fixed never-zero-tested coordinates \cite{HopcroftP79}, hence (P)VASS sections can equivalently be defined by (P)VAS. Furthermore, similar to how zero tests in Minsky machines can be assumed to only change the state, we will always require that \(f(e)(j)=0\) for all \(j \leq g(e)\), i.e. any counter which is being zero tested is not updated. To obtain this assumption, simply move to an intermediate state from which you perform the additions afterwards. 
\end{remark}

\begin{remark}
When using states, we can w.l.o.g. require \(\vect{b}_s=0^{d-d'}\) and \(\vect{b}_t=0^{d-d''}\), i.e. fixed coordinates are fixed to 0. However, since it is sometimes preferable to not use states, we allow general vectors \(\vect{b}_s, \vect{b}_t\) for the fixed coordinates.
\end{remark}

\section{Equivalence of PVASS and Regular Expressions over VASS} \label{SectionRegExEquivalence}


Next we define the grammar which we will then prove to be equivalent to PVASS sections. Intuitively, one considers regular expressions where leaves/letters are VASS sections \(\vect{Y}\). Intermediate relations might have different input and output dimensions, hence non-terminals depend on the dimensions, and composition requires matching dimensions.
\begin{definition}
Consider the following grammar with non-terminals \(\Expression_{d',d''}\) for \(d', d'' \in \N\):
\begin{align*}
\Expression_{d',d''}&=\vectSet{Y}_{d',d''} \mid \Expression_{d',d_{mid}} \circ \Expression_{d_{mid},d''} \mid \Expression_{d',d''} \cup \Expression_{d',d''} \\
\Expression_{d',d'}&=\vectSet{Y}_{d',d'} \mid \Expression_{d',d_{mid}} \circ \Expression_{d_{mid},d'} \mid \Expression_{d',d'} \cup \Expression_{d',d'} \mid \Expression_{d',d'}^{\ast},
\end{align*}

where the \(\vectSet{Y}_{d',d''}\) are VASS sections \(\subseteq \N^{d'} \times \N^{d''}\). An expression \(\Expression_{d',d''}\) defines in a natural way a relation \(\Rel(\Expression_{d',d''}) \subseteq \N^{d'} \times \N^{d''}\), by interpreting \(\circ\) as composition of relations, \(\cup\) as union and \(\ast\) as reflexive transitive closure. We usually write \(\Expression\) instead of \(\Expression_{d',d''}\) when the dimensions \(in(\Expression)=d'\) and \(out(\Expression)=d''\) are clear.
\end{definition}

 Before we state the main theorem of this section, there are two things to note about this definition of the semantics. 1) \(\ast\) by definition adds reflexivity, however it only does so in the non-fixed counters. This was one goal of the definition allowing different dimensions of intermediate objects. 2) The semantics for composition however are not as intuitive as it might seem, see the following example.

\begin{example}
Let \(\VAS\) be the 1-dimensional VAS with two transitions, incrementing \(x\) and decrementing \(x\). Then its reachability relation is \(\rightarrow_{\VAS}^{\ast}=\N \times \N\). Consider \(\Rel(\binom{\VAS}{\epsilon,0} \circ \binom{\VAS}{1,\epsilon}) \subseteq \N \times \N\). We have \(\Rel(\binom{\VAS}{\epsilon,0})=\N \times \{\epsilon\}\) and \(\Rel(\binom{\VAS}{1,\epsilon})=\{\epsilon\} \times \N\), where we write \(\epsilon \in \N^0\) for the unique empty product. Despite the fixed coordinates \(0,1 \in \N\) not matching up, we obtain \(\Rel(\binom{\VAS}{\epsilon,0} \circ \binom{\VAS}{1,\epsilon})=\N \times \N\) by definition. This is due to the composition being only defined on the remaining, i.e. non-fixed coordinates.
\end{example}

We can now state the main theorem of this section.

\begin{restatable}{theorem}{TheoremEquivalenceRegExPVAS}
A relation \(\vectSet{X} \subseteq \N^{d'} \times \N^{d''}\) is a PVASS section iff \(\vectSet{X}=\Rel(\Expression)\) for some \(\Expression\). \label{TheoremEquivalenceRegExPVAS}
\end{restatable}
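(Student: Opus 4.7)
The plan is to prove the equivalence by establishing both directions.

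For the $(\Leftarrow)$ direction, I would show closure of the class of PVASS sections under the three grammar operations, since VASS sections are PVASS sections by taking $g(e) \equiv 0$. Closure under union is obtained by taking the disjoint union of the two underlying PVASS with a fresh initial state that branches nondeterministically into the two original initial states, padding dimensions with low-priority dummy coordinates when necessary. Closure under composition $\vectSet{X}_1 \circ \vectSet{X}_2$ is obtained by cascading the underlying PVASS $\VAS_1, \VAS_2$: link the final state $q_1$ of the first to the initial state $p_2$ of the second through a "glue" gadget that first subtracts $\vect{b}_{t,1}$ from the formerly-fixed coordinates, zero-tests them to certify these coordinates were exactly $\vect{b}_{t,1}$ on exiting $\VAS_1$, then adds $\vect{b}_{s,2}$ to set up $\VAS_2$'s initial configuration; the middle $d_{mid}$ coordinates are passed through unchanged. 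Kleene star is similar but loops $q$ back to $p$.

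For the harder $(\Rightarrow)$ direction, I would induct on the maximum zero-test priority $k = \max_e g(e)$ of the PVASS $\VAS$. In the base case $k = 0$, $\VAS$ is a VASS and its section is a VASS section, appearing as a leaf $\vectSet{Y}$ of the grammar. For the inductive step, let $\VAS'$ be $\VAS$ with all level-$k$ transitions deleted; $\VAS'$ has maximum level strictly less than $k$. By Remark~\ref{RemarkStates}, whenever a level-$k$ transition fires its first $k$ coordinates are $0$ both before and after, so every run of $\VAS$ decomposes uniquely as $\rho_0 \cdot e_1 \cdot \rho_1 \cdots e_m \cdot \rho_m$, where each $e_i$ is a level-$k$ transition and each $\rho_i$ is a run of $\VAS'$ whose first $k$ coordinates are $0$ at every endpoint adjacent to some $e_i$.

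To assemble this decomposition into a grammar expression, I would build a finite skeleton automaton whose nodes are $\{p, q\} \cup \{\source(e), \target(e) : g(e) = k\}$, with edges labeled either by individual level-$k$ transitions (each a one-step VASS section between its endpoints) or by the summary section $\vectSet{X}_{s,t}$ of $\VAS'$ from $s$ to $t$ with the first $k$ coordinates fixed to $0$ at both ends. By the induction hypothesis every such $\vectSet{X}_{s,t}$ is expressible in the grammar. Kleene's theorem then converts this skeleton into a regex over these edge labels representing the $(p,q)$ reachability. The grammar's restriction that $\ast$ applies only when input and output dimensions agree is automatically respected: in Kleene's construction star is applied only to $R(v, v)$ at intermediate nodes $v$, where input and output both have the $d - k$ non-fixed coordinates, and $R(p, p)$, $R(q, q)$ are never starred (so their potentially different dimensions $d'$ and $d''$ never trigger the constraint).

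The step I expect to be most finicky is the $(\Leftarrow)$ direction's composition/star gadget, because combining two PVASS of different dimensions forces one to reconcile their total orders on counters into a single one: the auxiliary glue counters and the non-shared counters of each subsystem must sit at priorities that do not interfere with the original zero tests, and the currently inactive subsystem's counters must be kept at $0$ during the other's execution. In contrast, the forward direction is conceptually clean once the decomposition is set up, though the asymmetric boundary treatment at $\rho_0$ and $\rho_m$, which lack the two-sided $0$-boundary enjoyed by the middle $\rho_i$, will require separate bookkeeping when choosing which sections serve as the "entry" and "exit" edges of the skeleton automaton.
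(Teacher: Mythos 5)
Your overall plan is sound and its skeleton matches the paper's: you induct on $k = \max_e g(e)$ in the $(\Rightarrow)$ direction and decompose runs at the level-$k$ zero-test boundaries, and you build glue gadgets for $(\Leftarrow)$. But the two proofs diverge in the mechanisms and the paper's two extra normalization moves dissolve precisely the parts you flag as finicky, so it is worth spelling out the comparison.

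For $(\Rightarrow)$, the paper first reduces to PVAS \emph{without} states (via Remark~\ref{RemarkStates}), so there is no skeleton automaton and no Kleene elimination: all configurations whose first $k$ coordinates are $0$ are lumped into a single ``super-state,'' all level-$k$ edges are collected into one auxiliary VASS section $\vectSet{T}_{Z_k}$, and the whole inductive step is the single explicit formula $\vectSet{T}_{k-1,\vect{b}_s, \vect{b}_t} \cup \vectSet{T}_{k-1,\vect{b}_s, 0^k} \circ ( \vectSet{T}_{k-1,0^k, 0^k} \cup \vectSet{T}_{Z_k} )^{\ast} \circ \vectSet{T}_{k-1, 0^k, \vect{b}_t}$, i.e.\ your skeleton collapses to a four-node path with one self-loop. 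Your Kleene-elimination version can be made to work, and you correctly identify the two invariants it must maintain (only $R(v,v)$ at interior nodes with matching dimensions get starred; $p,q$ are kept as pure source and pure sink), but it is strictly more bookkeeping for the same induction, and you would additionally have to check that the resulting expression still satisfies the ``free dimensions never decrease inside a star'' property that the paper later exploits in Corollary~\ref{CorollaryStarOnMonotone}. For $(\Leftarrow)$, the paper sidesteps your worry about ``reconciling the total orders'' entirely by first normalizing every leaf section to use the same dimension $d$ and fixed vectors $\vect{b}_s = 0^{k}$, $\vect{b}_t = 0^{l}$; then the composition gadget is a \emph{single} zero-test edge with no subtraction or addition, and there is no disjoint-union of counter sets and hence no inactive-subsystem problem. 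You should adopt that normalization rather than fight the glue gadget. Finally, one concrete gap: your star construction ``loops $q$ back to $p$,'' which realizes $\vectSet{T}^{+}$, not $\vectSet{T}^{\ast}$ — the empty iteration (reflexivity on the non-fixed coordinates) is missing. The paper fixes this by adding a fresh state $q_{in}$ with zero-cost zero-testing edges $q_{in}\to p$ and $q\to q_{in}$ and taking the section from $q_{in}$ to $p$, so that the length-one run $q_{in}\to p$ supplies the identity; you need an analogous device.
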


\begin{proof}
``\(\Rightarrow\)'': Let \(\vectSet{X}=\binom{\VAS}{\vect{b}_s, \vect{b}_t} \subseteq \N^{d'} \times \N^{d''}\) for a \(d\)-dimensional PVAS \(\VAS\) \emph{without states}, since they produce the same class of sections as in Remark \ref{RemarkStates}. We will prove by induction on \(k:=\max_{e \in E} g(e)\), i.e. the maximal zero-tested counter, that every PVAS section \(\vectSet{X}_k \subseteq \N^{d'} \times \N^{d''}\) has an equivalent expression \(\Expression_k\) in the grammar. In the base case \(k=0\), \(\VAS\) is actually a VAS, and hence \(\vectSet{Y}:=\vectSet{X}_k\) is an equivalent expression in the grammar.

Induction step: \(k-1 \to k\): For \(j \in \{k-1, k\}\) let \(E_j:=\{e \in E \mid g(e) \leq j\}\), in particular \(Z_k:=E_{k} \setminus E_{k-1}\) are the edges with \(g(e)=k\), i.e. testing counters \(1,\dots,k\). Let \(\VAS_j\) be the PVASS with edges \(E_j\) and the same labels \(f(e)\) and \(g(e)\) as \(\VAS\). By induction, for \(\VAS_{k-1}\) and any vectors \(\vect{b}_s, \vect{b}_t\) there exists \(\Expression_{k-1, \vect{b}_s, \vect{b}_t}\) with \(\Rel(\Expression_{k-1, \vect{b}_s, \vect{b}_t})=\binom{\VAS_{k-1}}{\vect{b}_s, \vect{b}_t}\).

Importantly, the semantics \(\to_{e}\) of a single action \(e\in E\), even if \(e\) performs a zero test, can be defined by a VASS. In particular for \(Z_k\) we define the following \(d\)-dimensional VASS \(\VAS_{Z_k}\) with \(|Z_k|+2\) states \(Q=\{q_{in}, q_{fin}\} \cup Z_k\) and \(2|Z_k|\) actions: For every \(e \in Z_k\), let \(e'=(q_{in}, e)\) with label \(f(e')=f(e)\) and \(e''=(e,q_{fin})\) with label \(f(e'')=\vect{0}\). Intuitively, we non-deterministically choose an \(e \in Z_k\) and execute its action, afterwards moving to \(q_{fin}\). We do not perform the zero test, instead this will be done using the VASS section. Namely we define \(\Expression_{Z_k}:=\binom{\VAS_{Z_k}, q_{in}, q_{fin}}{0^k, 0^k}\), i.e. we require the first k counters to be \(0\) at the start and end. That their values will still be \(0^k\) also at the end follows by the assumption in Remark \ref{RemarkStates}, that zero tests do not change the counters they are testing. Then we define

\(\Expression_{k, \vect{b}_s, \vect{b}_t}:=\Expression_{k-1,\vect{b}_s, \vect{b}_t} \cup \Expression_{k-1,\vect{b}_s, 0^k} \circ \left ( \Expression_{k-1,0^k, 0^k} \cup \Expression_{Z_k} \right)^{\ast} \circ \Expression_{k-1, 0^k, \vect{b}_t}.\)

Intuitively, the expression says the following: Either the zero testing actions in \(Z_k\) are never used, or we move from \(\vect{b}_s\) to a configuration with the first \(k\) counters fixed to \(0\), then repeatedly either move to another configuration with those counters \(0\) without using \(Z_k\), or we can use \(Z_k\). The computation ends using \(\VAS_{k-1}\) and reaching the given target \(\vect{b}_t\).

Well-definedness:  We have to prove that in this expression \(\Expression_k\) the operations \(\circ, \cup, \ast\) are only used on matching dimensions. This follows since our specified targets and sources coincide. For example in the union \(\Expression_{k-1, 0^k, 0^k} \cup \Expression_{Z_k}\), both parts fix \(0^k, 0^k\) as required. 

Correctness: It is clear that \(\Rel(\Expression_{k, \vect{b}_s, \vect{b}_t}) \subseteq \binom{\VAS_k}{\vect{b}_s, \vect{b}_t}\), since the expression describes a special form of runs from \(\vect{b}_s\) to \(\vect{b}_t\). For the other direction, let \((\vect{c}_0, \dots, \vect{c}_r)\) be a run of \(\VAS_k\) such that \(\vect{c}_0 \in \{\vect{b}_s\} \times \N^{d-d'}\) and \(\vect{c}_r \in \{\vect{b}_t\} \times \N^{d-d''}\). We have to show that \((\vect{c}_0, \vect{c}_r) \in \Rel(\Expression_{k, \vect{b}_s, \vect{b}_t})\). Case 1: The run does not use actions in \(Z_k\). Then the run shows membership in \(\binom{\VAS_{k-1}}{\vect{b}_s, \vect{b}_t} \subseteq \Rel(\Expression_{k-1, \vect{b}_s, \vect{b}_t}) \subseteq \Rel(\Expression_{k, \vect{b}_s, \vect{b}_t})\).

Case 2: The run does use \(Z_k\). Let \(\pi_k \colon \N^d \to \N^{d-k}\) be the projection to the last \(d-k\) coordinates, i.e. it removes the anyways fixed coordinates. Let \(i_1, \dots, i_s\) be the indices such that \(\vect{c}_{i_j} \to \vect{c}_{i_j+1}\) uses an action \(\vect{a}_j \in Z_k\), i.e. \((\pi_k(\vect{c}_{i_j}), \pi_k(\vect{c}_{i_j+1})) \in \Rel(\Expression_{Z_k})\). Then the part of the run \((\vect{c}_{i_j+1}, \dots, \vect{c}_{i_{j+1}})\) does not use any \(Z_k\) transitions. Hence \((\pi_k(\vect{c}_{i_j+1}), \pi_k(\vect{c}_{i_{j+1}})) \in \binom{\VAS_{k-1}}{0^k, 0^k} \subseteq \Rel(\Expression_{k-1, 0^k, 0^k})\) for all \(j \in \{1,\dots,s\}\). Hence we already obtain \((\pi_k(\vect{c}_{i_1}), \pi_k(\vect{c}_{i_s+1})) \in \Rel((\Expression_{k-1, 0^k, 0^k} \cup \Expression_{Z_k})^{\ast})\). Now similar to \(\pi_k\), let \(\pi_{d-d'}, \pi_{d-d''}\) be the projections removing the first \(d-d', d-d''\) coordinates. Since \((\vect{c}_0, \dots, \vect{c}_{i_1})\) does not use \(Z_k\) and \(\vect{c}_0 \in \{\vect{b}_s\} \times \N^{d-d'}\), we obtain \((\pi_{d-d'}(\vect{c}_0), \pi_k(\vect{c}_{i_1})) \in \binom{\VAS_{k-1}}{\vect{b}_s, 0^k} \subseteq \Rel(\Expression_{k-1, \vect{b}_s, 0^k})\) and similarly \((\pi_k(\vect{c}_{i_s+1}), \pi_{d-d''}(\vect{c}_r)) \in \binom{\VAS_{k-1}}{0^k, \vect{b}_t} \subseteq \Rel(\Expression_{k-1, 0^k, \vect{b}_t})\). Altogether we obtain \((\pi_{d-d'}(\vect{c}_0), \pi_{d-d''}(\vect{c}_r)) \in \Rel(\Expression_{k, \vect{b}_s, \vect{b}_t})\).

``\(\Leftarrow\)'': This follows by structural induction. The construction follows the standard conversion RegEx to \(\varepsilon\)-NFA, while adding some obvious zero tests.

As our definition of \(\Rel(\vectSet{Y})\) is representation independent, we choose to represent every \(\Rel(\vectSet{Y})\) using VASS of the same dimension \(d\), and with \(\vect{b}_s=0^{k}\) and \(\vect{b}_t=0^{l}\) for some \(k,l \in \N\). That every VASS section has a representation with \(\vect{b}_s=0^k\) and \(\vect{b}_t=0^l\) follows since one can simply add a new initial and final state \(q_{in}, q_{fin}\), and add \(\vect{b}_s\) when leaving \(q_{in}\) respectively subtract \(\vect{b}_t\) when entering \(q_{fin}\). To guarantee that all VASS have the same dimension \(d\), add unused counters. We will prove by induction that every subexpression \(\Expression'\) of the given starting expression \(\Expression\) has an equivalent PVASS section, where the PVASS has dimension \(d\) and \(\vect{b}_s=0^{d-in(\Expression)}, \vect{b}_t=0^{d-out(\Expression)}\). In the base case \(\Expression=\vectSet{Y}\) we have required this above.

\(\Expression \cup \Expression'\): By induction hypothesis, we can write \(\Rel(\Expression)=\binom{\VAS, p,q}{0^{d'},0^{d''}}\) and \(\Rel(\Expression')=\binom{\VAS', p',q'}{0^{d'},0^{d''}}\) using \(d\) dimensional PVASS \(\VAS, \VAS'\). That both sections use the same \(d'\) and \(d''\) follows by the restriction on expressions. Our new PVASS simply has a new initial and final state, and performs a non-deterministic choice whether to move to \(p\) and simulate \(\VAS\) or move to state \(p'\) and simulate \(\VAS'\).

Formally, write \(\VAS=(Q,E)\) and \(\VAS=(Q',E')\). We define the new PVASS \(\VAS''\) as \((Q \cup Q' \cup \{q_{in}, q_{out}\}, E \cup E' \cup \{(q_{in}, p), (q_{in}, p'), (q, q_{out}), (q', q_{out})\}\). All copied edges keep their labels, the four new edges get the label \(g(e)=0\), i.e. they do not zero test, and \(f(e)=0^d\), i.e. they do not change the counters. It is easy to see that \(\Rel(\Expression \cup \Expression')=\binom{\VAS'', q_{in}, q_{out}}{0^{d'}, 0^{d''}}\).

\(\Expression \circ \Expression'\): By induction hypothesis, we can write \(\Expression=\binom{\VAS, p,q}{0^{d-d'},0^{d-d_{mid}}}\) and \(\Expression'=\binom{\VAS', p',q'}{0^{d-d_{mid}},0^{d-d''}}\) using \(d\) dimensional PVASS \(\VAS, \VAS'\). Our new PVASS first simulates \(\VAS\), then checks that the first \(d_{mid}\) counters are \(0\), before simulating \(\VAS'\). Checking the intermediate configuration will be done using a zero test.

Formally, write \(\VAS=(Q,E)\) and \(\VAS'=(Q',E')\). We define the PVASS \(\VAS''\) as \((Q \cup Q', E \cup E' \cup \{(q,p')\}\), where prior edges keep their labels, and \(f(q,p')=0^d\) and \(g(q,p')=d-d_{mid}\), i.e. we zero test the first \(d-d_{mid}\) counters. It is easy to see that \(\Rel(\Expression \circ \Expression')=\binom{\VAS'',p,q'}{0^{d-d'}, 0^{d-d''}}\).

\(\Expression^{\ast}\): By induction hypothesis, we can write \(\Expression=\binom{\VAS, p,q}{0^{d-d'},0^{d-d'}}\), where both vectors have the same number of fixed coordinates by the restriction on the grammar. We now simply add a new initial state and an edge from \(q\) to this new initial state which performs zero tests on the first \(d-d'\) coordinates.

Formally, write \(\VAS=(Q,E)\) and define \(\VAS'=(Q \cup \{q_{in}\}, E \cup \{(q_{in}, p), (q,q_{in})\})\), where edges \(e\in E\) keep their labels, and \(f(q_{in}, p)=f(q,q_{in})=0^d\), i.e. counters are not changed, and \(g(q_{in},p)=g(q,q_{in})=d-d'\), i.e. the first \(d-d'\) counters are zero tested. It is easy to see that \(\Rel(\Expression^{\ast})=\binom{\VAS',q_{in}, p}{0^{d-d'},0^{d-d'}}\).
\end{proof}

The ``\(\Rightarrow\)'' direction breaks down for a Minsky machine. Namely for \(k=2\), one subexpression is \((\binom{\VAS_1}{0^2,0^2} \cup \Expression_{Z_2})^{\ast}\). The parts we take the union of do not have the same dimension, as the first coordinate should be free in \(\Expression_{Z_2}\) for a Minsky machine.

In future sections we will require expressions where \(\ast\) is only used on relations \(\vectSet{X}\) which are monotone. Surprisingly, we can without loss of generality require this. Before we prove this, let us first provide an example of a valid expression in the grammar where this fails.

\begin{example}
Let \(\VAS\) be the PVAS of dimension \(2\) without any transitions. Consider the expression \((\binom{\VAS}{\epsilon, 0^1} \circ \binom{\VAS}{0^1,\epsilon})^{\ast}\). The expression below the \(\ast\) says that you start with any configuration, fix the first counter to \(0\) and end with any configuration. Since the PVASS \(\VAS\) does not have any transitions, in order for the composition to be possible, you have to have already started with the first counter equal to \(0\), and also end with such a configuration.

Hence the better expression would be \((\binom{\VAS}{0^1, 0^1} \circ \binom{\VAS}{0^1,0^1})^{\ast}\). This expression fulfills the property that if \(\Expression'\) is a subexpression of \(\Expression^{\ast}\), then \(in(\Expression') \geq in(\Expression)\) and \(out(\Expression') \geq out(\Expression)\), i.e. interior nodes have fewer fixed coordinates. This will suffice for monotonicity.
\end{example}

\begin{restatable}{lemma}{LemmaMonotonicity}
Let \(d' \in \N\) and \(\Expression\) expression such that every subexpression \(\Expression'\) fulfills \(in(\Expression')\geq d'\) and \(out(\Expression') \geq d'\). Then \(\Rel(\Expression)\) is monotone in the \(j\)-th last coordinate for all \(j \leq d'\). \label{LemmaMonotonicity}
\end{restatable}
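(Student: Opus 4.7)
The plan is to proceed by structural induction on $\vectSet{T}$, after first observing that the dimension hypothesis is hereditary: every sub-expression of a sub-expression of $\vectSet{T}$ is itself a sub-expression of $\vectSet{T}$, so each immediate sub-part of $\vectSet{T}$ satisfies the same bound and the induction hypothesis applies. I will fix $j \leq d'$ once and for all; the role of the dimension hypothesis is that for every sub-expression $\vectSet{T}'$ with local side-dimension $e \in \{in(\vectSet{T}'), out(\vectSet{T}')\}$ we have $e \geq d'$, so the index $e+1-j$ always addresses a coordinate strictly inside the free (non-fixed) suffix of the local vectors.

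For the base case $\vectSet{T} = \vectSet{Y} = \binom{\VAS,p,q}{\vect{b}_s,\vect{b}_t}$, I will invoke the fundamental translation-invariance of VASS reachability: since every transition is guarded only by non-negativity of counters, adding any $\vect{w} \in \N^d$ pointwise to a VASS run yields another valid VASS run. Taking $\vect{w}=\vect{e}_{d+1-j}$, which by the hypothesis addresses a coordinate strictly inside the free suffix of both $(\vect{b}_s,\vect{x})$ and $(\vect{b}_t,\vect{y})$, translates any $(\vect{x},\vect{y}) \in \Rel(\vectSet{Y})$ to the desired pumped pair without disturbing $\vect{b}_s$ or $\vect{b}_t$.

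The inductive step then splits by construction. Union is immediate, since monotonicity is preserved by union. For composition $\vectSet{T}_1 \circ \vectSet{T}_2$ with intermediate dimension $d_{mid}$, I pick a witness $\vect{y}$ for $(\vect{x},\vect{z}) \in \Rel(\vectSet{T})$, apply the induction hypothesis to both $\vectSet{T}_1$ and $\vectSet{T}_2$ to pump $\vect{x},\vect{y},\vect{z}$ by the respective unit vectors of the form $\vect{e}_{\,\cdot\,+1-j}$, and re-compose---this is precisely where the bound $d_{mid} \geq d'$ is needed, as otherwise the induction hypothesis would not speak about monotonicity on the intermediate side at all. For Kleene star $\vectSet{T}_1^{\ast}$, I decompose a pair $(\vect{x},\vect{y})$ into individual steps in $\Rel(\vectSet{T}_1)$, apply the induction hypothesis to pump each step by the same unit vector, and chain the pumped steps back together; the reflexive case $n=0$ is trivially covered by reflexivity of $\ast$.

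I do not anticipate a real conceptual obstacle---everything is a routine structural induction once the base case and the dimension bookkeeping are fixed. The one point that warrants care is the indexing convention: ``the $j$-th last coordinate of $\N^{e}$'' must refer to the same underlying counter across all intermediate dimensions, which is why monotonicity is defined counting from the end in the preliminaries. Under that convention $\vect{e}_{e+1-j}$ in dimension $e$ and $\vect{e}_{e'+1-j}$ in dimension $e' \geq e$ identify the same ambient coordinate, so the pumped vectors produced in different sub-parts glue together without friction under $\circ$ and $\ast$.
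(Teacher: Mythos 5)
Your proof is correct and follows essentially the same structural-induction argument as the paper: the base case reduces to translation-invariance of VASS runs (equivalently, monotonicity of VAS reachability relations in unfixed coordinates), union is immediate, composition threads the unit vector through the intermediate witness using $d_{mid}\geq d'$, and star reduces to iterated composition. The only cosmetic difference is that the paper packages the star case as $\vectSet{T}^{\ast}=\bigcup_n\vectSet{T}^n$ with each $\vectSet{T}^n$ monotone by the $\circ$-case, whereas you pump the chain of steps directly; these are the same argument.
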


\begin{proof}
By structural induction. In the base case, \(\vectSet{Y}=\binom{\VAS, p,q}{\vect{b}_s, \vect{b}_t}\) for a VASS \(\VAS\). Since VAS reachability relations are monotone in every coordinate, when the last \(d'\) coordinates are neither fixed in the input nor output then \(\vectSet{Y}\) is monotone in these coordinates.

\(\Expression \cup \Expression'\): Relations monotone in the \(j\)-last coordinate are clearly stable under union.

\(\Expression \circ \Expression'\): Let \(j \leq d'\), \(d_1=in(\Expression), d_2=out(\Expression)\), \(d_3=out(\Expression')\). Since the subexpression \(\Expression\) fulfills \(in(\Expression)\geq d'\) and \(out(\Expression) \geq d'\) by assumption, \(\Rel(\Expression)\) is by induction monotone in the \(j\)-th last coordinate. Same for \(\Rel(\Expression')\). Let \((\vect{x}_1, \vect{x}_2) \in \Rel(\Expression)\) and \((\vect{x}_2, \vect{x}_3) \in \Rel(\Expression')\). By monotonicity in the \(j\)-th last coordinate we obtain \((\vect{x}_1+\vect{e}_{d_1+1-j}, \vect{x}_2+\vect{e}_{d_2+1-j})\in \Rel(\Expression)\) and \((\vect{x}_2+\vect{e}_{d_2+1-j}, \vect{x}_3 + \vect{e}_{d_3+1-j}) \in \Rel(\Expression')\). Hence \((\vect{x}_1+\vect{e}_{d_1+1-j}, \vect{x}_3 + \vect{e}_{d_3+1-j}) \in \Rel(\Expression \circ \Expression')\).

\(\Expression^{\ast}\): Let \(j \leq d'\). By repeatedly applying the case of \(\circ\), we obtain for all \(n \in \N\) that \(\Expression^n\) is monotone in the \(j\)-th last coordinate. Again by stability of such relations under union, the relation \(\Expression^{\ast}=\bigcup_{n\in \N} \Expression^n\) is monotone in the \(j\)-th last coordinate.
\end{proof}

\begin{corollary}
For every expression \(\Expression\), there is another expression \(\Expression'\) with \(\Rel(\Expression)=\Rel(\Expression')\), where \(\ast\) is only applied on monotone relations. \label{CorollaryStarOnMonotone}
\end{corollary}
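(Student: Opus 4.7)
The plan is to pass through the PVAS-section characterization of Theorem \ref{TheoremEquivalenceRegExPVAS}. Given $\vectSet{T}$, I would first apply the ``$\Leftarrow$'' direction to obtain a PVAS section $\vectSet{X}=\Rel(\vectSet{T})$ realized by some PVAS $\VAS$ of dimension $d$. Then I would apply the ``$\Rightarrow$'' direction to $\vectSet{X}$, which constructs an expression $\vectSet{T}'$ with $\Rel(\vectSet{T}')=\vectSet{X}=\Rel(\vectSet{T})$ via the recursive formula
\begin{equation*}
\vectSet{T}_{k, \vect{b}_s, \vect{b}_t}=\vectSet{T}_{k-1,\vect{b}_s, \vect{b}_t} \cup \vectSet{T}_{k-1,\vect{b}_s, 0^k} \circ \bigl(\vectSet{T}_{k-1,0^k, 0^k} \cup \vectSet{T}_{Z_k}\bigr)^{\ast} \circ \vectSet{T}_{k-1, 0^k, \vect{b}_t}.
\end{equation*}
What remains is to verify that this particular $\vectSet{T}'$ applies $\ast$ only to monotone relations; no extra rewriting is needed.

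Every $\ast$ appearing in the fully expanded $\vectSet{T}'$ takes the shape $\bigl(\vectSet{T}_{k-1,0^k, 0^k} \cup \vectSet{T}_{Z_k}\bigr)^{\ast}$ for some $1 \leq k \leq \max_e g(e)$, and the starred expression has input and output dimension exactly $d-k$. The key auxiliary invariant, which I would prove by induction on the recursive unfolding, is: every subexpression of the fully expanded $\vectSet{T}_{j, \vect{b}_s, \vect{b}_t}$ has input dimension at least $d-\max(|\vect{b}_s|,j)$ and output dimension at least $d-\max(|\vect{b}_t|,j)$. Intuitively, the formula at level $j$ only introduces fresh fixings of length $j$, and these (together with the inherited $\vect{b}_s, \vect{b}_t$) bound the length of any fixing that can appear deeper in the expansion.

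Applying this invariant to $\vectSet{T}_{k-1, 0^k, 0^k}$ with $j=k-1$ and $|\vect{b}_s|=|\vect{b}_t|=k$, every subexpression of $\vectSet{T}_{k-1,0^k, 0^k}$ has input and output dimension at least $d-\max(k,k-1)=d-k$. Since $\vectSet{T}_{Z_k}$ is a leaf VASS section of input/output dimension $d-k$, the union $\vectSet{T}_{k-1,0^k, 0^k} \cup \vectSet{T}_{Z_k}$ satisfies the hypothesis of Lemma \ref{LemmaMonotonicity} with $d'=d-k$. The lemma then yields that $\Rel\bigl(\vectSet{T}_{k-1,0^k, 0^k} \cup \vectSet{T}_{Z_k}\bigr)$ is monotone in all $d-k$ of its coordinates, hence monotone as a relation, which is the property demanded of every $\ast$ in $\vectSet{T}'$.

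The main obstacle is the dimensional bookkeeping underlying the auxiliary invariant: at each level of the nested recursion one must track exactly which fixings are inherited from the outer call and which are freshly introduced, and verify that the length of any fixing appearing inside a subexpression is bounded by $\max(|\vect{b}_s|, |\vect{b}_t|, j)$. Once this invariant is in place, the application of Lemma \ref{LemmaMonotonicity} is immediate, and no further manipulation of the grammar is necessary.
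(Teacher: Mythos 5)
Your proposal is correct and takes essentially the same route as the paper: apply the $\Leftarrow$ direction of Theorem~\ref{TheoremEquivalenceRegExPVAS} to obtain a PVAS section, then the $\Rightarrow$ direction to rebuild an expression, and verify that the hypothesis of Lemma~\ref{LemmaMonotonicity} holds for every starred subexpression. The paper merely asserts the dimension bound on subexpressions without proof, so your explicit auxiliary invariant supplies the detail the paper leaves implicit.
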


\begin{proof}
Let \(\Expression\) be an expression. Apply first \(\Leftarrow\) and then \(\Rightarrow\) of Theorem \ref{TheoremEquivalenceRegExPVAS}. The constructed expression fulfills for every subexpression \(\Expression^{\ast}\) and subsubexpression \(\Expression'\) of \(\Expression^{\ast}\) that \(in(\Expression') \geq in(\Expression)\) and \(out(\Expression') \geq out(\Expression)\). Then by Lemma \ref{LemmaMonotonicity}, \(\Rel(\Expression)\) is monotone in every one of the last \(in(\Expression)=out(\Expression)\) coordinates, i.e. in every coordinate. Hence \(\Rel(\Expression)\) is monotone.
\end{proof}

\section{A Well-Quasi-Order on Runs of Priority VAS} \label{SectionWQO}


Starting from this section, we also use \(\ast\) for repeated concatenation, not only transitive closure \(\ast\) of repeated composition. To distinguish them, we write \(\ast_{concat}\) for concatenation \(\ast\).

A partial order \((\vectSet{X}, \leq)\) is a reflexive, transitive and antisymmetric relation \(\leq \subseteq \vectSet{X} \times \vectSet{X}\). A set \(\vectSet{U} \subseteq \vectSet{X}\) is upward-closed if for all \(\vect{x}\in \vectSet{U}\) and all \(\vect{x}' \geq \vect{x}\) we have \(\vect{x}' \in \vectSet{U}\). Every subset \(\vectSet{X}' \subseteq \vectSet{X}\) is contained in a unique minimal upward-closed set \(\UpwardClosure{\vectSet{X}'}:=\{\vect{x}' \in \vectSet{X}\mid \exists \vect{x} \in \vectSet{X}' \colon \vect{x} \leq \vect{x}'\}\). A basis of an upward-closed set \(\vectSet{U}\) is a subset \(\vectSet{F} \subseteq \vectSet{U}\) such that \(\UpwardClosure{\vectSet{F}}=\vectSet{U}\).

A partial order is a well-quasi-order if every upward closed set \(\vectSet{U} \subseteq \vectSet{X}\) has a finite basis \(\vectSet{F}\). Or equivalently, for every infinite sequence \(\vect{x}_1, \vect{x}_2, \dots \subseteq \vectSet{X}\) there are indices \(i < j\) with \(\vect{x}_i \leq \vect{x}_j\), or equivalently there are indices \((i_m)_{m \in \N}\) such that \(\vect{x}_{i_m} \leq \vect{x}_{i_k}\) for all \(m \leq k\).

Most well-quasi-orders, in particular the ones we will need, are constructed from the following basic ordering by applying standard closure properties stated afterwards:

\begin{example}
Let \(\vectSet{F}\) be finite. Then the equality relation \(=\) is a well-quasi-order on \(\vectSet{F}\).
\end{example}

\begin{lemma}
Let \((\vectSet{X}_1, \leq_1), (\vectSet{X}_2, \leq_2)\) be wqo's. Then \(((\vectSet{X}_1 \cup \vectSet{X}_2, \leq_1 \cup \leq_2)\) is a wqo.\label{LemmaWQOUnion}
\end{lemma}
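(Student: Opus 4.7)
The plan is to verify the lemma via the infinite-sequence characterization of a wqo given just before the statement, namely that $(\vectSet{X}, \leq)$ is a wqo iff every infinite sequence in $\vectSet{X}$ contains an increasing pair. (I interpret the statement as making $\leq_1 \cup \leq_2$ a partial order on $\vectSet{X}_1 \cup \vectSet{X}_2$ where elements from different $\vectSet{X}_i$ are incomparable unless they already belong to both; reflexivity, transitivity, and antisymmetry transfer immediately from the two components, so I would mention this in a single sentence and move on.)

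For the wqo property itself, I would start from an arbitrary infinite sequence $\vect{x}_1, \vect{x}_2, \dots$ in $\vectSet{X}_1 \cup \vectSet{X}_2$. By the pigeonhole principle, at least one of the sets $\{n \in \N : \vect{x}_n \in \vectSet{X}_1\}$ or $\{n \in \N : \vect{x}_n \in \vectSet{X}_2\}$ is infinite. Without loss of generality suppose the first one is infinite, and extract the subsequence $(\vect{x}_{n_k})_{k \in \N}$ of all elements lying in $\vectSet{X}_1$. This is an infinite sequence in the wqo $(\vectSet{X}_1, \leq_1)$, so there exist indices $k < k'$ with $\vect{x}_{n_k} \leq_1 \vect{x}_{n_{k'}}$. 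Since $\leq_1 \subseteq \leq_1 \cup \leq_2$, and since $n_k < n_{k'}$, this produces the required increasing pair in the original sequence.

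There is no real obstacle: the proof is a direct pigeonhole argument, and the algebraic verification that $\leq_1 \cup \leq_2$ is a partial order is routine because the two relations live on (essentially) disjoint ground sets. The only small thing worth flagging is that antisymmetry uses the fact that a pair $(\vect{x}, \vect{y})$ can lie in $\leq_1 \cup \leq_2$ in two ways only if both $\vect{x}$ and $\vect{y}$ lie in $\vectSet{X}_1 \cap \vectSet{X}_2$, in which case the two orders must agree on that intersection for the statement to make sense; in practice the lemma will be applied to disjoint $\vectSet{X}_1, \vectSet{X}_2$, so I would either add a standing disjointness assumption or formally replace the union by a disjoint union in the statement.
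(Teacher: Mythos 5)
Your proof is correct and is the canonical pigeonhole argument for this fact; the paper states the lemma without proof (treating it as a standard closure property of wqo's), so there is nothing to compare against. Your observation about disjointness is a legitimate point of pedantry: with overlapping ground sets, $\leq_1\cup\leq_2$ can fail transitivity and antisymmetry, so the lemma is really about quasi-orders or should be read with the ground sets implicitly disjoint — which is how the paper applies it (e.g.\ the tagged run sets $\Omega(\vectSet{T}_1)$ and $\Omega(\vectSet{T}_2)$ are disjoint by construction of the tags).
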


\begin{lemma}[Dickson's Lemma]
\((\vectSet{X}_1, \leq_1), (\vectSet{X}_2, \leq_2)\) wqo's \(\Rightarrow\) \((\vectSet{X}_1 \times \vectSet{X}_2, \leq_1 \times \leq_2)\) wqo. \label{LemmaDickson}
\end{lemma}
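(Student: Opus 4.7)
The plan is to use the equivalent characterization of a wqo in terms of infinite ascending subsequences, which was explicitly stated in the paragraph defining wqo's. Concretely, I would work from an arbitrary infinite sequence in the product and extract, in two stages, a pair of indices witnessing the ordering.

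First I would verify the routine point that $\leq_1 \times \leq_2$, defined by $(x, y) \leq (x', y') \iff x \leq_1 x' \text{ and } y \leq_2 y'$, is itself a partial order on $\vectSet{X}_1 \times \vectSet{X}_2$; reflexivity, transitivity and antisymmetry each reduce coordinatewise to the same property of $\leq_1$ and $\leq_2$. Then I would take an arbitrary infinite sequence $(x_1, y_1), (x_2, y_2), \ldots$ in $\vectSet{X}_1 \times \vectSet{X}_2$ and aim to produce indices $m < n$ with $(x_m, y_m) \leq (x_n, y_n)$.

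Next I would apply the wqo property of $(\vectSet{X}_1, \leq_1)$ to the sequence $(x_i)_{i \in \N}$ in the strong form: there exist indices $i_1 < i_2 < \ldots$ such that $x_{i_1} \leq_1 x_{i_2} \leq_1 \ldots$. Restricting to this subsequence, I now consider $(y_{i_k})_{k \in \N}$, which is an infinite sequence in $\vectSet{X}_2$. Applying the weaker form of the wqo property of $(\vectSet{X}_2, \leq_2)$ yields indices $k < l$ with $y_{i_k} \leq_2 y_{i_l}$. Since also $x_{i_k} \leq_1 x_{i_l}$ by construction, the pair $(i_k, i_l)$ witnesses $(x_{i_k}, y_{i_k}) \leq (x_{i_l}, y_{i_l})$, proving the claim.

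There is no real obstacle here; the only subtlety is that one must use the stronger subsequence-extraction form of wqo for the \emph{first} coordinate (so that all later comparisons on the first coordinate remain valid) and only the weaker pair-existence form for the second. Both forms are already given in the paper's definition of wqo, so they can be invoked directly.
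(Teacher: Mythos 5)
Your proof is correct and is the standard textbook argument for Dickson's Lemma; the paper itself states the lemma without proof (citing it as a known closure property), so there is nothing to diverge from. The key move — using the \emph{strong} subsequence-extraction form of wqo on the first coordinate so that the extracted indices remain pairwise comparable there, then applying only the \emph{weak} pair-existence form on the second coordinate restricted to that subsequence — is exactly right, and both forms are available from the equivalences the paper lists in its definition of wqo.
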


\begin{lemma}[Higman's Lemma]
Let \((\vectSet{\Sigma}, \leq)\) be a well-quasi-order. Then \((\Sigma^{\ast_{concat}}, \leq^{\ast_{concat}})\) is a well-quasi-order, where \(\leq^{\ast_{concat}}\) is the scattered subword ordering defined via \(w=(\vect{x}_1, \dots, \vect{x}_r) \leq w'=(\vect{y}_1, \dots, \vect{y}_s) \iff\) there exists an injective order preserving function \(f \colon \{1,\dots, r\} \to \{1,\dots, s\}\) such that \(\vect{x}_i \leq \vect{y}_{f(i)}\) for all \(i \in \{1,\dots, r\}\). \label{LemmaHigman}
\end{lemma}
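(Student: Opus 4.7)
The plan is to prove Higman's Lemma by the standard Nash-Williams \emph{minimal bad sequence} argument. A sequence $(w_i)_{i \in \N}$ in $\Sigma^{\ast_{concat}}$ is called \emph{bad} if there do not exist indices $i < j$ with $w_i \leq^{\ast_{concat}} w_j$. The goal is to derive a contradiction from the existence of any bad sequence; equivalently, I want to show that every infinite sequence of words contains a good pair.

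First I would assume, for contradiction, that bad sequences exist, and then construct a \emph{minimal} bad sequence $(w_i)_{i \in \N}$ inductively: having chosen $w_1, \dots, w_{n-1}$ so that the prefix extends to some bad infinite sequence, pick $w_n$ to be a word of minimum length (in the word-length sense over $\Sigma$) such that $w_1, \dots, w_n$ still extends to a bad infinite sequence. Such a choice exists by well-ordering of $\N$ plus dependent choice. Note every $w_i$ must be nonempty, since the empty word is below everything under $\leq^{\ast_{concat}}$ and thus would immediately provide a good pair. Hence I can write $w_i = a_i \cdot v_i$ with $a_i \in \Sigma$ and $v_i \in \Sigma^{\ast_{concat}}$, where $v_i$ is strictly shorter than $w_i$.

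Next I would apply the wqo hypothesis on $\Sigma$ to the sequence $(a_i)_{i \in \N}$, extracting an infinite subsequence of indices $i_1 < i_2 < \cdots$ with $a_{i_1} \leq a_{i_2} \leq \cdots$ (using the equivalent formulation of wqo as admitting an infinite weakly increasing subsequence). Then I would form the hybrid sequence
\[
w_1, \; w_2, \; \dots, \; w_{i_1 - 1}, \; v_{i_1}, \; v_{i_2}, \; v_{i_3}, \; \dots
\]
By minimality of $w_{i_1}$ (since $v_{i_1}$ is strictly shorter than $w_{i_1}$), this hybrid sequence cannot be bad, so it contains a good pair. I would conclude by a short case analysis on where the good pair lies. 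It cannot lie inside $w_1, \dots, w_{i_1 - 1}$, because those come from the bad sequence. It cannot lie between some $w_k$ with $k < i_1$ and some $v_{i_j}$, because $w_k \leq^{\ast_{concat}} v_{i_j} \leq^{\ast_{concat}} w_{i_j}$ would contradict badness. It cannot lie between $v_{i_j}$ and $v_{i_k}$ with $j < k$, because then concatenating the order-preserving injection witnessing $v_{i_j} \leq^{\ast_{concat}} v_{i_k}$ with the leading letter comparison $a_{i_j} \leq a_{i_k}$ yields $w_{i_j} \leq^{\ast_{concat}} w_{i_k}$, again contradicting badness.

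The only real subtlety is the construction of the minimal bad sequence, which formally requires dependent choice (or the axiom of choice), and the slightly delicate case analysis on the hybrid sequence; otherwise the argument is purely combinatorial. I would state reflexivity, transitivity and antisymmetry of $\leq^{\ast_{concat}}$ as straightforward verifications (antisymmetry following because the witnessing injections force equal lengths and then the coordinate-wise inequalities become equalities by antisymmetry of $\leq$), so that $(\Sigma^{\ast_{concat}}, \leq^{\ast_{concat}})$ is indeed a partial order, and combine this with the non-existence of bad sequences to conclude that it is a wqo.
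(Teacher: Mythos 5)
The paper states Higman's Lemma without proof, as a standard background result used to build the wqo $\Omega(\vectSet{T})$; there is nothing in the paper's text to compare against. Your proposal is the classical Nash-Williams minimal bad sequence argument, and it is correct: the construction of the minimal bad sequence (via dependent choice on shortest extendable prefixes), the observation that the empty word immediately yields a good pair, the extraction of an increasing subsequence of first letters by the wqo hypothesis on $\Sigma$, the hybrid sequence $w_1, \dots, w_{i_1-1}, v_{i_1}, v_{i_2}, \dots$ which cannot be bad by minimality of $w_{i_1}$, and the three-way case analysis on the location of the resulting good pair all go through exactly as you describe. The only minor point worth being explicit about in case (b) is that $v_{i_j} \leq^{\ast_{concat}} w_{i_j}$ holds because $v_{i_j}$ is a (suffix) subword of $w_{i_j}$ and the embedding ordering is reflexive on letters, and in case (c) that prepending the comparison $a_{i_j} \leq a_{i_k}$ to the witnessing injection for $v_{i_j} \leq^{\ast_{concat}} v_{i_k}$ and shifting indices by one still yields an order-preserving injection. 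Your verification of antisymmetry is also correct under the paper's convention that a wqo is in particular a partial order: the two witnessing injections force equal lengths, an order-preserving injection of $\{1,\dots,n\}$ into itself is the identity, and antisymmetry of $\leq$ on $\Sigma$ does the rest.
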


\subsection{Well-Quasi-Order for VAS}

In order to explain the wqo for expressions, we start by defining Jancar's wqo ordering for runs of a VAS \(\VAS\). A run is no longer viewed as a sequence of configurations, but instead as element of \(\Omega(\VAS)=\N^{\dimD} \times (\N^{\dimD} \times E)^{\ast_{concat}} \times \N^{\dimD}\), where \(E\) is the set of edges of the VAS. The first part is \(\source(\rho)\), the last part is \(\target(\rho)\) and the middle parts are the steps of the run. One can extend this to states by replacing \(\N^{\dimD}\) by \(Q \times \N^{\dimD}\) everywhere and requiring states to coincide. \(\Omega(\VAS)\) is well-quasi-ordered by Lemma \ref{LemmaDickson} and Lemma \ref{LemmaHigman}. 

This wqo is carefully engineered to ensure that the relation \(\{\dirOfRun(\rho') \mid \rho' \geq \rho\}=:\dirOfRun(\rho)+\vectSet{P}_{\rho}\) has good properties. \(\vectSet{P}_{\rho}\) is called the \emph{transformer relation} of the run. A vector \((\vect{v}, \vect{w})\) is \emph{pumpable into \(\rho\)} if \((\vect{v},\vect{w}) \in \vectSet{P}_{\rho}\). The minimal property we want \(\vectSet{P}_{\rho}\) to fulfill is closure under addition, i.e. if a vector can be pumped once, then it can be pumped arbitrarily often. To understand why exactly the above expression ensures this, consider Figure \ref{FigureExampleVASRunOrdering}.

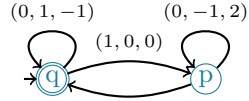
\begin{figure}[ht] 
	\centering 
	\begin{tikzpicture}[-, auto, node distance=0.5cm]
		\tikzset{every place/.append style={minimum size=0.4cm, nicebluebright}}
		\tikzset{every label/.append style={text=nicebluebright}}
		\tikzset{every transition/.style={minimum size=0.25cm}}
		\tikzset{every edge/.append style={font=\scriptsize}}
		
		\newcommand*{\distancesubx}{2.5cm}
	
		\node[place, double] (A) at (0.5,0) {q};
		\node[place] (B) at (0.5+\distancesubx, 0) {p};
		\node[white!100] (C) at (0,0) {};
		
		\path[->, thick, looseness=8] (A) edge[] node[above] {\((0,1,-1)\)} (A);
		\path[->, thick, looseness=8] (B) edge[] node[above] {\((0,-1,2)\)} (B);
		\path[->, thick, out=20, in=160] (A) edge[] node[above] {\((1,0,0)\)} (B);
		\path[->, thick, out=200, in=-20] (B) edge[] node[below] {} (A);
		\path[->, thick] (C) edge[] (A);
		
	\end{tikzpicture}
	\caption{Typical example of a non-semilinear VAS \cite{HopcroftP79}. Edges \(e\) are only labelled with their update \(f(e)\), since this is a VASS, i.e. every edge fulfills \(g(e)=0\).}
	\label{FigureExampleVASRunOrdering}
\end{figure}

If we were to replace the middle part of the Jancar ordering with \((\N^{\dimD})^{\ast}\) instead, then the run \(q(0,1,1) \to p(1,1,1)\) would be smaller than the run \(q(0,1,1) \to q(0,2,0) \to p(1,2,0)\to p(1,1,2)\). Hence pumping properties would tell us that \(p(1,1,n)\) should be reachable for every \(n \geq 1\), which is obviously wrong. One might also consider the ordering without explicitly remembering the start and end configurations, but then some vectors with negative components might be claimed to be pumpable, since Higman might insert the smaller run in the middle. Pumping negative vectors is however trivially impossible.

Above we provided one way to define the transformer relation \(\vectSet{P}_{\rho}\). As shown in \cite[Lemma 7.5]{Leroux11}, there is an important equivalent characterization. Given a configuration \(\vect{c}\), one defines the transformer relation \(\vectSet{P}_{\vect{c}}\) for this configuration via \((\vect{x},\vect{y}) \in \vectSet{P}_{\vect{c}} \iff \vect{c}+\vect{x} \to_{\VAS}^{\ast} \vect{c}+\vect{y}\). Intuitively, you utilize the existence of configuration \(\vect{c}\) to \emph{transform} \(\vect{x}\) into \(\vect{y}\). Another intuition is that \(\vect{c}\) is a \emph{capacity} which allows us to slightly enter the negative (up to \(\vect{c}\)). This relation is a generalization of ``pumping possible via self-loop on a state''. The equivalent characterization of \(\vectSet{P}_{\rho}\) is: Write \(\rho=(\vect{c}_0, \dots, \vect{c}_r)\), then \(\vectSet{P}_{\rho}=\vectSet{P}_{\vect{c}_0} \circ \dots \circ \vectSet{P}_{\vect{c}_r}\). 

Consider the example in Figure \ref{FigureExampleVASS} in the introduction. At any configuration \(\vect{c}_1\) in state \(q_1\), we have \(\vectSet{P}_{\vect{c}_1}=\{(x,y,z),(x',y,z) \mid x' \geq x\}\), i.e. one can ``transform \(x\) into a larger number'' and at a configuration \(\vect{c}_2\) in state \(q_2\), we have \(\vectSet{P}_{\vect{c}_2}=\{(x,y,z),(x',y',z) \mid x+y=x'+y', x \geq x'\}\), i.e. one can ``transform any number of \(x\) into the same number of \(y\)''. This leads any run \(\rho\) through states \(q_s \to q_1 \to q_2 \to q_t\) to be able to arbitrarily increase \(x\) and \(y\), as mentioned in the introduction. Beware though that if a VASS has a complicated nested loop structure, then \(\vectSet{P}_{\vect{c}}\) can be non-semilinear.

\subsection{Well-Quasi-Order for Expressions}

We will now define for every expression \(\Expression\) a well-quasi-ordered set of runs \(\Omega(\Expression)\). We want the different segments of the runs to be labelled with which subexpression of \(\Expression\) they belong to. Hence let \(Tag\) be a set containing unique start and end labels \(\lambda_s(\Expression')\) and \(\lambda_t(\Expression')\) for every node \(\Expression'\) in the syntax tree of \(\Expression\). The set \(\Omega(\Expression)\) will be modelled after the Jancar ordering, though configurations might now have different dimensions. It is important to understand why \((\N^d \times E)^{\ast_{concat}}\) has to be used for the Jancar ordering to lead to nice properties. The answer: It is equivalent to using \((\N^d \times E \times \N^d)^{\ast_{concat}}\): When thinking of \(\to_{\VAS}^{\ast}\) as \(\{\to_{e_1} \cup \dots \cup \to_{e_m}\}^{\ast}\), this means every letter is supposed to be a run of the expression \(\Expression\) inside the \(\ast\). For a VASS, this means \((\source, \target) \in \to_{e}\) tagged with the choice of \(e\).

\begin{definition}
If \(\vectSet{Y} \subseteq \N^{d'} \times \N^{d''}\) is a VAS section represented by VAS \(\VAS\), \(\vect{b}_s, \vect{b}_t\), let \(\pi_{in}, \pi_{out}\) be the projections projecting away fixed coordinates of in- and output. Let \(\Omega_{\vect{b}_s,\vect{b}_t}\) be the set of runs \(\rho \in \Omega(\VAS)\), whose source and target have the correct values on fixed coordinates. We define \(\Omega(\vectSet{Y})=\{(\pi_{in}(\source(\rho)), \lambda_s(\vectSet{Y}) w \lambda_t(\vectSet{Y}), \pi_{out}(\target(\rho))) \mid \rho=(\source(\rho),w,\target(\rho)) \in \Omega_{\vect{b}_s, \vect{b}_t}(\VAS)\}\). I.e. we consider runs of \(\VAS\) with source and target adhering to the fixed coordinates, add markers in the word w, and instead of storing the full source and target, we only store the non-fixed coordinates. For \(\rho \in \Omega(\vectSet{Y})\), we refer to the projected configurations as \(\source(\rho)\) and \(\target(\rho)\).

We define \(\Omega(\Expression_1 \cup \Expression_2)=\Omega(\Expression_1) \cup \Omega(\Expression_2)\), i.e. we simply take unions of the sets of runs.

We define \(\Omega(\Expression_1 \circ \Expression_2)=\{\lambda_s(\Expression_1) \rho_1 \lambda_t(\Expression_1) \lambda_s(\Expression_2) \rho_2 \lambda_t(\Expression_2) \mid \target(\rho_1)=\source(\rho_2)\} \subseteq Tag \times \Omega(\Expression_1) \times Tag^2 \times \Omega(\Expression_2) \times Tag\). I.e. we concatenate the runs if possible and use markers.

We define \(\Omega(\Expression^{\ast})=\{\lambda_s(\Expression)\rho_1\lambda_t(\Expression) \dots \lambda_s(\Expression) \rho_n \lambda_t(\Expression) \mid n \in \N, \rho_1, \dots, \rho_n \in \Omega(\Expression), \target(\rho_i)=\source(\rho_{i+1})\} \subseteq \N^{in(\Expression^{\ast})} \times (\Omega(\Expression) \cup Tag)^{\ast_{concat}} \times \N^{out(\Expression^{\ast})}\). I.e. we consider all concatenations of any length \(n\in \N\) and add tags splitting the different parts \(\rho_i\).
\end{definition}
\begin{definition}
We define a wqo \(\leq_{\Omega(\Expression)}\) on \(\Omega(\Expression)\) recursively. For VAS sections \(\vectSet{Y}\) we use the Jancar ordering, observing that fixed coordinates coincide for every run, and can hence be ignored. For the recursive definition we use Lemmas \ref{LemmaWQOUnion}, \ref{LemmaDickson} and \ref{LemmaHigman}. 
\end{definition}

Whenever we concatenate runs, we do not write the tags, because they can be inferred. Their existence is however important, as we can see for example for \(\Expression_1 \circ \Expression_2\): We have \(\rho \leq_{\Omega(\Expression_1 \circ \Expression_2)} \rho'\) if and only if \(\rho=\rho_1 \rho_2, \rho'=\rho_1' \rho_2'\) with \(\rho_1, \rho_1' \in \Omega(\Expression_1), \rho_2, \rho_2' \in \Omega(\Expression_2)\) such that \(\rho_1 \leq_{\Omega(\Expression_1)} \rho_1'\) and \(\rho_2 \leq_{\Omega(\Expression_2)} \rho_2'\). This is where the tags will become important: From the tags, we can infer how \(\rho\) is supposed to be split into \(\rho_1\) and \(\rho_2\), and similarly for \(\rho'\). Let us give a different example for why we need the tags for every subexpression. Imagine we consider the expression \(\vectSet{Y} \cup \vectSet{Y}'\), where \(\Rel(\vectSet{Y})=\N \cdot (2,1)\) and \(\Rel(\vectSet{Y}')=\N \cdot (1,2)\). If we wrote the empty run \(\rho\) as \((0, \epsilon, 0)\), i.e. did not label it, it would not be clear whether it can pump \(\N \cdot (2,1)\) or \(\N \cdot (1,2)\), this depends on which subexpression it belongs to.

\subsection{Comparison with other Well-Quasi-Orders}

In prior literature, some wqos for Priority VAS \cite{Bonnet11, Bonnet12} and even for the more general model of Grammar VAS \cite{LerouxPSS19} were introduced. In this subsection we compare our wqo to theirs.

In a grammar VAS paths are restricted to a given context-free grammar \(G\), and it is known that Priority VAS correspond to the subclass of Grammar VAS, where the grammar is thin/finite index. In a thin grammar, every non-terminal has a rank (called index) which can only decrease and for every production \(X \to YZ\), either \(\rank(Y)<\rank(X)\) or \(\rank(Z)<\rank(X)\), i.e. only one of the produced non-terminals can have the same rank.

The equivalence of PVASS with these grammars can in fact be seen via our RegEx characterization: One can implement \(\Expression^{\ast}\) via \(S \to X S\), where \(X\) implements \(\Expression\). Similarly composition \(\Expression \circ \Expression'\) can be implemented via \(S \to X X'\) where \(X\) implements \(\Expression\), and \(X'\) implements \(\Expression'\). In \cite{LerouxPSS19} a wqo on runs based on Kruskal’s tree ordering on syntax trees is defined. It can be shown that their ordering for the grammar obtained from a RegEx coincides with our ordering. In fact this is another motivation for the markers we use for splitting runs: Syntax trees naturally distinguish between being in the left or right branch of the RegEx. In \cite{LerouxPSS19} however they did not manage to show some of the pumping properties of the well-quasi-order which we require, and will be able to prove using our RegEx.

Another example are Bonnet's works \cite{Bonnet11, Bonnet12}. His well-quasi-order is based on a repeated application of Higman's Lemma, similar to our ordering. The only difference is that in \cite{Bonnet12} the finest split of any run is chosen. In our terminology, if the expression is \(\Expression^{\ast}\), where \(\Expression\) requires the first coordinate to be \(0\), then a run \(\rho=\rho_1 \rho_2\) is always split as \(\rho_1, \rho_2 \in \Omega(\Expression)\) in \cite{Bonnet12}, i.e. \(\rho \in \Omega(\Expression^2)\). While in our case also \(\rho \in \Omega(\Expression)\) is possible, this is determined by the markers. When limitting ourselves to runs with the finest split the orderings coincide.


\subsection{Geometric Preliminaries} \label{SectionGeometricPreliminaries}

In this section we repeat some definitions from the VASS literature, pertaining to the pumping properties the relations \(\vectSet{P}_{\rho}\) fulfill. Readers familiar with the notions can skip this section. For a visual representation of the geometric definitions see e.g. \cite{GuttenbergRE23}. We state definitions for sets, they apply to relations \(\vectSet{R} \subseteq \Q^{d'} \times \Q^{d''}\) by viewing them as set \(\vectSet{R} \subseteq \Q^{d'+d''}\).

\subparagraph{Cones and periodic sets.}
A set \(\vect{C} \subseteq \mathbb{Q}^{\dimD}\) is a \emph{cone} if \(\vect{0}\in \vect{C}\), \(\vect{C}+\vect{C} \subseteq \vect{C}\) and \(\Q_{>0}\vect{C} \subseteq \vect{C}\). Given a set \(\vect{F} \subseteq \mathbb{Q}^{\dimD}\), the cone generated by \(\vect{F}\) is the smallest cone containing \(\vect{F}\). 

A cone \(\vect{C}\) is \emph{definable} if it is definable in \(\FO(\mathbb{Q}, +, \geq)\).

A set \(\vect{P} \subseteq \mathbb{N}^{\dimD}\) is a \emph{periodic set} if \(\vect{P} + \vect{P} \subseteq \vect{P}\) and \(\vect{0} \in \vect{P}\). For any set \(\vect{F} \subseteq \N^{\dimD}\), the periodic set \(\vect{F}^{\ast}\) generated by \(\vect{F}\) is the smallest periodic set containing \(\vect{F}\). 
A periodic set \(\vect{P}\) is \emph{finitely generated} if \(\vect{P}=\vect{F}^{\ast}\) for some finite set \(\vect{F}\).

Finitely generated periodic sets provide an equivalent way to define linear sets as sets of the form \(\vect{b}+ \vectSet{P}\), where \(\vect{b} \in \N^{\dimD}\) and \(\vectSet{P} \subseteq \N^d\) is a finitely generated periodic set.

\subparagraph{Smooth Periodic Sets.}

The periodic relation \(\vectSet{P}_{\rho}\) for a run \(\rho\) of a VASS is rarely finitely generated, but it is smooth, a class 
introduced by Leroux in \cite{Leroux13}. In order to define smooth, we first reintroduce the set of directions of a periodic set. 

\begin{definition}{\cite{Leroux13, GuttenbergRE23}}
\label{def:directions}
Let \( \vect{P} \) be a periodic set.  A vector \(\vect{v} \in \Q^{\dimD}\) is a \emph{direction} of  \(\vect{P}\) if there exists \(m \in \N_{>0}\) and a point \(\vect{x}\) such that \(\vect{x}+\N \cdot m \vect{v} \subseteq \vect{P}\), i.e. some line in direction \(\vect{v}\) is fully contained in \(\vect{P}\). The set of directions of \(\vect{P}\) is denoted \(\dir(\vect{P})\).
\end{definition}

We can now define smooth periodic sets.

\begin{definition}{\cite{Leroux13, GuttenbergRE23}}
\label{def:smooth}
Let \( \vect{P} \) be a periodic set.
\begin{itemize}
\item \(\vect{P}\) is \emph{asymptotically definable} if \(\dir(\vect{P})\) is a definable cone.

\item \(\vect{P}\) is \emph{well-directed} if every sequence \((\vect{p}_m)_{m \in \N}\) of vectors \(\vect{p}_m \in \vect{P}\) has an infinite subsequence \((\vect{p}_{m_k})_{k \in \N}\) such that \(\vect{p}_{m_j}+\N(\vect{p}_{m_k} - \vect{p}_{m_j}) \subseteq \vectSet{P}\) for all \(k \geq j\). 

\item \(\vect{P}\) is \emph{smooth} if it is asymptotically definable and well-directed.
\end{itemize}
\end{definition}

\begin{example}
\label{ex:smooth}
Examples of smooth periodic sets are \(\vectSet{P}_1=\{(0,0)\}  \cup (1,1)+\N^2\) and \(\vectSet{P}_2=\{(x,y) \in \N^2 \mid y \leq x^2\}\). We have \(\dir(\vectSet{P}_2) \setminus \{(0,0)\}=\{(x,y)\in \Q_{\geq 0}\mid x >0 \}\). I.e. except pure north, every vector in \(\Q_{\geq 0}^2\) is a direction of \(\vectSet{P}_2\). On the other hand \(\dir(\vectSet{P}_1)=\Q_{\geq 0}^2\). 

\(\vectSet{P}_2\) is a very typical example: One idea with \(\dir(\vectSet{P})\) is to store the asymptotic steepness of the upper function, and ignore whether it is exponential or quadratic if it is superlinear.
\end{example}

\begin{example}
Examples of non-smooth sets are \(\vect{P}_1'=\{(x,y) \mid x \geq \sqrt{2} y\}\)  and \(\vect{P}_2'=(\{(0,1)\} \cup \{(2^m,1) \mid m \in \N\})^{\ast}=\{(x,n) \in \N^2 \mid x \text{ has at most } n \text{ bits set to }1\text{ in the binary representation.}\}\). $\vect{P}_1'$  is not asymptotically definable, because defining \(\dir(\vect{P}_1')\) requires irrationals, while $\vect{P}_2'$ is not well-directed (see observation 2 below). 
\end{example}


We make a few observations:
\begin{enumerate}
\item The set \(\dir(\vect{P})\) is a cone. It is by definition closed under non-negative scalar multiplication (due to the \(m\) in the definition). Furthermore, if two lines in different directions \(\vect{v}\) and \(\vect{v}'\) are contained in \(\vect{P}\), then by periodicity \(\vect{P}\) also contains a \(\vect{v},\vect{v}'\) plane, and so \(\vect{P}\) contains a line in every direction between \(\vect{v}\) and \(\vect{v}'\). For more details see \cite[Lemma V.7]{Leroux13}. \(\dir(\vectSet{P})\) should be viewed as a kind of ``limit cone'' containing \(\vectSet{P}\), it is however only one possible definition for a ``limit cone'' of \(\vectSet{P}\), other cones were considered in prior papers \cite{Leroux11, Leroux12}.
\item The definition of well-directed is stated this way to relate to wqo's, but the most important case of definition \ref{def:smooth} is when the \(\vect{p}_m\) are all on the same infinite line \(\vect{x}+\vect{v} \cdot \N\). Then the definition equivalently states that \(\vect{v} \in \dir(\vect{P})\), i.e. some infinite line in direction \(\vect{v}\) is contained in \(\vect{P}\). This makes sets where points are ``too scarce'' non-smooth. For instance, the set $\vect{P}_2$ of Example \ref{ex:smooth} contains infinitely many points on a horizontal line, but no full horizontal line, which would correspond to an arithmetic progression.
\end{enumerate}

%
%
%
%
%

\subparagraph{Almost semilinear relations.} \label{SubsectionAlmostSemilinear}

We reintroduce almost-semilinear sets, introduced by Leroux in \cite{Leroux11, Leroux12, Leroux13}. Intuitively, they generalize semilinear sets by replacing finitely generated periodic sets with smooth periodic sets.

\begin{definition}{\cite{Leroux12,Leroux13}}
A set \(\vect{X}\) is \emph{almost linear} if \(\vect{X}=\vect{b}+\vect{P}\), where \(\vect{b}\in \mathbb{N}^d\) and \(\vect{P}\) is a smooth periodic set, and \emph{almost semilinear} if it is a finite union of almost linear sets.
\end{definition}

It was shown in \cite{Leroux12,Leroux13} that VAS reachability sets/relations are almost semilinear. However,
it is easy to find almost semilinear sets that are not reachability sets of any VAS. One reason is that the definition of a smooth periodic set only restricts the ``asymptotic behavior'' of the set, which can be ``simple'' even if the set itself is very ``complex''.  

\begin{example}
Let \(\vect{X} \subseteq \N_{>0}\) be any set, for example \(\vect{X}:=\{m \in \N \mid m \text{ is Gödel-number of non-halting TM}\}\). Then \(\vect{P}:=\{(0,0)\} \cup (\{1\} \times \vect{X}) \cup \N_{>1}^2\) is a smooth periodic set. Indeed, it contains a line in every direction, and is thus well-directed and asymptotically definable. \label{ExampleUglyPeriodicSet}
\end{example}

To eliminate these types of sets Leroux required that every intersection of the set with a semilinear set is still almost semilinear. For instance, the intersection of the set $\vect{X}$ in Example \ref{ExampleUglyPeriodicSet} and the linear set \((1,0)+(0,1) \cdot \N\) is not almost semilinear. This leads to the following main theorems of \cite{Leroux13}, which we want to extend to Priority VAS:

\begin{theorem}{\cite[Theorem IX.1]{Leroux13}}
For every semilinear relation \(\vectSet{S}\) and reachability relation \(\vectSet{R}\) of a VAS, \(\vect{R} \cap \vect{S}\) is a finite union of relations \(\vect{b}+\vectSet{P}\), where \(\vectSet{P}\) is smooth periodic and for every linear relation \(\vectSet{L} \subseteq \vect{b}+\vectSet{P}\) there exists a \(\vect{p} \in \vect{P}\) such that \(\vect{p}+\vectSet{L}\) is flattable. \label{TheoremVASPetri}
\end{theorem}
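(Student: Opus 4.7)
The plan is to combine the Jancar well-quasi-order on runs (Section \ref{SectionWQO}) with Leroux's earlier structural results on transformer relations $\vectSet{P}_\rho$ of VAS runs. First I would restrict attention to runs $\rho$ whose direction lies in $\vectSet{S}$, extract a finite basis $\Omega_{\min}$ of such runs under the wqo, and write
\[
\vectSet{R} \cap \vectSet{S} = \bigcup_{\rho \in \Omega_{\min}} (\dir(\rho) + \vectSet{P}_\rho) \cap \vectSet{S}.
\]
Each $\vectSet{P}_\rho$ is smooth periodic, so the candidate almost-linear decomposition is essentially there. The subtle point is that ``$\dir(\rho) \in \vectSet{S}$'' is not directly a wqo-compatible condition on runs, so I would first decompose $\vectSet{S}$ as a finite union of linear relations $\vect{b}_j + \vectSet{Q}_j$ and, for each $j$, re-run the wqo argument inside the class of runs whose direction lies in $\vect{b}_j + \vectSet{Q}_j$. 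The wqo survives this refinement by adding auxiliary components that track progress through the finitely generated periodic part $\vectSet{Q}_j$ and invoking Dickson's lemma on the enlarged objects.

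For each pair $(\rho, j)$ surviving this refinement, it remains to show that $(\dir(\rho) + \vectSet{P}_\rho) \cap (\vect{b}_j + \vectSet{Q}_j)$ is itself a finite union of almost-linear sets. One cannot naively intersect abstract smooth periodic sets with affine subspaces, because smoothness is not preserved in general (cf.\ Example \ref{ExampleUglyPeriodicSet}). My workaround would be to perform the intersection at the level of runs themselves: the refined wqo yields, for each minimal $\rho'$ in the restricted class, an almost-linear piece $\dir(\rho') + \vectSet{P}_{\rho'}$ that already lies inside $\vect{b}_j + \vectSet{Q}_j$, so the intersection is built up from almost-linear pieces attached to genuine VAS runs, which are automatically smooth.

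For the flattability density clause, fix one almost-linear piece $\vect{b} + \vectSet{P}_\rho$ of the decomposition together with a linear $\vectSet{L} = \vect{b}' + \N \vect{v}_1 + \cdots + \N \vect{v}_k \subseteq \vect{b} + \vectSet{P}_\rho$. Since $\vectSet{L}$ contains an infinite arithmetic progression in every direction $\vect{v}_i$, each $\vect{v}_i$ lies in $\dir(\vectSet{P}_\rho)$. Each such direction is witnessed by a family of cycles in $\rho$ that, once some baseline capacity $\vect{p}_i$ is provided along $\rho$ to keep counters nonnegative, can be inserted arbitrarily often. Taking $\vect{p} := \sum_i \vect{p}_i \in \vectSet{P}_\rho$ gives simultaneous capacity for all $\vect{v}_i$, and the resulting flat linear path scheme $\alpha_0 \sigma_1^{\ast} \alpha_1 \cdots \sigma_k^{\ast} \alpha_k$ witnesses every element of $\vect{p} + \vectSet{L}$. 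The main obstacle, I expect, is precisely this simultaneous-capacity step: the cycles realizing distinct $\vect{v}_i$ may interfere, in the sense that each requires its own nonnegativity bound at specific positions of $\rho$, so one must engineer $\vect{p}$ uniformly over all directions rather than direction-by-direction, which is the real technical heart of the pumping argument.
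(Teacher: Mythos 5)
The theorem you are proving is imported by the paper from \cite{Leroux13} without proof; the closest the paper comes is Theorem \ref{TheoremPVASPetri}, its PVAS generalization, which actually invokes this VAS version as the base case of Theorem \ref{TheoremFlatteningLines}. Your outline matches the expected strategy (decompose via the Jancar wqo on runs, then combine flattable line segments), but two concrete gaps remain. First, your handling of the intersection with $\vectSet{S}$ is only sketched: ``re-run the wqo argument inside the class of runs whose direction lies in $\vect{b}_j + \vectSet{Q}_j$'' and ``add auxiliary components''. To make this correct one should construct an enlarged VAS whose section equals $\vectSet{R} \cap \vectSet{S}$ (the paper's Lemma \ref{LemmaIntersectionSemilinear} is precisely the PVAS analogue), because merely restricting the class of runs does not make the transformer relation $\vectSet{P}_\rho$ of a minimal run coincide with the restricted relation $\{\dir(\rho')-\dir(\rho) : \rho' \geq \rho,\ \dir(\rho') \in \vect{b}_j+\vectSet{Q}_j\}$; the latter is an intersection of a smooth relation with an affine constraint and need not be smooth, which is exactly the problem you yourself raise. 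Smoothness is automatic only once the intersection is encoded into the system, so that the minimal runs and their transformer relations genuinely belong to a (new) VAS.

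Second, and more fundamentally: your step ``Taking $\vect{p} := \sum_i \vect{p}_i \in \vectSet{P}_\rho$ gives simultaneous capacity for all $\vect{v}_i$, and the resulting flat linear path scheme witnesses every element of $\vect{p} + \vectSet{L}$'' asserts, rather than proves, what is the real technical content of the theorem, and you flag this yourself. Periodicity of $\vectSet{P}_\rho$ gives you that a suitably scaled copy of $\vect{p} + \vectSet{L}$ is contained in $\vect{b}+\vectSet{P}_\rho$; it does not give a single flat linear path scheme covering it. Cycles realizing distinct $\vect{v}_i$ are inserted at different positions of $\rho$ and pump different intermediate configurations, and combining them into one flat scheme requires a dedicated compositionality lemma on transformer relations, not just monotonicity. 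In the paper's language this is Property 1 of Theorem \ref{TheoremFlatteningLines} (if $\dir(\rho)+\vectSet{R}_1$ and $\dir(\rho)+\vectSet{R}_2$ are flattable, so is $\dir(\rho)+\vectSet{R}_1+\vectSet{R}_2$), whose VAS instance is \cite[Lemma VI.3]{Leroux13}. One also needs the scaling factors $n_i$ and the finite offset set $\vectSet{F}$ as in the proof of Theorem \ref{TheoremPVASPetri}, since the direction-flattability property only yields flattability of $\N n_i \vect{v}_i$ rather than $\N \vect{v}_i$. Until the combination lemma is supplied, the flattability clause of the theorem remains open.
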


Since projecting away fixed coordinates preserves almost semilinearity, namely the periodic sets are anyways \(0\) on fixed coordinates, this theorem also holds for VAS sections. 

As mentioned in the introduction, the same paper proceeded to prove the following:

\begin{theorem}{\cite[Theorem XI.2]{Leroux13}}
The reachability relation of a VAS is flattable if and only if it is semilinear. \label{TheoremFlattableVAS}
\end{theorem}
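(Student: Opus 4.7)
\textbf{Proof proposal for Theorem~\ref{TheoremFlattableVAS}.}

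The plan is to prove the two directions separately, with the forward direction being routine and the backward direction being the substantial one that will lean heavily on Theorem~\ref{TheoremVASPetri}.

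For the forward direction (flattable $\Rightarrow$ semilinear), I would argue directly from the definition of flattability. If $\rightarrow_{\VAS}^{\ast}$ is witnessed by finitely many sequences $\rho_1, \dots, \rho_r$, then every reachable pair lies in the image of the finite-state automaton corresponding to the flat expression $\rho_1^{\ast} \cdots \rho_r^{\ast}$. Parikh's theorem gives that the set of firing vectors is semilinear, and the semantics of VAS transitions on counters is affine in these firing counts. The non-negativity constraints along the linear path scheme can then be expressed via a Presburger formula (this is the standard linear-path-scheme / continuous-reachability analysis), and projecting to (source, target) pairs preserves semilinearity. Hence $\rightarrow_{\VAS}^{\ast}$ is semilinear.

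For the backward direction (semilinear $\Rightarrow$ flattable), I would take $\vectSet{R} = \rightarrow_{\VAS}^{\ast}$ to be semilinear and instantiate Theorem~\ref{TheoremVASPetri} with $\vectSet{S} = \vectSet{R}$. Then $\vectSet{R} = \vectSet{R} \cap \vectSet{S}$ decomposes as a finite union $\bigcup_j (\vect{b}_j + \vectSet{P}_j)$ with each $\vectSet{P}_j$ smooth periodic, and for every linear $\vectSet{L} \subseteq \vect{b}_j + \vectSet{P}_j$ some translate $\vect{p} + \vectSet{L}$ (with $\vect{p} \in \vectSet{P}_j$) is flattable. Now write $\vectSet{R} = \bigcup_i \vectSet{L}_i$ as a finite union of linear relations via semilinearity. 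For each $\vectSet{L}_i$, intersect with each $\vect{b}_j + \vectSet{P}_j$; using that $\vectSet{L}_i$ is Presburger-definable and intersecting a smooth periodic set with a linear set whose periods generate the same direction cone gives back a linear set, one obtains that $\vectSet{L}_i$ is covered by finitely many linear pieces each sitting inside some $\vect{b}_j + \vectSet{P}_j$. Theorem~\ref{TheoremVASPetri} then provides, for each such piece, a translate that is flattable, and the union of flattable relations is flattable (concatenate the linear path schemes with $\cup$).

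The main obstacle I expect is closing the gap between ``$\vect{p} + \vectSet{L}$ is flattable'' and ``$\vectSet{L}$ itself is flattable'' in the backward direction. The translation $\vect{p}$ is not in our control, and transporting a flat expression for $\vect{p} + \vectSet{L}$ back to $\vectSet{L}$ requires executing a run that reaches the translated starting configuration and then ``undoing'' the offset using transitions of the VAS. This is where a pumping argument, exploiting that $\vect{p} \in \vectSet{P}_j$ and hence $\vect{p}$ lies in the direction cone together with genuine runs witnessing the periodicity, must be used: concretely, one should argue that the periods $\vect{p}_{k}$ of $\vectSet{P}_j$ come from actual cycles in $\VAS$, so the translated flat scheme can be prepended and appended with runs realizing $+\vect{p}$ and $-\vect{p}$ within the flat framework. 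Handling the asymmetric case, where no backward run of effect $-\vect{p}$ exists, is the subtle point and must be dealt with by refining the decomposition so that each $\vect{b}_j + \vectSet{P}_j$ actually admits such witnessing cycles, which is exactly the content of the smooth/well-directed conditions in Theorem~\ref{TheoremVASPetri}.
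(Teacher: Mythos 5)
Your forward direction is fine. Your backward direction, however, has a genuine gap that your proposed workaround does not close.

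You correctly identify the obstacle: Theorem~\ref{TheoremVASPetri} only yields that some translate $\vect{p}+\vectSet{L}$ is flattable, and one cannot in general recover flattability of $\vectSet{L}$ from that. But your proposed fix --- prepending and appending runs realizing $+\vect{p}$ and $-\vect{p}$, or ``refining the decomposition so that each $\vect{b}_j+\vectSet{P}_j$ admits such witnessing cycles'' --- does not work, and is not what the smooth/well-directed conditions provide. VAS reachability is not symmetric: there is simply no guarantee of any run with effect $-\vect{p}$, and smoothness of $\vectSet{P}_j$ says nothing about reversibility of the periods. There is no refinement of the almost-linear decomposition that makes $-\vect{p}$ a realizable effect, so this line of attack is a dead end.

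The missing idea is Leroux's induction on the \emph{dimension} of semilinear sets. One defines $\dim(\vectSet{S})\in\N$ with the crucial property that for any linear $\vectSet{L}$ and any $\vect{p}$, $\dim\bigl(\vectSet{L}\setminus(\vect{p}+\vectSet{L})\bigr) < \dim(\vectSet{L})$. The proof then goes by induction on $\dim(\vectSet{R}\cap\vectSet{S})$: reduce to $\vectSet{X}=\vectSet{R}\cap\vectSet{S}$ linear, obtain $\vect{p}\in\vectSet{P}$ from Theorem~\ref{TheoremVASPetri} so that $\vect{p}+\vectSet{X}$ is flattable, and then note that $\vectSet{X}\setminus(\vect{p}+\vectSet{X})$ has strictly smaller dimension and is therefore flattable by the induction hypothesis. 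Writing $\vectSet{X}=\bigl[\vectSet{X}\setminus(\vect{p}+\vectSet{X})\bigr]\cup\bigl[\vect{p}+\vectSet{X}\bigr]$ and using closure of flattability under union finishes the step. This completely sidesteps the need to ``undo'' the offset $\vect{p}$: the translated piece is used as-is, and what is left over is handled by a dimension drop, not by a backward run. Without this induction your argument does not terminate, so the gap you flagged remains open in your proposal.
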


The hard part is of course to prove that semilinear implies flattable. Let us quickly recap how to obtain Theorem \ref{TheoremFlattableVAS} from Theorem \ref{TheoremVASPetri} as our main proof will not repeat this step, we will stop at obtaining Theorem \ref{TheoremVASPetri} for Priority VAS.

Leroux in \cite{Leroux13} defines a dimension \(\dim(\vectSet{S})\in \N\) of semilinear sets \(\vectSet{S}\). The important aspect of the dimension is that for every linear set \(\vectSet{L}\), we have \(\dim(\vectSet{L} \setminus (\vect{p}+\vectSet{L})) < \dim(\vectSet{L})\). For example \(\N^2 \setminus [(x,y)+\N^2]\) for any \(x,y \in \N\) is a finite union of lines, and hence 1-dimensional. The proof then proceeds by induction on the dimension of \(\vectSet{X}:=\vectSet{R} \cap \vectSet{S}\). 

Base \(k=0\): Since \(0\)-dimensional implies finite, such sets are flattable.

Step \(k \to k+1\): Let \(\vectSet{X}=\vectSet{R} \cap \vectSet{S}\) semilinear with \(\dim(\vectSet{X})=k+1\). We have to show that \(\vectSet{X}\) is flattable. Since flattable relations are closed under union, we can assume that \(\vectSet{X}\) is not only semilinear but even linear. Using \(\vectSet{X}\) as \(\vectSet{L}\) in Theorem \ref{TheoremVASPetri} (this requires combining almost linear components correctly, and hence some fiddling) there exists a vector \(\vect{p}\) such that \(\vect{p}+\vectSet{X}\) is flattable. Since \(\dim(\vectSet{X} \setminus (\vect{p}+\vectSet{X}))<\dim(\vectSet{X})\), \(\vectSet{X} \setminus (\vect{p}+\vectSet{X})\) is flattable by induction. Since flattable relations are closed under union, \(\vectSet{X}=[\vectSet{X} \setminus (\vect{p}+\vectSet{X})] \cup [\vect{p}+\vectSet{X}]\) is flattable.

With a similar induction on the dimension, Leroux obtained the following:

\begin{theorem}{\cite[Theorem 9.2]{Leroux11}}
Let \(\vectSet{R}\) be reflexive, transitive and such that for every semilinear \(\vectSet{S}\), \(\vectSet{R} \cap \vectSet{S}\) is almost semilinear. Then \(\vectSet{R}\) admits semilinear inductive invariants. \label{TheoremInductiveInvariants}
\end{theorem}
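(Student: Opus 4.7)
The approach is to mirror the dimension-induction just recapped for Theorem~\ref{TheoremFlattableVAS}, with ``flattable subset'' replaced by ``semilinear backward-inductive cover''. First I would reformulate the conclusion as a separation statement: $\vectSet{R}$ admits semilinear inductive invariants iff, whenever $(\vect{c},\vect{c}') \notin \vectSet{R}$, there is a semilinear set $\vectSet{J}$ with $\vect{c}' \in \vectSet{J}$, $\vect{c} \notin \vectSet{J}$, and $\vectSet{J}$ closed under $\vectSet{R}^{-1}$ (its complement is then a forward-inductive semilinear set containing $\vect{c}$ but not $\vect{c}'$). The smallest such backward-closed set is $\vectSet{R}^{-1}(\vect{c}')$, and applying the hypothesis to the semilinear relation $\N^{d} \times \{\vect{c}'\}$ followed by projection onto the first component, this set is almost semilinear.

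\textbf{Induction.} I would then prove by induction on $\dim(\vectSet{R}^{-1}(\vect{c}'))$ that the separator $\vectSet{J}$ exists. In the base case, $\vectSet{R}^{-1}(\vect{c}')$ is finite, hence semilinear and, by reflexivity and transitivity of $\vectSet{R}$, already closed under $\vectSet{R}^{-1}$; it excludes $\vect{c}$ by assumption, so $\vectSet{J} \DefEq \vectSet{R}^{-1}(\vect{c}')$ works. For the inductive step, decompose $\vectSet{R}^{-1}(\vect{c}')$ into a finite union of almost linear sets $\vect{b}_i + \vectSet{P}_i$ with $\vectSet{P}_i$ smooth. Because each $\dir(\vectSet{P}_i)$ is a definable, hence rational polyhedral, cone, pick finitely generated periodic sets $\vectSet{Q}_i$ with $\dir(\vectSet{Q}_i) = \dir(\vectSet{P}_i)$ and form the semilinear over-approximation $\widehat{\vectSet{Y}} \DefEq \bigcup_i (\vect{b}_i + \vectSet{Q}_i)$. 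Well-directedness of the $\vectSet{P}_i$ ensures that the ``residue'' $\vectSet{R}^{-1}(\vect{c}') \setminus \widehat{\vectSet{Y}}$, sliced by semilinear strips via the hypothesis, is covered by $\vectSet{R}^{-1}(\vect{c}''_j)$ for finitely many witness points $\vect{c}''_j$ of strictly smaller Leroux dimension than $\vectSet{R}^{-1}(\vect{c}')$. Applying the induction hypothesis to each such $\vect{c}''_j$ yields semilinear backward-inductive sets $\vectSet{J}_j$ avoiding $\vect{c}$; their union with a suitably refined, $\vectSet{R}^{-1}$-saturated version of $\widehat{\vectSet{Y}}$ (kept semilinear by one further use of the hypothesis) gives the desired $\vectSet{J}$.

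\textbf{Main obstacle.} The delicate point is ensuring that the polyhedral cover $\widehat{\vectSet{Y}}$, and especially its $\vectSet{R}^{-1}$-saturation, does not swallow $\vect{c}$. If $\vect{c}$ happens to lie inside $\widehat{\vectSet{Y}}$, one must refine the $\vectSet{Q}_i$ by shaving off a rational separating hyperplane between $\vect{c}$ and the polyhedron $\vect{b}_i + \vectSet{Q}_i$, while maintaining the drop in Leroux dimension of the residue. This is the technical core, and it is precisely where smoothness (providing a definable direction cone so that a rational separating hyperplane exists at all) and the hypothesis that $\vectSet{R} \cap \vectSet{S}$ is almost semilinear for \emph{every} semilinear $\vectSet{S}$ (so that each successive refinement and its $\vectSet{R}^{-1}$-saturation remain within the almost semilinear world and hence within the induction's scope) must be used together.
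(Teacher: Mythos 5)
The paper does not actually prove this theorem; it cites \cite[Theorem 9.2]{Leroux11} and only remarks that it follows ``with a similar induction on the dimension'' to the recap of the flattability argument. So there is no in-paper proof to match against, and I can only evaluate your sketch on its own merits against that hint.

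Your overall architecture (reformulate as a separation statement, induct on a Leroux-style dimension, over-approximate the almost linear pieces by replacing the smooth periodic parts with finitely generated periodic sets having the same direction cone, handle a lower-dimensional residue recursively) is indeed the right flavor. But there is a genuine gap at the step you yourself flag as the ``technical core''. You claim that the $\vectSet{R}^{-1}$-saturation of $\widehat{\vectSet{Y}}$ is ``kept semilinear by one further use of the hypothesis.'' This is false: the hypothesis only yields that $\vectSet{R}^{-1}(\vectSet{S})$ is \emph{almost semilinear} for semilinear $\vectSet{S}$, not semilinear, and the gap between these two classes is exactly what the whole theorem is about. If saturating a semilinear set under $\vectSet{R}^{-1}$ stayed semilinear, the theorem would be almost immediate (take $\vectSet{J}$ to be the saturation of $\{\vect{c}'\}$); the fact that it does not is why one needs a separating invariant that is genuinely \emph{larger} than $\vectSet{R}^{-1}(\vect{c}')$, and for that one must show the over-approximation can be chosen already backward-closed, not merely saturate it afterwards. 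Relatedly, your assertion that the residue $\vectSet{R}^{-1}(\vect{c}')\setminus\widehat{\vectSet{Y}}$ is covered by $\vectSet{R}^{-1}(\vect{c}''_j)$ for finitely many lower-dimensional witness points is stated without any mechanism: the paper's dimension drop comes from the identity $\dim(\vectSet{L}\setminus(\vect{p}+\vectSet{L}))<\dim(\vectSet{L})$ applied to a \emph{linear} set $\vectSet{L}$, and you have not exhibited any analogous decomposition of the residue into strictly smaller-dimensional pieces, nor is it clear that each piece is again of the form $\vectSet{R}^{-1}(\vect{c}'')$. Finally, your induction measure $\dim(\vectSet{R}^{-1}(\vect{c}'))$ is itself undefined as stated, because the paper's dimension is defined for semilinear sets and $\vectSet{R}^{-1}(\vect{c}')$ is only almost semilinear; you would need to say what dimension you mean (e.g.\ the dimension of the definable cone $\dir$) and then verify that all the needed monotonicity properties transfer. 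Until the saturation-stays-semilinear step is replaced by an argument that actually produces a backward-closed semilinear cover avoiding $\vect{c}$, and the residue decomposition is made precise, the proof does not go through.
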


As a corollary of Theorem \ref{TheoremInductiveInvariants} and the PVAS version of Theorem \ref{TheoremVASPetri}, namely Theorem \ref{TheoremPVASPetri} of the next section, we obtain the following.

\begin{corollary}
Let \(\VAS\) be a PVASS, and \(\vect{C}_s, \vect{C}_t\) two configurations such that \(\vect{C}_s \not \to^{\ast} \vect{C}_t\). Then there exists a semilinear inductive invariant \(\vectSet{S}\) such that \(\vect{C}_s \in \vectSet{S}, \vect{C}_t \not \in \vectSet{S}\).
\end{corollary}

\begin{proof}
Let \(\vectSet{R}\) be the reachability relation of \(\VAS\). Clearly \(\vectSet{R}\) is reflexive and transitive. By Theorem \ref{TheoremPVASPetri}, every intersection \(\vectSet{R} \cap \vectSet{S}\) with a semilinear \(\vectSet{S}\) is a finite union of \(\vect{b}+ \vectSet{P}\), where \(\vectSet{P}\) is smooth periodic, i.e. \(\vectSet{R} \cap \vectSet{S}\) is almost semilinear. Hence \(\vectSet{R}\) fulfills all assumptions of Theorem \ref{TheoremInductiveInvariants}, whose conclusion is the existence of a separating inductive invariant \(\vectSet{S}\) as claimed.
\end{proof}

\section{Semilinear Priority VAS are Flattable} \label{SectionProofFinalTheorem}


In this section we define flattability and prove that Theorem \ref{TheoremVASPetri} holds also for PVAS. We work on expressions \(\Expression\), and hence define flattability for expressions \(\Expression\) by structural induction, instead of defining flattability on PVAS directly.
\begin{definition}
Base case: \(\vectSet{R} \subseteq \Rel(\vectSet{Y})\) is flattable w.r.t. \(\vectSet{Y}\) if and only if \(\vectSet{R}\) is semilinear.

For \(\Expression_1 \cup \Expression_2\), a relation \(\vectSet{R} \subseteq \Rel(\Expression_1 \cup \Expression_2)\) is flattable w.r.t. \(\Expression_1 \cup \Expression_2\) if and only if there exist relations \(\vectSet{R}_i \subseteq \Rel(\Expression_i)\) flattable w.r.t. \(\Expression_i\) such that \(\vectSet{R} \subseteq \vectSet{R}_1 \cup \vectSet{R}_2\).

For \(\Expression_1 \circ \Expression_2\), a relation \(\vectSet{R} \subseteq \Rel(\Expression_1 \circ \Expression_2)\) is flattable w.r.t. \(\Expression_1 \circ \Expression_2\) if and only if there exist relations \(\vectSet{R}_i \subseteq \Rel(\Expression_i)\) flattable w.r.t. \(\Expression_i\) such that \(\vectSet{R} \subseteq \vectSet{R}_1 \circ \vectSet{R}_2\).

For \(\Expression^{\ast}\), remember that by Corollary \ref{CorollaryStarOnMonotone} we can assume that \(\Rel(\Expression)\) is monotone. Let \(in(\Expression)=d'\). We first make a preliminary definition: Given a vector \(\vect{v}=(\source,\target) \in \Rel(\Expression)\), its closure under monotonicity is \(m(\vect{v})=\{\vect{v}\}+\N (\vect{e}_1, \vect{e}_1)+\dots +\N (\vect{e}_{d'}, \vect{e}_{d'})\), where \(\vect{e}_{i} \in \N^{d'}\) is the \(i\)-th unit vector. We define the monotone transitive closure of \(\vect{v}\) as \(\mtc(\vect{v})=m(\vect{v})^{\ast}\), i.e. it is the relation of source and target configurations, such that the target can be reached by repeatedly applying only \(\vect{v}\), potentially at a larger configuration. A linear path scheme \(\vectSet{S}\) is a relation which can be written as \(\mtc(\source_1, \target_1) \circ \dots \circ \mtc(\source_r, \target_r)\) with \((\source_i, \target_i) \in \Rel(\Expression)\). This has to be defined using \(\mtc\) since for expressions we do not have a finite set \(E\) of ``edges'' anymore. But we want to express that the same edge sequences are taken.

A relation \(\vectSet{R} \subseteq \Rel(\Expression^{\ast})\) is flattable w.r.t. \(\Expression^{\ast}\) if and only if there exist finitely many relations \(\vectSet{R}_1, \dots, \vectSet{R}_k \subseteq \Rel(\Expression)\) flattable w.r.t. \(\Expression\) and linear path schemes \(\vectSet{S}_0, \dots, \vectSet{S}_k\) such that \(\vectSet{R} \subseteq \vectSet{S}_0 \circ \vectSet{R}_1 \circ \vectSet{S}_1 \circ \vectSet{R}_2 \circ \dots \circ \vectSet{R}_k \circ \vectSet{S}_k\). I.e., in terms of the ideas in the introduction, we have \(k\) loops where we are allowed to adapt the ``inner'' transitions, those are reflected by \(\vectSet{R}_i\). In between those loops, we are allowed to use linear path schemes of the outer loop.
\end{definition}

Let us mention some basic properties of this definition, proved in Appendix \ref{SectionAppendixProofFinalTheorem}. In particular, we compare this definition with the transition word definition of flattable.

\begin{definition}
Let \(\VAS=(Q,E)\) be a PVAS. For \(w=(e_1, \dots, e_k) \in E^{\ast}\) define 

\(\to_w:=\to_{e_1} \circ \dots \circ \to_{e_k}\). Let \(\to_w^{\ast}\) denote the reflexive and transitive closure of \(\to_w\).

A relation \(\vectSet{R}\) is transition word flattable w.r.t. \(\VAS\) if there exist transition sequences \(w_1, \dots, w_r \in E^{\ast}\) such that \(\vectSet{R} \subseteq \to_{w_1}^{\ast} \circ \dots \circ \to_{w_r}^{\ast}\).
\end{definition}

\begin{restatable}{lemma}{LemmaBasicExpressionflattability}
Let \(\Expression\) be an expression.

\begin{enumerate}
\item If \(\rho' \geq_{\Omega(\Expression)} \rho\) are runs, then \(\dir(\rho)+\N(\dir(\rho')-\dir(\rho))\) is flattable w.r.t. \(\Expression\), and all the corresponding runs are \(\geq_{\Omega(\Expression)} \rho\).
\item If \(\vectSet{R}, \vectSet{R}' \subseteq \Rel(\Expression)\) are flattable w.r.t. \(\Expression\), then also \(\vectSet{R} \cup \vectSet{R}'\) is.
\item Let \(\VAS\) be a PVAS such that \(\Expression\) is its expression in Theorem \ref{TheoremEquivalenceRegExPVAS}. If a relation \(\vectSet{R} \subseteq \Rel(\Expression)\) is flattable in our sense, then \(\vectSet{R} \subseteq \CanReach{\VAS}\) is transition word flattable.
\end{enumerate} \label{LemmaBasicflattabilityProperties}
\end{restatable}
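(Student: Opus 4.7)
The plan is to prove all three parts by structural induction on $\vectSet{T}$, with the star case of Part 1 as the main obstacle.

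For Part 1, in the base case $\vectSet{T}=\vectSet{Y}$, I appeal to Jancar's pumping property \cite{Jancar90, Leroux13}: $\rho \leq \rho'$ in the Jancar ordering provides a linear path scheme realising the whole line $\dir(\rho)+\N(\dir(\rho')-\dir(\rho))$, and this line is semilinear, hence flattable w.r.t.\ $\vectSet{Y}$. For union, the tags force $\rho, \rho' \in \Omega(\vectSet{T}_i)$ for the same $i$, reducing to the induction hypothesis. For composition, the tags induce a unique split $\rho = \rho_1\rho_2$, $\rho' = \rho_1'\rho_2'$ with $\rho_i \leq \rho_i'$ for $i=1,2$; applying induction to each factor and composing the resulting flattable sets yields the needed progression inside $\vectSet{R}_1 \circ \vectSet{R}_2$. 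The hard case is the star: writing $\rho = \sigma_1 \cdots \sigma_m$ and $\rho' = \sigma_1' \cdots \sigma_n'$ with an order-preserving injection $f$ witnessing $\sigma_i \leq \sigma_{f(i)}'$, I plan to combine (a) the flattable set $\vectSet{R}_i$ produced by induction on each matched pair, covering $\dir(\sigma_i)+\N(\dir(\sigma_{f(i)}')-\dir(\sigma_i))$, with (b) the linear path scheme $\mtc(\dir(\sigma_j'))$ for each unmatched index $j \in \{1,\dots,n\} \setminus f(\{1,\dots,m\})$, which is meaningful because $\Rel(\vectSet{T})$ is monotone by Corollary~\ref{CorollaryStarOnMonotone}. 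Interleaving (a) and (b) in the order prescribed by $\{1,\dots,n\}$ produces a chain $\vectSet{S}_0 \circ \vectSet{R}_1 \circ \vectSet{S}_1 \circ \cdots \circ \vectSet{R}_m \circ \vectSet{S}_m$ of the exact shape demanded by the definition of flattability w.r.t.\ $\vectSet{T}^{\ast}$; the accompanying runs dominate $\rho$ by the Higman-style insertion of sub-runs that underlies the wqo construction.

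For Part 2, the base case is closure of semilinear sets under union. The union and composition cases will use the containment $(\vectSet{R}_1 \circ \vectSet{R}_2) \cup (\vectSet{R}_1' \circ \vectSet{R}_2') \subseteq (\vectSet{R}_1 \cup \vectSet{R}_1') \circ (\vectSet{R}_2 \cup \vectSet{R}_2')$ together with the induction hypothesis at the level of the $\vectSet{T}_i$. For the star case, I plan to concatenate the two chains for $\vectSet{R}$ and $\vectSet{R}'$; the glue between them is a composition of linear path schemes, hence itself a linear path scheme, and since linear path schemes can realise the identity (via the empty product of $\mtc$ factors), the side not contributing at a given moment can be traversed trivially.

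For Part 3, the structural induction unfolds the abstract flattable structure into a flat expression of transition sequences in the PVAS $\VAS$ from Theorem~\ref{TheoremEquivalenceRegExPVAS}. In the base case, the dimension induction underlying Theorem~\ref{TheoremFlattableVAS} actually delivers flattability of any semilinear subset of a VAS reachability relation. The union case simply concatenates the two transition-level lists. For composition, the PVAS $\VAS$ of Theorem~\ref{TheoremEquivalenceRegExPVAS} connects $\VAS_1$ and $\VAS_2$ by a single zero-test edge, so splicing the inductive flat expressions around this edge yields a flat expression for $\vectSet{R}_1 \circ \vectSet{R}_2$. For the star case, each linear path scheme $\mtc(\source_j, \target_j)$ lifts to $w_j^{\ast}$ at the transition level, where $w_j$ is any transition word realising $(\source_j, \target_j) \in \Rel(\vectSet{T})$; interleaving these factors with the inductive flat expressions for the $\vectSet{R}_i$ produces the required flat transition expression.
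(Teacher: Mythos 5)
Your Part~1 follows the paper's route: Leroux's flattening result in the base case, tags to reduce the union/composition cases to the factors, and in the star case an interleaving of the inductively flattable pieces for $f$-matched pairs with linear path schemes $\mtc(\cdot)$ covering the Higman-inserted unmatched segments. Part~2's star case, however, has a real gap. You want $\vectSet{R}\cup\vectSet{R}'\subseteq Flat\circ Flat'$ where $Flat=\vectSet{S}_0\circ\vectSet{R}_1\circ\cdots\circ\vectSet{R}_r\circ\vectSet{S}_r$ and $Flat'$ is analogous, and you justify traversing the ``non-contributing side'' trivially because linear path schemes realise the identity. But the $\vectSet{R}_i$ are arbitrary flattable relations, not linear path schemes, and nothing so far makes them contain the diagonal, so $Flat\subseteq Flat\circ Flat'$ does not follow. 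The paper fixes this explicitly: since $\vectSet{T}^{\ast}$ is only ever formed over monotone $\vectSet{T}$ (Corollary~\ref{CorollaryStarOnMonotone}), every subexpression is monotone in the surviving coordinates, so one may w.l.o.g.\ close each $\vectSet{R}_i$ under monotonicity and adjoin $(\vect{0},\vect{0})$ (both preserved by flattability via the union case), making every $\vectSet{R}_i$, and hence $Flat,Flat'$, reflexive. Without this step the containment is unjustified.

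For Part~3 you have the direction of Theorem~\ref{TheoremEquivalenceRegExPVAS} backwards. You treat $\VAS$ as the PVASS \emph{built from} $\vectSet{T}$ by ``$\Leftarrow$'', with a fresh connecting zero-test edge for each $\circ$ in the syntax tree, and splice flat expressions around those edges. The lemma, as it is applied in the proof of Theorem~\ref{TheoremPVASPetri}, goes the other way: $\VAS$ is the given PVAS and $\vectSet{T}$ is the expression produced from $\VAS$ by ``$\Rightarrow$''. In that decomposition the $\vectSet{T}_{k-1,\cdot,\cdot}$ leaves are sections of $\VAS$ itself over edge sets $E_{k-1}\subseteq E$, compositions in the expression correspond to no fresh edge at all, and the transition-word flattening is simple concatenation of lists. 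There is also the special leaf $\vectSet{T}_{Z_k}$, whose defining VASS uses auxiliary states and edges that are \emph{not} in $E$; the paper handles it separately by observing $\Rel(\vectSet{T}_{Z_k})\subseteq\bigcup_{e\in Z_k}\to_e\subseteq\to_{e_1}^{\ast}\circ\cdots\circ\to_{e_m}^{\ast}$ with $e_1,\dots,e_m\in E$. Your base-case appeal to the dimension induction behind Theorem~\ref{TheoremFlattableVAS} does not cover this leaf, and since flattability is defined over a specific transition alphabet, proving it for the $\Leftarrow$-built PVASS does not transfer to $\VAS$.
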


In our proofs we will need to distribute \(+\) over \(\circ\).

\begin{restatable}{lemma}{LemmaExchangePlusAndComposition}
Let \((\vectSet{R}_i)_{i=1}^r \subseteq \N^{d'} \times \N^{d_{mid}}\) and \((\vectSet{R}_i')_{i=1}^r \subseteq \N^{d_{mid}} \times \N^{d''}\) be relations. Then \(\sum_{i=1}^r (\vectSet{R}_i \circ \vectSet{R}_i') \subseteq (\sum_{i=1}^r \vectSet{R}_i) \circ (\sum_{i=1}^r \vectSet{R}_i')\). \label{LemmaExchangePlusAndComposition}
\end{restatable}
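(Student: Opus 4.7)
The statement is essentially a distributivity-style containment, and I expect the proof to be a short direct element-chase using the definition of Minkowski sum and relational composition. The main idea is that a witness for composition in the LHS factors through a single intermediate vector for each summand, and we can simply sum up these per-summand intermediates to get a witness for the composition on the RHS.

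Concretely, the plan is as follows. Let $(\vect{a}, \vect{c})$ be an arbitrary element of the left-hand side. By definition of the Minkowski sum, there exist pairs $(\vect{x}_i, \vect{z}_i) \in \vectSet{R}_i \circ \vectSet{R}_i'$ for $i = 1, \dots, r$ with $\vect{a} = \sum_{i=1}^r \vect{x}_i$ and $\vect{c} = \sum_{i=1}^r \vect{z}_i$. Unfolding the definition of $\circ$, for each $i$ there is an intermediate vector $\vect{y}_i \in \N^{d_{mid}}$ with $(\vect{x}_i, \vect{y}_i) \in \vectSet{R}_i$ and $(\vect{y}_i, \vect{z}_i) \in \vectSet{R}_i'$.

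Now set $\vect{b} \DefEq \sum_{i=1}^r \vect{y}_i \in \N^{d_{mid}}$. Summing over $i$ gives
\[
\BigTuple{\vect{a}, \vect{b}} = \sum_{i=1}^r (\vect{x}_i, \vect{y}_i) \in \sum_{i=1}^r \vectSet{R}_i, \qquad \BigTuple{\vect{b}, \vect{c}} = \sum_{i=1}^r (\vect{y}_i, \vect{z}_i) \in \sum_{i=1}^r \vectSet{R}_i',
\]
and therefore $(\vect{a}, \vect{c}) \in \bigl(\sum_{i=1}^r \vectSet{R}_i\bigr) \circ \bigl(\sum_{i=1}^r \vectSet{R}_i'\bigr)$, which is exactly what we needed.

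There is not really a hard step here; the only thing to be careful about is not to confuse the summation (Minkowski sum) with relational composition, and to observe that one is allowed to pick a \emph{single} common intermediate $\vect{b}$ on the right-hand side because sums of intermediates of the individual compositions are themselves valid intermediates. Note also that the reverse inclusion typically fails, since on the right-hand side the intermediate $\vect{b}$ need not split in a way that matches the splittings of $\vect{a}$ and $\vect{c}$ across the $r$ summands — which is why the lemma is only a one-sided containment.
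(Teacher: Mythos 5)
Your proof is correct and follows essentially the same element-chase as the paper: decompose a point of the left-hand side into per-summand compositions, extract the intermediate vector for each summand, and sum these intermediates to form a single witness for the composition on the right. The closing remark about why the reverse inclusion fails is a nice sanity check, though not part of the paper's argument.
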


\subsection{Proof Outline for Theorem \ref{TheoremPVASPetri}} \label{SubSectionFlatteningLinesSuffices}

In this subsection we provide a proof outline for Theorem \ref{TheoremPVASPetri}. As remarked in Section \ref{SubsectionAlmostSemilinear}, Theorem \ref{TheoremPVASPetri} suffices to obtain semilinear inductive invariants and flattability.

\begin{theorem}
For every semilinear relation \(\vectSet{S}\) and reachability relation \(\vectSet{R}\) of a PVAS, \(\vect{R} \cap \vect{S}\) is a finite union of relations \(\vect{b}+\vectSet{P}\), where \(\vectSet{P}\) is smooth periodic and for every linear relation \(\vectSet{L} \subseteq \vect{b}+\vectSet{P}\) there exists a \(\vect{p} \in \vect{P}\) such that \(\vect{p}+\vectSet{L}\) is flattable. \label{TheoremPVASPetri}
\end{theorem}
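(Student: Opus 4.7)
The plan is to combine the regular-expression characterisation (Section \ref{SectionRegExEquivalence}), the wqo decomposition (Section \ref{SectionWQO}), and the pumping properties of Theorem \ref{TheoremFlatteningLines} into a proof patterned on Leroux's original derivation for VAS.

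First I would reduce to the expression setting. By Theorem \ref{TheoremEquivalenceRegExPVAS}, write $\vectSet{R} = \Rel(\vectSet{T})$ for some expression $\vectSet{T}$. Decompose $\vectSet{S}$ as a finite union of linear relations $\vectSet{L}^{(1)}, \dots, \vectSet{L}^{(k)}$; since the target form is closed under finite unions, it suffices to treat each $\vectSet{R} \cap \vectSet{L}^{(j)}$ separately. Each $\vectSet{L}^{(j)}$ is itself a VAS section (via a single-state VAS whose self-loops realise the periodic generators), and intersecting a PVAS with a VAS is performed by a standard product construction on states, with the VAS counters placed above the PVAS counters so as not to interfere with the zero-test priority structure. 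This yields $\vectSet{R} \cap \vectSet{L}^{(j)} = \Rel(\vectSet{T}')$ for some expression $\vectSet{T}'$, and by Corollary \ref{CorollaryStarOnMonotone} we may further assume every $\ast$ in $\vectSet{T}'$ is applied to a monotone relation.

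Next I would invoke the wqo on runs. Since $\leq_{\Omega(\vectSet{T}')}$ is a wqo, $\Omega(\vectSet{T}')$ has a finite set $\Omega_{min}$ of minimal elements, and every run lies above some minimal run, giving
\[\Rel(\vectSet{T}') = \bigcup_{\rho \in \Omega_{min}} \bigl( \dir(\rho) + \vectSet{P}_{\vectSet{T}', \rho} \bigr).\]
By Theorem \ref{TheoremFlatteningLines}, each $\vectSet{P}_{\vectSet{T}', \rho}$ is smooth and periodic. Setting $\vect{b} := \dir(\rho)$ and $\vectSet{P} := \vectSet{P}_{\vectSet{T}', \rho}$ for each $\rho \in \Omega_{min}$ yields the desired finite decomposition into almost-linear components $\vect{b} + \vectSet{P}$.

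The remaining content is flattability of each linear subset $\vectSet{L} = \vect{b}' + \N\vect{q}_1 + \dots + \N\vect{q}_r \subseteq \vect{b} + \vectSet{P}$. Each generator $\vect{q}_i$ is a direction of $\vectSet{P}$ because $\vect{b}' + \N\vect{q}_i \subseteq \vect{b} + \vectSet{P}$ exhibits an arithmetic progression in $\vect{b} + \vectSet{P}$. By property 2 of Theorem \ref{TheoremFlatteningLines}, for every $i$ there exist $(\vect{a}_i, \vect{c}_i) \in \vectSet{P}$ and $n_i \in \N$ such that $\vect{b} + (\vect{a}_i, \vect{c}_i) + \N n_i \vect{q}_i$ is flattable w.r.t. $\vectSet{T}'$. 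Iterated application of property 1 then yields flattability of $\vect{b} + \sum_{i=1}^{r} (\vect{a}_i, \vect{c}_i) + \N n_1 \vect{q}_1 + \dots + \N n_r \vect{q}_r$. To pass from this $n_i$-scaled lattice back to $\vectSet{L}$, partition $\vectSet{L}$ according to residues $(t_1, \dots, t_r) \in \prod_i \{0, \dots, n_i - 1\}$; by Lemma \ref{LemmaBasicflattabilityProperties}(2) it suffices to flatten each piece, and one more application of property 1 (with a singleton shift that lies in $\vectSet{P}$ via closure of $\vectSet{P}$ under addition and the fact that $\vect{b}' - \vect{b} + \sum_i t_i \vect{q}_i \in \vectSet{P}$) absorbs the piece-specific shift into an appropriate $\vect{p} \in \vectSet{P}$.

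The main obstacle will be the final bookkeeping: combining the direction-wise flattabilities into flattability of the whole $\vectSet{L}$ with a single shift $\vect{p} \in \vectSet{P}$ requires carefully choosing $\vect{p}$ to simultaneously align with each residue piece and verifying that every intermediate correction vector lies in $\vectSet{P}$. Since properties 1 and 2 of Theorem \ref{TheoremFlatteningLines} are formulated exactly as in Leroux's VAS setting, the derivation carries over essentially unchanged from \cite{Leroux13}; the PVAS-specific content has already been absorbed into the proof of Theorem \ref{TheoremFlatteningLines}.
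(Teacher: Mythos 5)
Your proof follows the paper's own argument step for step — reduce to an expression via Lemma \ref{LemmaIntersectionSemilinear} (which you re-derive inline via a product construction), decompose $\Rel(\vectSet{T})$ over the minimal runs of the wqo, and apply properties 1 and 2 of Theorem \ref{TheoremFlatteningLines} to the generators of a linear $\vectSet{L}$ — so the plan is correct and essentially identical. The ``final bookkeeping'' you flag as the main obstacle is in fact routine: the paper avoids your residue partition by applying property 1 once more to the flattable finite set $\vectSet{F}:=\vect{b}'+\{0,\dots,n_1\}\cdot\vect{p}_1+\dots+\{0,\dots,n_r\}\cdot\vect{p}_r\subseteq\vectSet{L}$, from which the single shift $\vect{p}=\sum_i(\vect{a}_i,\vect{b}_i)+(\vect{b}'-\dir(\rho))\in\vectSet{P}$ drops out directly, and your residue variant works just as well because the uniform shift $\sum_i(\vect{a}_i,\vect{c}_i)$ happens to serve every residue class simultaneously.
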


The starting point is to extend the definition of the transformer relation to expressions. Let \(\Expression\) be an expression, and \(\rho \in \Omega(\Expression)\) a run. Then we define the \emph{transformer relation of \(\rho\) w.r.t. \(\Expression\)} via \(\vectSet{P}_{\Expression,\rho}:=\{\dirOfRun(\rho')-\dirOfRun(\rho) \mid \rho' \geq \rho\}\), exactly as in the VAS case. The corresponding almost semilinear component is \(\Component(\Expression,\rho)=\dirOfRun(\rho)+\vectSet{P}_{\Expression,\rho}\).

The outline now consists of two parts: First we reduce Theorem \ref{TheoremPVASPetri} to Theorem \ref{TheoremFlatteningLines} and Lemma \ref{LemmaIntersectionSemilinear}. The closure property of Lemma \ref{LemmaIntersectionSemilinear} is easy to see, we give a proof in Appendix \ref{SectionAppendixProofFinalTheorem}. Hence afterwards the outline will focus on proving Theorem \ref{TheoremFlatteningLines}.

\begin{theorem}
Let \(\Expression\) be an expression, \(\rho \in \Omega(\Expression)\). Then \(\vectSet{P}_{\Expression,\rho}\) is smooth, periodic and:

\begin{enumerate}
\item Let \(\vectSet{R}_1, \vectSet{R}_2 \subseteq \Component(\Expression,\rho)\) flattable. Then \(\vectSet{R}_1+\vectSet{R}_2 - \dirOfRun(\rho)\) is flattable.
\item Every direction of \(\Component(\Expression,\rho)\) is flattable, i.e. for every \((\vect{e}, \vect{f}) \in \dir(\vectSet{P}_{\Expression, \rho})\) there exists \((\vect{a},\vect{b}) \in \vectSet{P}_{\Expression,\rho}\), \(n \in \N\) such that \(\dirOfRun(\rho)+(\vect{a}, \vect{b})+\N n(\vect{e}, \vect{f})\) is flattable w.r.t. \(\Expression\).
\end{enumerate} \label{TheoremFlatteningLines}
\end{theorem}

Property 2. is rather self-explanatory, important is property 1. The statement of property 1. is that if two relations are flattable using the same minimal run \(\rho\), then not just the sum is flattable, but even the sum \emph{ignoring the base point} \(\dirOfRun(\rho)\) is flattable.

\begin{restatable}{lemma}{LemmaIntersectionSemilinear}
Let \(\vectSet{S} \subseteq \N^{d'} \times \N^{d''}\) be semilinear, and \(\vectSet{X} \subseteq \N^{d'} \times \N^{d''}\) a PVAS section. Then \(\vectSet{X} \cap \vectSet{S}\) is a PVAS section, and hence has an equivalent expression \(\Rel(\Expression)\) by Theorem \ref{TheoremEquivalenceRegExPVAS}. \label{LemmaIntersectionSemilinear}
\end{restatable}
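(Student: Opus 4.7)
The plan is to prove closure of PVAS sections under intersection with semilinear relations by a direct product-style construction at the PVASS level, carefully managing the priority structure of zero-tests.

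First, I would reduce to the case where \(\vectSet{S}\) is linear. A semilinear relation is a finite union of linear relations \(\vectSet{S}=\bigcup_j \vectSet{L}_j\), and PVAS sections are closed under finite union: this is immediate from the \(\cup\) operator in the grammar of Theorem~\ref{TheoremEquivalenceRegExPVAS} together with the equivalence, or directly by taking a disjoint union of the corresponding PVASSes with shared fresh initial and final states. Thus \(\vectSet{X}\cap\vectSet{S}=\bigcup_j (\vectSet{X}\cap \vectSet{L}_j)\) and it suffices to treat a single linear \(\vectSet{L}=(\vect{b}^{in},\vect{b}^{out})+\sum_{i=1}^{r}\N(\vect{p}_i^{in},\vect{p}_i^{out})\) with entries in \(\N^{d'+d''}\).

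Next, I would construct a PVASS \(\VAS'\) extending \(\VAS\) by adding new counters and new states arranged into four phases. I add blocks of new counters \(\vect{g}^{in}\in\N^{d'}\), \(\vect{g}^{out}\in\N^{d''}\), and auxiliary multiplicities \(\vec\lambda\in\N^{r}\), all placed at higher priority than the original \(\VAS\) counters (i.e.\ at the lowest indices), while the original fixed coordinates stay above the \(\VAS\) variable coordinates as in Remark~\ref{RemarkFixedCoordinates}. In the \emph{guess phase}, self-loops at a fresh initial state pick an index \(i\) and add \((\vect{p}_i^{in},\vect{p}_i^{out})\) to \((\vect{g}^{in},\vect{g}^{out})\) while incrementing \(\lambda_i\); a one-shot transition then adds the base \((\vect{b}^{in},\vect{b}^{out})\). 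In the \emph{input-matching phase}, self-loops simultaneously decrement \(\vect{x}\) and the corresponding entry of \(\vect{g}^{in}\); a transition then zero-tests \(\vect{g}^{in}\), which by the priority structure also verifies that the fixed coordinates are zero (by Remark~\ref{RemarkFixedCoordinates} we may assume \(\vect{b}_s=0\), and by Remark~\ref{RemarkStates} zero-tested counters are not updated so this is harmless). The PVASS then \emph{simulates} \(\VAS\) on the original counters while \(\vect{g}^{in}=0\) throughout, so the original zero-tests of \(\VAS\) remain valid after re-indexing. Finally, in the \emph{output-matching phase}, self-loops decrement \(\vect{y}\) together with \(\vect{g}^{out}\), followed by a terminal zero-test.

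The main obstacle is exactly the interaction of the PVAS priority constraint with the added zero-tests: every zero-test of a new counter forces all counters of higher priority to be zero, so I must arrange that at every zero-test either those counters are provably zero or they are coordinates we genuinely want to constrain. The sequencing above handles this because (i) the zero-test of \(\vect{g}^{in}\) happens after the input-matching phase when \(\vect{x}\) has been fully consumed and \(\vect{g}^{in}\) reduced to zero; (ii) during the simulation phase all the new high-priority counters hold zero (\(\vect{g}^{in}\)) or are irrelevant to zero-tests, because \(\VAS\)'s transitions only ever zero-test their own original prefix, and the new counters sitting at higher priority are zero at those moments; (iii) the terminal zero-test of \(\vect{g}^{out}\) occurs after \(\VAS\) has terminated, so no further \(\VAS\)-transitions are required. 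If \(\vect{g}^{out}\) generated in the guess phase would conflict with \(\VAS\)'s zero-tests during the simulation, I would instead postpone its generation until after the simulation by using the remembered multiplicities \(\vec\lambda\), which sit at positions whose zero-testing is never invoked before the final phase.

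Correctness is then a two-way check: any accepting run of \(\VAS'\) splits uniquely into the four phases by the added control states, the guess phase produces a pair \((\vect{g}^{in},\vect{g}^{out})\in\vectSet{L}-(\vect{b}^{in},\vect{b}^{out})+(\vect{b}^{in},\vect{b}^{out})=\vectSet{L}\), the matching phases force equality with \(\vect{x}\) and \(\vect{y}\), and the middle phase is literally a \(\VAS\)-run witnessing \((\vect{x},\vect{y})\in\vectSet{X}\); conversely, any pair in \(\vectSet{X}\cap\vectSet{L}\) yields such a run by choosing the witnessing coefficients \(\lambda_i\) and splicing in the given \(\VAS\)-run. Hence \(\vectSet{X}\cap\vectSet{L}\) is a PVAS section, and by Theorem~\ref{TheoremEquivalenceRegExPVAS} it has an equivalent expression \(\Rel(\vectSet{T})\).
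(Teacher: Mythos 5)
Your reduction to linear \(\vectSet{S}\) is fine and matches the paper, but the four-phase PVASS you describe has a structural flaw that the paper's construction is specifically designed to avoid. In your input-matching phase you decrement the original input counters \(\vect{x}\) in lockstep with \(\vect{g}^{in}\) and then zero-test \(\vect{g}^{in}\); after this phase the input positions hold \(\vect{x}-\vect{g}^{in}\), and nothing forces this difference to be \(\vect{0}\) (you cannot zero-test \(\vect{x}\) without simultaneously testing every higher-priority counter). Your simulation phase then runs \(\VAS\) "on the original counters", but those counters no longer contain \(\vect{x}\) -- they contain whatever remains after the matching -- so the constructed run witnesses \((\vect{x}-\vect{g}^{in},\cdot)\in\vectSet{X}\) rather than \((\vect{x},\cdot)\in\vectSet{X}\). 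There is no mechanism in your construction that preserves or reconstitutes \(\vect{x}\) for the simulation.

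Your priority placement compounds this. You put \(\vect{g}^{in},\vect{g}^{out},\vec\lambda\) at the lowest indices, i.e.\ at \emph{higher} priority than \(\VAS\)'s own counters. After re-indexing, every zero-test of \(\VAS\) during simulation also tests these new counters; \(\vect{g}^{in}\) is zero at that point, but \(\vect{g}^{out}\) and \(\vec\lambda\) generally are not, so \(\VAS\)'s runs become blocked. You notice this for \(\vect{g}^{out}\) and propose to postpone its creation using \(\vec\lambda\), "which sit at positions whose zero-testing is never invoked before the final phase" -- but you placed \(\vec\lambda\) among the higher-priority counters, so that claim contradicts your own setup; and if you move \(\vec\lambda\) to lower priority you can no longer zero-test it at the end without also testing everything above it. The paper's proof sidesteps both problems: it first writes a \emph{copy} of the guessed start vector into a second block of \(d\) counters sitting at \emph{lower} priority than the simulated system, runs \(\VAS\) on the first block (so its zero-tests are unaffected), and only \emph{afterwards} subtracts periods and base from the surviving pair \((\vect{y},\vect{x})\), enforcing the linear constraint by fixing coordinates of the resulting section to zero. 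Preserving the input via a low-priority copy, and deferring the arithmetic of \(\vectSet{L}\) to after the simulation, are exactly the two ideas your proposal is missing.
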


\begin{proof}[Proof of Theorem \ref{TheoremPVASPetri}]
Use Lemma \ref{LemmaIntersectionSemilinear} to obtain an expression \(\Expression\) with \(\Rel(\Expression)=\vectSet{R} \cap \vectSet{S}\). We write \(\Rel(\Expression)=\bigcup_{\rho \in \Omega(\Expression)_{min}} \dirOfRun(\rho)+\vectSet{P}_{\Expression, \rho}\) as mentioned in the introduction. By Theorem \ref{TheoremFlatteningLines}, these periodic relations are smooth, hence only the flattability claim is left. Let \(\vectSet{L} \subseteq \dirOfRun(\rho)+\vectSet{P}_{\Expression, \rho}\) linear, i.e. \(\vectSet{L}=\vect{b}+\N \vect{p}_1+\dots +\N \vect{p}_r\). By property 2. in Theorem \ref{TheoremFlatteningLines}, there exist \(n_i \in \N\) and \((\vect{a}_i, \vect{b}_i) \in \vectSet{P}_{\Expression, \rho}\) such that \(\vectSet{R}_i:=\dirOfRun(\rho)+(\vect{a}_i, \vect{b}_i)+\N n_i \vect{p}_i\) is flattable for every \(i\). We define the finite set \(\vectSet{F}:=\vect{b}+\{0,\dots, n_1\} \cdot \vect{p}_1+\dots+\{0,\dots, n_r\} \cdot \vect{p}_r \subseteq \vectSet{L} \subseteq \Component(\Expression,\rho)\). Since this set is finite, it is flattable. By property 1. in Theorem \ref{TheoremFlatteningLines}, the relation 

\(\sum_{i=1}^r \vectSet{R}_i + \vectSet{F}- r \cdot \dirOfRun(\rho)=\dirOfRun(\rho)+\sum_{i=1}^r [(\vect{a}_i, \vect{b}_i)+\N n_i \vect{p}_i]+(\vectSet{F}-\dirOfRun(\rho))\) is flattable. Defining \(\vect{p}:=\sum_{i=1}^r (\vect{a}_i, \vect{b}_i)+(\vect{b}-\dirOfRun(\rho)) \in \vectSet{P}_{\Expression, \rho}\) by periodicity, the theorem is proven.
\end{proof}

Next we outline how to prove Theorem \ref{TheoremFlatteningLines}. The first step is an equivalent characterization of the transformer relation via ``self-loop on \(\vect{c}\)'' relations similar to \(\VAS\). Let us start by defining these relations, which is similar to VAS. 

Consider an expression \(\Expression^{\ast}\) and a configuration \(\vect{c} \in \N^{d'}\). We define the relation \(\vectSet{P}_{\Expression^{\ast},\vect{c}}\) via \((\vect{x}, \vect{y}) \in \vectSet{P}_{\Expression^{\ast},\vect{c}} \iff \exists \rho \in \Omega(\Expression^{\ast}): \dirOfRun(\rho)=(\vect{x}+\vect{c},\vect{y}+\vect{c})\). 

The equivalent characterization of \(\vectSet{P}_{\Expression, \rho}\) via relations \(\vectSet{P}_{\Expression^{\ast},\vect{c}}\) is as follows:

\begin{lemma}
Let \(\Expression\) be an expression, and \(\rho \in \Omega(\Expression)\). Then there exist subexpressions \(\Expression_i^{\ast}\) of \(\Expression\) and configurations \(\vect{c}_i\) occurring along \(\rho\) such that \(\vectSet{P}_{\Expression, \rho}=\vectSet{P}_{\Expression_1^{\ast}, \vect{c}_1} \circ \dots \circ \vectSet{P}_{\Expression_r^{\ast}, \vect{c}_r}\). \label{LemmaEquivalentCharacterizationOfPRho}
\end{lemma}

This leaves us with three things to prove: Firstly, Lemma \ref{LemmaEquivalentCharacterizationOfPRho} itself. Secondly, that \(\vectSet{P}_{\Expression^{\ast}, \vect{c}}\) is smooth and properties 1. and 2. of Theorem \ref{TheoremFlatteningLines} hold for \(\vectSet{P}_{\Expression^{\ast}, \vect{c}}\) (actually a slightly stronger version 2.' of property 2.). Thirdly, that composition preserves the properties of Theorem \ref{TheoremFlatteningLines}. 

We dedicate one subsection to every step, with the second coming last. This is because steps 1 and 3 are new, while step 2 is based on \cite{Leroux13}. Also, in order to understand why PVAS are flattable, Lemma \ref{LemmaEquivalentCharacterizationOfPRho} and step 3 contain the essence: Similar to VAS, pumping is a \emph{sequence} of special self-loops, a very linear object. The fact that the different parts now use different expressions \(\Expression_i^{\ast}\) is irrelevant for our composition proof.

\subsection{Proving Lemma \ref{LemmaEquivalentCharacterizationOfPRho}, Equivalent Definition of Transformer Relation}

Proof by structural induction. For simplicity, we call a relation \(\vectSet{P}_{\Expression, \rho}\) decomposable if it has an equivalent description as in the lemma. The base case of VAS sections is clear.

\(\Expression_1 \cup \Expression_2\): Let \(\rho \in \Omega(\Expression_1 \cup \Expression_2)\). W.l.o.g. \(\rho \in \Omega(\Expression_1)\). By definition of the wqo on runs in this case, we have \(\vectSet{P}_{\Expression_1 \cup \Expression_2, \rho}=\vectSet{P}_{\Expression_1, \rho}\), which is decomposable by induction.

\(\Expression_1 \circ \Expression_2\): Let \(\rho \in \Omega(\Expression_1 \circ \Expression_2)\). Write \(\rho=\rho_1 \rho_2\) with \(\rho_1 \in \Omega(\Expression_1)\), \(\rho_2 \in \Omega(\Expression_2)\), which is a unique split because of the tags on steps of the run. We claim that \(\vectSet{P}_{\Expression_1 \circ \Expression_2, \rho}=\vectSet{P}_{\Expression_1, \rho_1} \circ \vectSet{P}_{\Expression_2, \rho_2}\).

Proof of claim: ``\(\subseteq\)'': Let \(\rho' \geq \rho\). Then \(\rho'= \rho_1' \rho_2'\) such that \(\rho_1' \in \Omega(\Expression_1), \rho_2' \in \Omega(\Expression_2), \rho_1' \geq_{\Expression_1} \rho_1,\rho_2' \geq_{\Expression_2} \rho_2\). Then \(\dirOfRun(\rho')-\dirOfRun(\rho)=(\dirOfRun(\rho_1')-\dirOfRun(\rho_1)) \circ (\dirOfRun(\rho_2')-\dirOfRun(\rho_2)) \in \vectSet{P}_{\Expression_1, \rho_1} \circ \vectSet{P}_{\Expression_2, \rho_2}\), where composition is possible since \(\target(\rho_1')=\source(\rho_2'), \target(\rho_1)=\source(\rho_2)\). 

The other direction ``\(\supseteq\)'' is clear, namely concatenate \(\rho_1'\) and \(\rho_2'\). 

Now simply use that both \(\vectSet{P}_{\Expression_1, \rho_1}\) and \(\vect{P}_{\Expression_2, \rho_2}\) are decomposable by induction.

This leaves the hardest case \(\Expression^{\ast}\): For configurations \(\vect{c}, \vect{c}' \in \N^{d'}\), we write \(\vect{c} \to_{\Expression} \vect{c}'\) if there exists a run \(\eta \in \Omega(\Expression)\) such that \(\dirOfRun(\eta)=(\vect{c}, \vect{c}')\), and call \(\eta\) a \emph{generalized transition}. We want to emphasize the important fact that generalized transitions contain path information in \(\Omega(\Expression)\), and are not only an element of \(\Rel(\Expression)\). We let \(\to_{\Expression}^{\ast}\) denote its reflexive and transitive closure. A first decomposition in this case contains relations \(\vectSet{P}_{\Expression, \eta_i}\), which correspond to ``increasing existing transitions'' as mentioned in the introduction. This leads to writing \(\vectSet{P}_{\Expression^{\ast}, \rho}\) as a composition of alternating \(\vectSet{P}_{\Expression^{\ast}, \vect{c}_i}\) and \(\vectSet{P}_{\Expression, \eta_i}\).

\begin{restatable}{lemma}{LemmaPumpingAsInJancar}
Let \(\rho=(\source(\rho), \eta_1 \dots \eta_r, \target(\rho)) \in \Omega(\Expression^{\ast})\). The following equality holds: \(\vectSet{P}_{\Expression^{\ast}, \rho}= \vectSet{P}_{\Expression^{\ast}, \source(\eta_1)} \circ \vectSet{P}_{\Expression, \eta_1} \circ \vectSet{P}_{\Expression^{\ast}, \source(\eta_2)} \circ \dots \circ \vectSet{P}_{\Expression, \eta_r} \circ \vectSet{P}_{\Expression^{\ast}, \target(\rho)}\).\label{LemmaPumpingAsInJancar}
\end{restatable}

\begin{proof}
``\(\Rightarrow\)'': Let \(\rho' \geq_{\Omega(\Expression^{\ast})} \rho\). We have to prove \(\dirOfRun(\rho')-\dirOfRun(\rho) \in \) the claimed composition. Write \(\rho'=\eta_1' \dots \eta_s'\) according to the tags. By definition of the wqo, there exists an order-preserving injective function \(f \colon \{1,\dots, r\} \to \{1,\dots,s\}\) such that \(\eta_i \leq_{\Omega(\Expression)} \eta_{f(i)}'\). In particular, \(\dirOfRun(\eta_{f(i)}') \geq \dirOfRun(\eta_i)\). We define the sequence of vectors \(\vect{v}_i:=\source(\eta_{f(i)}')-\source(\eta_i)\) for \(i \in \{1,\dots, r\}\), and \(\vect{w}_{i}:=\target(\eta_{f(i)}')-\target(\eta_i)\) for \(i \in \{1,\dots, r\}\). We also define \(\vect{w}_0:=\source(\rho')-\source(\rho)\) and \(\vect{v}_{r+1}:=\target(\rho')-\target(\rho)\). For every \(i \in \{1,\dots, r\}\), the run \(\eta_{f(i)}'\) shows that 
\((\vect{v}_i, \vect{w}_i) \in \vectSet{P}_{\Expression, \eta_i}\). The runs \(\rho_i:=\eta_{f(i)+1}\dots \eta_{f(i+1)-1}\) for \(i \in \{1,\dots, r-1\}\) (these are possibly empty runs) prove that \(\target(\eta_i)+ \vect{w}_i \to_{\Expression}^{\ast} \source(\eta_{i+1})+\vect{v}_{i+1}\). Since \(\target(\eta_i)=\source(\eta_{i+1})\), we obtain \((\vect{w}_i, \vect{v}_{i+1}) \in \vectSet{P}_{\Expression^{\ast}, \target(\eta_i)}\). A similar argument proves \((\vect{w}_0, \vect{v}_1) \in \vectSet{P}_{\Expression^{\ast}, \source(\rho)}\) and \((\vect{w}_r, \vect{v}_{r+1}) \in \vectSet{P}_{\Expression^{\ast}, \target(\rho)}\). Hence \(\dirOfRun(\rho')-\dirOfRun(\rho)=(\vect{w}_0, \vect{v}_{r+1})\) is in the claimed composition.

``\(\Leftarrow\)'': Let \((\vect{w}_i)_{i=0}^r\) and \((\vect{v}_i)_{i=1}^{r+1}\) such that \((\vect{w}_i, \vect{v}_{i+1}) \in \vectSet{P}_{\Expression, \eta_i}\) for \(i \in \{1,\dots, r\}\), \((\vect{v}_i, \vect{w}_i) \in \vectSet{P}_{\Expression^{\ast}, \target(\eta_i)}\) for \(i \in \{1,\dots, r-1\}\), \((\vect{w}_0, \vect{v}_1) \in \vectSet{P}_{\Expression^{\ast}, \source(\rho)}\) and \((\vect{w}_r, \vect{v}_{r+1}) \in \vectSet{P}_{\Expression^{\ast}, \target(\rho)}\). Let \(\eta_i'\) be generalized transitions witnessing \((\vect{w}_i, \vect{v}_{i+1}) \in \vectSet{P}_{\Expression, \eta_i}\), let \(\rho_i\) for \(i \in \{1,\dots, r-1\}\) be runs witnessing \((\vect{v}_i, \vect{w}_i) \in \vectSet{P}_{\Expression^{\ast}, \target(\eta_i)}\), let \(\rho_0\) witness \((\vect{w}_0, \vect{v}_1) \in \vectSet{P}_{\Expression^{\ast},\source(\rho)}\) and \(\rho_{r+1}\) witness \((\vect{w}_r, \vect{v}_{r+1}) \in \vectSet{P}_{\Expression^{\ast}, \target(\rho)}\). Then \(\rho':=\rho_0 \eta_1' \rho_1 \dots \rho_{r-1} \eta_r' \rho_r\) fulfills \(\rho' \in \Omega(\Expression^{\ast})\) because sources and targets of the different parts coincide. In fact \(\rho' \geq_{\Omega(\Expression^{\ast})} \rho\), by choosing \(f(i)\) to point at the index at which \(\eta_i'\) occurs in \(\rho'\).
\end{proof}

This finishes proving Lemma \ref{LemmaEquivalentCharacterizationOfPRho} by observing that \(\vectSet{P}_{\Expression, \eta_i}\) are decomposable by induction.

We remark that the proof does obtain an explicit description of which \(\vectSet{P}_{\Expression_i^{\ast}, \vect{c}_i}\) to use, but we stated Lemma \ref{LemmaEquivalentCharacterizationOfPRho} this way to stress that different \(\Expression_i^{\ast}\) intertwine in the composition.

\subsection{Preserving Smoothness and Flattability Under Composition} \label{SectionSmoothnessComposition}
In this subsection we prove that if \(\vectSet{P}_{\Expression, \vect{c}}\) are smooth periodic relations fulfilling properties 1. and 2. of Theorem \ref{TheoremFlatteningLines}, then also their composition fulfills these conditions. While periodic relations are closed under composition (use e.g. Lemma \ref{LemmaExchangePlusAndComposition}), smooth periodic relations are not. We already slightly changed the definition of well-directed to accomplish this goal (compare with \cite{Leroux13, GuttenbergRE23}), but we still need to carefully choose the inductive statement. We choose to replace condition 2. by 2.' formulated as follows, which is similar to \cite{Leroux13}:

2.': For every well-directed periodic \(P \subseteq \vectSet{P}\) there exists a definable cone \(R\) such that \(\dir(P) \subseteq R\) and for every \((\vect{e}, \vect{f}) \in R\) there exist \((\vect{a}, \vect{b}) \in P, n \in \N\) such that \((\vect{a}, \vect{b})+\N n(\vect{e}, \vect{f}) \subseteq \vectSet{P}\). In case of \(\vectSet{P}_{\Expression, \rho}\) we require \(\dirOfRun(\rho)+(\vect{a}, \vect{b})+\N n (\vect{e}, \vect{f}) \subseteq \Component(\Expression,\rho)\) to be flattable.

The idea of property 2.' is to remove one basic difficulty of composition: Suddenly not all runs of \(\vectSet{P}_1\) are useful anymore, only those which can be continued into \(\vectSet{P}_2\). We will see in the proof how 2' takes care of this problem. Formally, property 2' says that even if \(P \subseteq \vectSet{P}\) is non-smooth, we can find a definable \(R\) with \(\dir(P) \subseteq R \subseteq \dir(\vectSet{P})\), and it only contains flattable directions. With the choice \(P=\vectSet{P}\) this implies property 2. Important to notice is that the choice \(R=\dir(\vectSet{P})\) would always be best if not for the very crucial \(\in P\) part, which we will use in the proof. Again, \(\in P\) is easy to motivate. Imagine one is interested in the reachability set from a fixed point, i.e. in only pumping the target. Then choose \(P:=\{\vect{0}\} \times \N^d\). Property 2 would state existence of a line \((\vect{a},\vect{b})+\N (0, \vect{w}) \subseteq \vectSet{P}_{\Expression, \rho}\). We actually want a line \((\vect{0}, \vect{b})+\N (\vect{0}, \vect{w})\), i.e. with \((\vect{a}, \vect{b}) \in P\) as in property 2'.

\begin{lemma}
Let \(\vectSet{P}_1, \vectSet{P}_2\) be smooth periodic relations fulfilling property 2.'. Then \(\vectSet{P}_1 \circ \vectSet{P}_2\) is smooth periodic fulfilling property 2'. If \(\vectSet{P}_1=\vectSet{P}_{\Expression_1, \rho_1}\) and \(\vectSet{P}_2=\vectSet{P}_{\Expression_2, \rho_2}\) for \(\rho=\rho_1 \rho_2\), then in addition the flattability claims of property 1. and 2.' hold for \(\vectSet{P}_{\Expression_1 \circ \Expression_2, \rho}=\vectSet{P}_{\Expression_1, \rho_1} \circ \vectSet{P}_{\Expression_2, \rho_2}\). \label{LemmaCompositionSmooth}
\end{lemma}

\begin{proof}
Periodic: A composition of periodic relations is again periodic by Lemma \ref{LemmaExchangePlusAndComposition}.

Well-directed: Let \((\vect{v}_n, \vect{w}_n)_n \subseteq \vectSet{P}_1 \circ \vectSet{P}_2\) be a sequence. Then there exist intermediate values \(\vect{x}_n\) such that \((\vect{v}_n, \vect{x}_n) \in \vectSet{P}_1, (\vect{x}_n, \vect{w}_n) \in \vectSet{P}_2\) for all \(n\). Since \(\vectSet{P}_1\) is well-directed, there exists a subsequence such that \((\vect{v}_{n_j}, \vect{x}_{n_j})+\N (\vect{v}_{n_k}-\vect{v}_{n_j}, \vect{x}_{n_k}-\vect{x}_{n_j}) \subseteq \vectSet{P}_1\). Since \(\vectSet{P}_2\) is well-directed we obtain additionally \((\vect{x}_{n_j}, \vect{w}_{n_j})+\N (\vect{x}_{n_k}-\vect{x}_{n_j}, \vect{w}_{n_k}-\vect{w}_{n_j}) \subseteq \vectSet{P}_2\) for some subsubsequence. Together we have \((\vect{v}_{n_j}, \vect{w}_{n_j})+\N (\vect{v}_{n_k}-\vect{v}_{n_j}, \vect{w}_{n_k}-\vect{w}_{n_j}) \subseteq \vectSet{P}_1 \circ \vectSet{P}_2\).

Property 2.': Let \(P \subseteq \vectSet{P}_1 \circ \vectSet{P}_2\) be well-directed periodic. Define \(P':=\{(\vect{v}, \vect{x}, \vect{w}) \mid (\vect{v}, \vect{w}) \in P, (\vect{v}, \vect{x}) \in \vectSet{P}_1, (\vect{x}, \vect{w}) \in \vectSet{P}_2\}\). \(P'\) is well-directed by an argument as above, but this time we even have to choose a subsubsubsequence. Consider the projections \(\pi_{12}\) and \(\pi_{23}\) to \((\vect{v}, \vect{x})\) and \((\vect{x}, \vect{w})\) respectively. \(P_1:=\pi_{12}(P') \subseteq \vectSet{P}_1\) and \(P_2:=\pi_{23}(P') \subseteq \vectSet{P}_2\) are projections of a well-directed periodic relation and therefore themselves well-directed periodic. Hence we can apply property 2.' for them to obtain definable cones \(R_1\) and \(R_2\) with \(\dir(P_i) \subseteq R_i \subseteq \dir(\vectSet{P}_i)\). We claim property 2' holds with \(R=R_1 \circ R_2\). 

First we have to show that \(\dir(P) \subseteq R_1 \circ R_2\). Let \((\vect{v}, \vect{w}) \in \dir(P)\). By definition of the set of directions, by potentially scaling with a positive integer, there exists \((\vect{v}_0, \vect{w}_0)\) such that \((\vect{v}_n, \vect{w}_n):=(\vect{v}_0, \vect{w}_0)+n (\vect{v}, \vect{w}) \in \vectSet{P}_1 \circ \vectSet{P}_2\) for every \(n\). Therefore there exist intermediate values \(\vect{x}_n\) such that \((\vect{v}_n, \vect{x}_n, \vect{w}_n) \in P'\) for all \(n\). Since \(P'\) is well-directed, there exists a subsequence such that \((\vect{v}_{n_j}, \vect{x}_{n_j}, \vect{w}_{n_j})+\N (\vect{v}_{n_k}-\vect{v}_{n_j}, \vect{x}_{n_k}-\vect{x}_{n_j}, \vect{w}_{n_k}-\vect{w}_{n_j}) \subseteq P'\). Hence \((\vect{v}_{n_k}-\vect{v}_{n_j}, \vect{x}_{n_k}-\vect{x}_{n_j}) \in \dir(P_1) \subseteq R_1\) and \((\vect{x}_{n_k}-\vect{x}_{n_j}, \vect{w}_{n_k}-\vect{w}_{n_j}) \in \dir(P_2) \subseteq R_2\). Therefore \((\vect{v}_{n_k}-\vect{v}_{n_j}, \vect{w}_{n_k}-\vect{w}_{n_j})=(n_k-n_j)(\vect{v}, \vect{w}) \in R_1 \circ R_2\). This implies \((\vect{v}, \vect{w}) \in R_1 \circ R_2\).

Now let \((\vect{e}, \vect{g}) \in R\). Then there exists \(\vect{f}\) such that \((\vect{e}, \vect{f}) \in R_1\) and \((\vect{f}, \vect{g}) \in R_2\). By definition of \(R_i\), by potentially scaling, there exist \((\vect{a}_1, \vect{b}_1) \in P_1\) and \((\vect{b}_2, \vect{c}_2) \in P_2\) such that \((\vect{a}_1, \vect{b}_1)+\N (\vect{e}, \vect{f}) \subseteq \vectSet{P}_1\), and \((\vect{b}_2, \vect{c}_2)+\N (\vect{f}, \vect{g}) \subseteq \vectSet{P}_2\). Since \(P_1\) and \(P_2\) are projections of \(P'\), there exist \(\vect{c}_1\) and \(\vect{a}_2\) such that \((\vect{a}_i, \vect{b}_i, \vect{c}_i) \in P'\) for \(i \in \{1,2\}\). Hence \((\vect{a}_2, \vect{b}_2) \in P_1\) and \((\vect{b}_1, \vect{c}_1) \in P_2\). By periodicity of \(\vectSet{P}_1\), we have \((\vect{a}_1+\vect{a}_2, \vect{b}_1+\vect{b}_2)+ \N (\vect{e}, \vect{f}) \subseteq \vectSet{P}_1\) and similarly by periodicity of \(\vectSet{P}_2\) we have \((\vect{b}_1+\vect{b}_2, \vect{c}_1+\vect{c}_2)+\N (\vect{f}, \vect{g}) \subseteq \vectSet{P}_2\). Altogether we have \((\vect{a}_1+\vect{a}_2, \vect{c}_1+\vect{c}_2) \in P\) and \((\vect{a}_1+\vect{a}_2, \vect{c}_1+\vect{c}_2)+\N (\vect{e}, \vect{g}) \subseteq \vectSet{P}_1 \circ \vectSet{P}_2\) as required. \((\vect{a}_1, \vect{b}_1) \in P_1\) was crucial here such that we could obtain a fitting \(\vect{c}_1\), and accordingly for \((\vect{b}_2, \vect{c}_2) \in P_2\).

Asymptotically definable: Define \(P:=\vectSet{P}_1 \circ \vectSet{P}_2\). By property 2', we have \(\dir(\vectSet{P}_1 \circ \vectSet{P}_2)=R_1 \circ R_2\), which as composition of definable cones is itself a definable cone.

Flattability claim in 2': We proved 2' and constructed the direction using well-directedness. By Lemma \ref{LemmaBasicflattabilityProperties}(1.), relations obtained this way from the wqo. on runs are flattable.

Property 1.: Let \(\vectSet{R}, \vectSet{R}' \subseteq \dirOfRun(\rho)+\vectSet{P}_{\Expression_1 \circ \Expression_2, \rho}\) be flattable w.r.t. \(\Expression_1 \circ \Expression_2\). We have to prove that \(\vectSet{R}+\vectSet{R}'-\dirOfRun(\rho)\) is flattable w.r.t. \(\Expression_1 \circ \Expression_2\). By definition of flattable, there exist relations \(\vectSet{R}_1, \vectSet{R}_1'\) flattable w.r.t. \(\Expression_1\) and \(\vectSet{R}_2, \vectSet{R}_2'\) flattable w.r.t. \(\Expression_2\) such that \(\vectSet{R} \subseteq \vectSet{R}_1 \circ \vectSet{R}_2\) and \(\vectSet{R}' \subseteq \vectSet{R}_1' \circ \vectSet{R}_2'\). Hence \(\vectSet{R}+ \vectSet{R}'-\dirOfRun(\rho) \subseteq (\vectSet{R}_1 \circ \vectSet{R}_2) + (\vectSet{R}_1' \circ \vectSet{R}_2')-(\dirOfRun(\rho_1) \circ \dirOfRun(\rho_2))\). By Lemma \ref{LemmaExchangePlusAndComposition} we obtain \(\vectSet{R}+\vectSet{R}'-\dirOfRun(\rho) \subseteq (\vectSet{R}_1+\vectSet{R}_1'- \dirOfRun(\rho_1)) \circ (\vectSet{R}_2+\vectSet{R}_2'- \dirOfRun(\rho_2))\). Applying property 1. for the subexpressions \(\Expression_1\) and \(\Expression_2\), we obtain the claim.
\end{proof}

\subsection{Transformer Relations are Smooth with Flattable Directions} \label{SubsectionEStarSmooth}

In this subsection we prove that the transformer relation \(\vectSet{P}_{\Expression^{\ast}, \vect{c}}\) is a smooth periodic relation fulfilling properties 1 and 2' (see Section \ref{SectionSmoothnessComposition}). We start with a reminder of the notation.

We write \(\vect{x} \to_{\Expression} \vect{y}\) for \((\vect{x}, \vect{y}) \in \Rel(\Expression)\) and denote its reflexive transitive closure as \(\to_{\Expression}^{\ast}\). Remember that \((\vect{x}, \vect{y}) \in \vectSet{P}_{\Expression^{\ast}, \vect{c}} \iff \vect{c}+\vect{x} \to_{\Expression}^{\ast} \vect{c}+\vect{y}\). We first prove that \(\vectSet{P}_{\Expression^{\ast}, \vect{c}}\) is periodic.

\begin{lemma}
Let \(\vect{c}+\vect{x}_1 \to_{\Expression}^{\ast} \vect{c}+ \vect{y}_1\) and \(\vect{c}+\vect{x}_2 \to_{\Expression}^{\ast} \vect{c}+\vect{y}_2\). Then \(\vect{c}+\vect{x}_1+\vect{x}_2 \to_{\Expression}^{\ast} \vect{c}+\vect{y}_1+ \vect{y}_2\).\label{LemmaPCPeriodic}
\end{lemma}

\begin{proof}
Since \(\ast\) in expressions is only used on monotone relations \(\Rel(\Expression)\), \(\to_{\Expression}\) is monotone. Hence also \(\to_{\Expression}^{\ast}\) is monotone. By monotonicity, \(\vect{c}+\vect{x}_1+\vect{x}_2 \to_{\Expression}^{\ast} \vect{c}+\vect{y}_1+\vect{x}_2 \to_{\Expression}^{\ast} \vect{c}+\vect{y}_1+\vect{y}_2\).
\end{proof}

That \(\vectSet{P}_{\Expression^{\ast}, \vect{c}}\) is well-directed follows from Lemma \ref{LemmaBasicflattabilityProperties} (1.). 

The proof of Property 1. is similar to the above proof of Lemma \ref{LemmaPCPeriodic}. Indeed, the lemma did not duplicate \(\vect{c}\), so in different notation \(\vect{r}_1 \to_{\Expression}^{\ast} \vect{s}_1\) and \(\vect{r}_2 \to_{\Expression}^{\ast} \vect{s}_2\) imply \(\vect{r}_1+\vect{r}_2-\vect{c} \to_{\Expression}^{\ast} \vect{s}_1+\vect{s}_2-\vect{c}\). This leaves proving that \(\vectSet{P}_{\Expression^{\ast},\vect{c}}\) fulfills property 2.' and hence is asymptotically definable.

Since \((\vect{0}, \vect{0}) \in \vectSet{P}_{\Expression^{\ast}, \vect{c}}\), monotonicity implies that \(\vectSet{P}_{\Expression^{\ast}, \vect{c}}\) is reflexive. In \cite[Section VIII]{Leroux13} many lemmas were proven for reflexive periodic relations, we reuse many of them. This leads to a long sequence of restating lemmas, we prefer to end this part of the main text by sketching the idea, for the formal proof see Appendix \ref{SubsectionAppendixTStarCase}.

Let \(P \subseteq \vectSet{P}_{\Expression^{\ast}, \vect{c}}\) periodic. In order to not confuse \(\Omega(\Expression)\) and \(\Omega(\Expression^{\ast})\), we write \(\eta \in \Omega(\Expression)\) and \(\rho \in \Omega(\Expression^{\ast})\). We call \(\eta\) a \emph{generalized transition}. Write \(\gamma=(\Expression^{\ast}, \vect{c}, P)\) and denote by \(\Omega_{\gamma}\) the set of runs \(\rho \in \Omega(\Expression^{\ast})\) such that \(\dirOfRun(\rho) \in (\vect{c}, \vect{c})+P\). The idea is to split counters into bounded and unbounded counters for \(\Omega_{\gamma}\). The bounded counters will all be stored in the states of a graph, and this leads to pumping corresponding to cycles in the graph. Namely any transition sequence will correspond to a path in the graph, and since bounded counters cannot be pumped, any pumping sequence has to restore all the bounded counters, i.e. be a cycle in the graph. The unbounded counters on the other hand will all be unbounded \emph{simultaneously}, at which point the condition that counters have to stay non-negative will intuitively not influence possible behaviours anymore.

We hence define the graph of bounded counters. Let \(Q_{\gamma} \subseteq \N^{d'}\) be the set of configurations occurring on some run \(\rho \in \Omega_{\gamma}\). We denote by \(I_{\gamma}\) the set of indices such that \(\{q(i) \mid q \in Q_{\gamma}\}\) is finite, i.e. the set of bounded counters. We consider the projection \(\pi_{\gamma} \colon \N^{d'} \to \N^{I_{\gamma}}\) to the bounded counters. We now define a finite directed multigraph \(G_{\gamma}\) with vertices \(S_{\gamma}:=\pi_{\gamma}(Q_{\gamma})\). For edges \((s,t)\), first consider the set \(\Omega_{s,t}\) of generalized transitions \(\eta\) with \(\pi_{\gamma}(\source(\eta))=s\) and \(\pi_{\gamma}(\target(\eta))=t\), which occur in some run \(\rho \in \Omega_{\gamma}\). We add an edge \((s,t)\) for every minimal (w.r.t. \(\leq_{\Omega(\Expression)}\)) element of \(\Omega_{s,t}\). We let \(s_{\gamma}=\pi_{\gamma}(\vect{c})\) denote the ``initial state'' for this graph.

Clearly runs correspond to paths in \(G_{\gamma}\), since the graph has projections of configurations as states, and generalized transitions as edges. Regarding unbounded counters, the proof that all of them are unbounded \emph{simultaneously} is in Appendix \ref{SubsectionAppendixTStarCase}. Finally, we then obtain a formula for a definable cone \(R\) overapproximating \(\dir(P)\) by first considering the finitely many minimal cycles in \(G_{\gamma}\). Every cycle \(\eta_1 \dots \eta_m\) provides us with a smooth periodic relation \(\vectSet{P}_{\Expression, \eta_1} \circ \dots \circ \vectSet{P}_{\Expression, \eta_m}\) of pumping possible along this cycle. These relations are smooth periodic by induction and Lemmas \ref{LemmaEquivalentCharacterizationOfPRho} and \ref{LemmaCompositionSmooth}. We conclude using \cite[Theorem VII.1]{Leroux13}, whose statement is essentially the following: If \(\vectSet{P}_1, \dots, \vectSet{P}_m\) are \emph{reflexive} asymptotically definable periodic relations, then \((\bigcup_{i=1}^m \vectSet{P}_i)^{\ast}\) is also asymptotically definable.

\section{Conclusion} \label{SectionConclusion}


We have given a new characterization of PVAS sections as RegEx over VAS sections, and extended the abstract properties of almost semilinear sets to PVAS sections. We have concluded that therefore if the reachability relation of a PVAS is semilinear, then it is flattable, and moreover if a configuration is not reachable, then it is separated by a semilinear inductive invariant. This leaves two main unknowns for PVAS which are known for VAS: 1) The decidability of the semilinearity problem, that is, given a PVAS, decide if its reachability relation/set is semilinear. 2) The complexity of the reachability problem.

We leave these as future work. We believe that combining our characterization of PVAS sections and wqo on runs with ideas from \cite{GuttenbergRE23} might allow for progress on these open problems.

Furthermore, this research can also be viewed as progress towards the open question whether the reachability problem for Pushdown/Grammar VASS is decidable, where Pushdown VASS have a stack in addition to the counters. Namely it is known \cite{AtigG11} that Priority VASS are equivalent to a subclass of Pushdown VASS.

\bibliography{Main.bib}

\begin{thebibliography}{10}

\bibitem{AtigG11}
Mohamed~Faouzi Atig and Pierre Ganty.
\newblock Approximating petri net reachability along context-free traces.
\newblock In {\em {FSTTCS}}, volume~13 of {\em LIPIcs}, pages 152--163. Schloss
  Dagstuhl - Leibniz-Zentrum f{\"{u}}r Informatik, 2011.

\bibitem{BlondinL23}
Michael Blondin and Fran{\c{c}}ois Ladouceur.
\newblock Population protocols with unordered data.
\newblock In {\em {ICALP}}, volume 261 of {\em LIPIcs}, pages 115:1--115:20.
  Schloss Dagstuhl - Leibniz-Zentrum f{\"{u}}r Informatik, 2023.

\bibitem{Bonnet11}
R{\'{e}}mi Bonnet.
\newblock The reachability problem for vector addition system with one
  zero-test.
\newblock In {\em {MFCS}}, volume 6907 of {\em Lecture Notes in Computer
  Science}, pages 145--157. Springer, 2011.

\bibitem{Bonnet12}
R{\'e}mi Bonnet.
\newblock {\em Theory of Well-Structured Transition Systems and Extended
  Vector-Addition Systems}.
\newblock Th{\`e}se de doctorat, Laboratoire Sp{\'e}cification et
  V{\'e}rification, ENS Cachan, France, 2013.
\newblock URL:
  \url{http://www.lsv.ens-cachan.fr/Publis/PAPERS/PDF/bonnet-phd13.pdf}.

\bibitem{ClementeCLP17}
Lorenzo Clemente, Wojciech Czerwinski, Slawomir Lasota, and Charles Paperman.
\newblock Separability of reachability sets of vector addition systems.
\newblock In {\em {STACS}}, volume~66 of {\em LIPIcs}, pages 24:1--24:14.
  Schloss Dagstuhl - Leibniz-Zentrum f{\"{u}}r Informatik, 2017.

\bibitem{CzerwinskiLLLM19}
Wojciech Czerwinski, Slawomir Lasota, Ranko Lazic, J{\'{e}}r{\^{o}}me Leroux,
  and Filip Mazowiecki.
\newblock The reachability problem for petri nets is not elementary.
\newblock In {\em {STOC}}, pages 24--33. {ACM}, 2019.

\bibitem{CzerwinskiO21}
Wojciech Czerwinski and Lukasz Orlikowski.
\newblock Reachability in vector addition systems is ackermann-complete.
\newblock In {\em {FOCS}}, pages 1229--1240. {IEEE}, 2021.

\bibitem{FinkelLS18}
Alain Finkel, J{\'{e}}r{\^{o}}me Leroux, and Gr{\'{e}}goire Sutre.
\newblock Reachability for two-counter machines with one test and one reset.
\newblock In {\em {FSTTCS}}, volume 122 of {\em LIPIcs}, pages 31:1--31:14.
  Schloss Dagstuhl - Leibniz-Zentrum f{\"{u}}r Informatik, 2018.

\bibitem{GuttenbergRE23}
Roland Guttenberg, Mikhail~A. Raskin, and Javier Esparza.
\newblock Geometry of reachability sets of vector addition systems.
\newblock In {\em {CONCUR}}, volume 279 of {\em LIPIcs}, pages 6:1--6:16.
  Schloss Dagstuhl - Leibniz-Zentrum f{\"{u}}r Informatik, 2023.

\bibitem{Haase18}
Christoph Haase.
\newblock A survival guide to presburger arithmetic.
\newblock {\em {ACM} {SIGLOG} News}, 5(3):67--82, 2018.

\bibitem{HaaseZ19}
Christoph Haase and Georg Zetzsche.
\newblock Presburger arithmetic with stars, rational subsets of graph groups,
  and nested zero tests.
\newblock In {\em {LICS}}, pages 1--14. {IEEE}, 2019.

\bibitem{HofmanLLLST16}
Piotr Hofman, Slawomir Lasota, Ranko Lazic, J{\'{e}}r{\^{o}}me Leroux, Sylvain
  Schmitz, and Patrick Totzke.
\newblock Coverability trees for petri nets with unordered data.
\newblock In {\em FoSSaCS}, volume 9634 of {\em Lecture Notes in Computer
  Science}, pages 445--461. Springer, 2016.

\bibitem{HopcroftP79}
John~E. Hopcroft and Jean{-}Jacques Pansiot.
\newblock On the reachability problem for 5-dimensional vector addition
  systems.
\newblock {\em Theor. Comput. Sci.}, 8:135--159, 1979.

\bibitem{Jancar90}
Petr Jancar.
\newblock Decidability of a temporal logic problem for petri nets.
\newblock {\em Theor. Comput. Sci.}, 74(1):71--93, 1990.

\bibitem{Kosaraju82}
S.~Rao Kosaraju.
\newblock Decidability of reachability in vector addition systems.
\newblock In {\em {STOC}}, pages 267--281. {ACM}, 1982.

\bibitem{Lambert92}
Jean{-}Luc Lambert.
\newblock A structure to decide reachability in petri nets.
\newblock {\em Theor. Comput. Sci.}, 99(1):79--104, 1992.

\bibitem{LazicS16}
Ranko Lazic and Sylvain Schmitz.
\newblock The complexity of coverability in {\(\nu\)}-petri nets.
\newblock In {\em {LICS}}, pages 467--476. {ACM}, 2016.

\bibitem{Leroux09}
J{\'{e}}r{\^{o}}me Leroux.
\newblock The general vector addition system reachability problem by presburger
  inductive invariants.
\newblock In {\em {LICS}}, pages 4--13. {IEEE} Computer Society, 2009.

\bibitem{Leroux11}
J{\'{e}}r{\^{o}}me Leroux.
\newblock Vector addition system reachability problem: {A} short self-contained
  proof.
\newblock In {\em {LATA}}, volume 6638 of {\em Lecture Notes in Computer
  Science}, pages 41--64. Springer, 2011.

\bibitem{Leroux12}
J{\'{e}}r{\^{o}}me Leroux.
\newblock Vector addition systems reachability problem {(A} simpler solution).
\newblock In {\em Turing-100}, volume~10 of {\em EPiC Series in Computing},
  pages 214--228. EasyChair, 2012.

\bibitem{Leroux13}
J{\'{e}}r{\^{o}}me Leroux.
\newblock Presburger vector addition systems.
\newblock In {\em {LICS}}, pages 23--32. {IEEE} Computer Society, 2013.
\newblock URL: \url{https://hal.science/hal-00780462v2}.

\bibitem{Leroux21}
J{\'{e}}r{\^{o}}me Leroux.
\newblock The reachability problem for petri nets is not primitive recursive.
\newblock In {\em {FOCS}}, pages 1241--1252. {IEEE}, 2021.

\bibitem{LerouxPSS19}
J{\'{e}}r{\^{o}}me Leroux, M.~Praveen, Philippe Schnoebelen, and Gr{\'{e}}goire
  Sutre.
\newblock On functions weakly computable by pushdown petri nets and related
  systems.
\newblock {\em Log. Methods Comput. Sci.}, 15(4), 2019.

\bibitem{LerouxPS14}
J{\'{e}}r{\^{o}}me Leroux, M.~Praveen, and Gr{\'{e}}goire Sutre.
\newblock Hyper-ackermannian bounds for pushdown vector addition systems.
\newblock In {\em {CSL-LICS}}, pages 63:1--63:10. {ACM}, 2014.

\bibitem{LerouxS20}
J{\'{e}}r{\^{o}}me Leroux and Gr{\'{e}}goire Sutre.
\newblock Reachability in two-dimensional vector addition systems with states:
  One test is for free.
\newblock In {\em {CONCUR}}, volume 171 of {\em LIPIcs}, pages 37:1--37:17.
  Schloss Dagstuhl - Leibniz-Zentrum f{\"{u}}r Informatik, 2020.

\bibitem{LerouxST15}
J{\'{e}}r{\^{o}}me Leroux, Gr{\'{e}}goire Sutre, and Patrick Totzke.
\newblock On the coverability problem for pushdown vector addition systems in
  one dimension.
\newblock In {\em {ICALP} {(2)}}, volume 9135 of {\em Lecture Notes in Computer
  Science}, pages 324--336. Springer, 2015.

\bibitem{Mayr81}
Ernst~W. Mayr.
\newblock An algorithm for the general petri net reachability problem.
\newblock In {\em {STOC}}, pages 238--246. {ACM}, 1981.

\bibitem{PiskacK08}
Ruzica Piskac and Viktor Kuncak.
\newblock Linear arithmetic with stars.
\newblock In {\em {CAV}}, volume 5123 of {\em Lecture Notes in Computer
  Science}, pages 268--280. Springer, 2008.

\bibitem{Reinhardt08}
Klaus Reinhardt.
\newblock Reachability in petri nets with inhibitor arcs.
\newblock In {\em {RP}}, volume 223 of {\em Electronic Notes in Theoretical
  Computer Science}, pages 239--264. Elsevier, 2008.

\bibitem{RosaVelardoF11}
Fernando Rosa{-}Velardo and David de~Frutos{-}Escrig.
\newblock Decidability and complexity of petri nets with unordered data.
\newblock {\em Theor. Comput. Sci.}, 412(34):4439--4451, 2011.

\end{thebibliography}

\appendix


\section{Proofs of Section \ref{SectionProofFinalTheorem}} \label{SectionAppendixProofFinalTheorem}

In this section we prove the lemmas of Section \ref{SectionProofFinalTheorem} in the order they are enumerated.

\LemmaBasicExpressionflattability*

\begin{proof}
We prove the properties one after the other, each by structural induction. 

Property 1.: Base case: See \cite[Lemma VI.3]{Leroux13}.

\(\Expression_1 \cup \Expression_2\) follows by induction, since \(\rho \leq_{\Omega(\Expression_1 \cup \Expression_2)} \rho'\) implies w.l.o.g. \(\rho \leq_{\Omega(\Expression_1)} \rho'\).

\(\Expression_1 \circ \Expression_2\): Write \(\rho'=\rho_1' \rho_2'\) and \(\rho=\rho_1 \rho_2\). Since \(\rho_i' \geq_{\Omega(\Expression_i)} \rho_i\) for \(i \in \{1,2\}\), by induction, \(\dirOfRun(\rho_i)+\N(\dirOfRun(\rho_i')-\dirOfRun(\rho_i))\) is flattable w.r.t. \(\Expression_i\). We have \(\dirOfRun(\rho)+\N(\dirOfRun(\rho')-\dirOfRun(\rho))\) \(\subseteq [\dirOfRun(\rho_1)+\N(\dirOfRun(\rho_1')-\dirOfRun(\rho_1))] \circ [\dirOfRun(\rho_2)+\N(\dirOfRun(\rho_2')-\dirOfRun(\rho_2))]\) by Lemma \ref{LemmaExchangePlusAndComposition}, and hence flattable w.r.t. \(\Expression_1 \circ \Expression_2\).

\(\Expression^{\ast}\): Let \(\rho' \geq_{\Omega(\Expression^{\ast})} \rho\). Write \(\rho'=\eta_1' \dots \eta_s'\) and \(\rho=\eta_1 \dots \eta_r\) according to the tags. By definition of the wqo, there exists an order-preserving injective function \(f \colon \{1,\dots, r\} \to \{1,\dots,s\}\) such that \(\eta_i \leq_{\Omega(\Expression)} \eta_{f(i)}\). We write \(\rho'=\rho_0 \eta_{f(1)}' \rho_1 \dots \rho_{r-1} \eta_{f(r)}' \rho_r\). By induction, for every \(i\) the relation \(\vectSet{R}_i:=\dirOfRun(\eta_i)+\N(\dirOfRun(\eta_{f(i)})-\dirOfRun(\eta_i))\) is flattable w.r.t. \(\Expression\). We define linear path schemes \((\vectSet{S}_i)_{i=0}^r\) by \(\vectSet{S}_i:=\mtc(\rho_i)\). The relation \(\dirOfRun(\rho)+\N(\dirOfRun(\rho')-\dirOfRun(\rho)) \subseteq \vectSet{S}_0 \circ \vectSet{R}_1 \circ \dots \circ \vectSet{R}_r \circ \vectSet{S}_r\) by arguments similar to the proof of Lemma \ref{LemmaPumpingAsInJancar}. Hence this relation is flattable w.r.t. \(\Expression^{\ast}\) by definition.

Property 2.: Base case: Follows from the closure properties of semilinear sets.

\(\Expression_1 \cup \Expression_2\): Let \(\vectSet{R}_1, \vectSet{R}_1'\) flattable w.r.t. \(\Expression_1\), \(\vectSet{R}_2,\vectSet{R}_2'\) flattable w.r.t. \(\Expression_2\) such that \(\vectSet{R} \subseteq \vectSet{R}_1 \cup \vectSet{R}_2\) and \(\vectSet{R}' \subseteq \vectSet{R}_1' \cup \vectSet{R}_2'\). Then \(\vectSet{R} \cup \vectSet{R}' \subseteq (\vectSet{R}_1 \cup \vectSet{R}_1') \cup (\vectSet{R}_2 \cup \vectSet{R}_2')\), finish by induction. 

\(\Expression_1 \circ \Expression_2\): By a similar exchanging of operations as Lemma \ref{LemmaExchangePlusAndComposition} with \(\circ, \cup\).

\(\Expression^{\ast}\): Let \(\vectSet{R} \subseteq Flat:=\vectSet{S}_0 \circ \vectSet{R}_1 \circ \dots \circ \vectSet{R}_r \circ \vectSet{S}_r\) and \(\vectSet{R}' \subseteq Flat':=\vectSet{S}_0' \circ \vectSet{R}_1' \circ \dots \circ \vectSet{R}_k' \circ \vectSet{S}_k'\) with linear path schemes \(\vectSet{S}_i, \vectSet{S}_i'\) and \(\vectSet{R}_i, \vectSet{R}_i'\) flattable w.r.t. \(\Expression\). Since every subexpression \(\Expression'\) of \(\Expression\) fulfills \(in(\Expression'), out(\Expression') \geq in(\Expression)\) (the assumption we guaranteed monotonicity with), every subexpression \(\Expression'\) is monotone in the \(j\)-th last counter for every \(j \in \{1,\dots, in(\Expression)\}\). Therefore we can assume that the \(\vectSet{R}_i, \vectSet{R}_i'\) are monotone in the \(j\)-th last counter for all \(j \in \{1,\dots, in(\Expression)\}\). By furthermore adding the point \((\vect{0}, \vect{0})\) (single points are always flattable, and by induction we can take the union), the \(\vectSet{R}_i, \vectSet{R}_i'\) are hence w.l.o.g. reflexive. Since moreover the linear path schemes \(\vectSet{S}_i, \vectSet{S}_i'\) are reflexive by definition and reflexive relations are closed under (same dimension) composition, the relations \(Flat, Flat'\) are reflexive. Hence \(Flat \circ Flat'\) contains \(Flat\) and \(Flat'\). In particular \(\vectSet{R} \cup \vectSet{R}' \subseteq Flat \circ Flat'\) is flattable w.r.t. \(\Expression^{\ast}\) by definition.

Property 3.: Base case: See \cite[Theorem XI.2]{Leroux13}.

\(\Expression_1 \cup \Expression_2\): Let \(\VAS=(Q,E)\) be the PVAS generating the expression. Let \(\vectSet{R} \subseteq \Rel(\Expression_1 \cup \Expression_2)\) be flattable. By definition \(\vectSet{R} \subseteq \vectSet{R}_1 \cup \vectSet{R}_2\) where \(\vectSet{R}_i\) is flattable w.r.t. \(\Expression_i\). By induction, there exist words \(w_1 \dots w_r \in E^{\ast}\) and \(w_1' \dots w_s' \in E^{\ast}\) such that \(\vectSet{R}_1 \subseteq \to_{w_1}^{\ast} \circ \dots \circ \to_{w_r}^{\ast}\) and \(\vectSet{R}_2 \subseteq \to_{w_1'}^{\ast} \circ \dots \circ \to_{w_s'}^{\ast}\). Then \(\vectSet{R}_1 \cup \vectSet{R}_2 \subseteq \to_{w_1}^{\ast} \circ \dots \circ \to_{w_r}^{\ast} \circ \to_{w_1'}^{\ast} \circ \dots \circ \to_{w_s'}^{\ast}\) is flat.

\(\Expression_1 \circ \Expression_2\): Use the same combination of words as above.

\(\Expression^{\ast}\): Let \(\vectSet{R} \subseteq \vectSet{S}_0 \circ \vectSet{R}_1 \circ \dots \vectSet{R}_r \circ \vectSet{S}_r\) be flattable w.r.t. \(\Expression^{\ast}\). Write \(\vectSet{S}_i=\mtc(\dirOfRun(\rho_{i,1})) \circ \dots \circ \mtc(\dirOfRun(\rho_{i, r_i}))\). Let \(E(\rho_{i,j})\) be the transition word of the run \(\rho_{i,j}\). Define \(Flat_i:=\to_{E(\rho_{i,1})}^{\ast} \circ \dots \circ \to_{E(\rho_{i, r_i})}^{\ast}\). By induction, \(\vectSet{R}_i \subseteq Flat_i':=\to_{w_{i,1}}^{\ast} \circ \dots \circ \to_{w_{i,s_i}}^{\ast}\) for some transition words \(w_{i,1}, \dots ,w_{i,s_i} \in E^{\ast}\). Then \(\vectSet{R} \subseteq Flat_0 \circ Flat_1' \circ \dots \circ Flat_r' \circ Flat_r\) is transition word flattable w.r.t. \(\VAS\).
\end{proof}

\LemmaExchangePlusAndComposition*

\begin{proof}
Let \((\vect{v}_i, \vect{x}_i) \in \vectSet{R}_i, (\vect{x}_i, \vect{w}_i) \in \vectSet{R}_i'\) for \(i \in \{1,\dots,r\}\). \((\sum_{i=1}^r \vect{v}_i, \sum_{i=1}^r \vect{x}_i) \in \sum_{i=1}^r \vectSet{R}_i\) and \((\sum_{i=1}^r \vect{x}_i, \sum_{i=1}^r \vect{w}_i) \in \sum_{i=1}^r \vectSet{R}_i'\) imply \(\sum_{i=1}^r (\vect{v}_i, \vect{w}_i) \in (\sum_{i=1}^r \vectSet{R}_i) \circ (\sum_{i=1}^r \vectSet{R}_i')\).
\end{proof}

\LemmaIntersectionSemilinear*

\begin{proof}
W.l.o.g. \(\vectSet{S}\) is linear, since PVAS sections are clearly closed under union. Let \(\vectSet{X}\) be defined by a \(d\)-dimensional PVAS \(\VAS\). We give a \(6d\)-dimensional PVASS \(\VAS'\) defining \(\vectSet{X} \cap \vectSet{S}\) as follows. It has four states: In the first state, it writes \((\vect{x}, \vect{x})\) on its first \(2d\) counters for any vector \(\vect{x} \in \N^d\). Then it non-deterministically moves to the next state, where it applies \(\VAS\) on the first \(d\) counters (we would want to use the second \(d\) counters because this would express exactly \(\rightarrow_{\VAS}^{\ast}\), but zero tests have to be on the frontmost counters). Then it non-deterministically moves to state 3 while adding the base of \(\vectSet{S}\) to the middle \(2d\) counters. In state 3 it adds any number of periods of \(\vectSet{S}\). It then non-deterministically moves to state 4. In state 4 it checks that the first \(2d\) and the middle \(2d\) counters coincide, while copying them to the last \(2d\) counters. Usually such a check is done using transitions \((-\vect{e}_i, -\vect{e}_i, +\vect{e}_i)\) for unit vectors \(\vect{e}_i \in \N^{2d}\), but here we have to be careful of possible necessary permutations (like the first \(2d\) counters being ordered the wrong way).
\end{proof}

\subsection{Proofs of Section \ref{SubsectionEStarSmooth}} \label{SubsectionAppendixTStarCase}

Fix a pair \(\gamma=(\Expression^{\ast}, \vect{c}, P)\), where \(P \subseteq \vectSet{P}_{\vect{c}}\) is a periodic relation. For convenience, we repeat the definition of the graph \(G_{\gamma}\) and related objects here. Remember that the goal is to prove property \(2.'\) for \(\vectSet{P}_{\Expression^{\ast},\vect{c}}\), i.e. to determine a definable cone relation \(\vectSet{R}_{\gamma}\) containing \(\dir(P)\) and such that for every \((\vect{e}, \vect{f}) \in \vectSet{R}_{\gamma}\), there exists \(n \in \N\) and \((\vect{a}, \vect{b}) \in P\) (\(\in P\) being crucial) such that \(\dirOfRun(\rho)+(\vect{a},\vect{b})+\N (\vect{e}, \vect{f})\) is flattable w.r.t. \(\Expression\).

We denote by \(\Omega_{\gamma}\) the set of runs \(\rho \in \Omega(\Expression^{\ast})\) such that \(\dirOfRun(\rho) \in (\vect{c}, \vect{c})+P\). \(\Omega_{\gamma} \neq \emptyset\) since the run \((\vect{c}, \epsilon, \vect{c})\) not performing any transitions is in \(\Omega_{\gamma}\). The idea is to split counters into bounded and unbounded counters for \(\Omega_{\gamma}\). Hence let \(Q_{\gamma} \subseteq \N^{d'}\) be the set of configurations obtained along some run \(\rho \in \Omega_{\gamma}\). We denote by \(I_{\gamma}\) the set of indices such that \(\{\vect{q}(i) \mid \vect{q} \in \vectSet{Q}_{\gamma}\}\) is finite, i.e. the set of bounded counters. We consider the projection \(\pi_{\gamma} \colon \N^{d'} \to \N^{I_{\gamma}}\) to the bounded counters. We define a finite directed multigraph \(G_{\gamma}\) with vertices \(S_{\gamma}:=\pi_{\gamma}(\vectSet{Q}_{\gamma})\). Consider the set \(\Omega_{s,t}\) of generalized transitions \(\eta\) with \(\pi_{\gamma}(\source(\eta))=s\) and \(\pi_{\gamma}(\target(\eta))=t\), and such that \(\eta\) occurs in some run \(\rho \in \Omega_{\gamma}\). We add an edge \((s,t)\) for every minimal (w.r.t. \(\leq_{\Omega(\Expression)}\)) element of \(\Omega_{s,t}\). We let \(s_{\gamma}=\pi_{\gamma}(\vect{c})\) denote the ``initial state'' for this graph.

Remember the outline: Prove that runs \(\rho \in \Omega(\Expression^{\ast})\) correspond to cycles in \(G_{\gamma}\), and that unbounded counters are simultaneously unbounded. First we define ``pumping vectors''.

An \emph{intraproduction} for \(\gamma\) is a vector \(\vect{h} \in \N^{d'}\) such that \(\vect{c}+\vect{h} \in \vectSet{Q}_{\gamma}\). We denote by \(\vectSet{H}_{\gamma}\) the set of intraproductions for \(\gamma\). This set is periodic by the following lemma, which is proven the same way as in \cite{Leroux13}, but whose proof shows the power of monotonicity which the following proofs also build on.

\begin{lemma}
\(\vectSet{Q}_{\gamma}+\vectSet{H}_{\gamma} \subseteq \vectSet{Q}_{\gamma}\).
\end{lemma}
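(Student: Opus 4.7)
The plan is a pumping argument that glues a shifted copy of a witness run through $\vect{q}$ into the middle of a witness run through $\vect{c}+\vect{h}$, using monotonicity of $\to_{\vectSet{T}}$ to realign configurations.

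Concretely, I first pick witnesses. Let $\rho \in \Omega_\gamma$ be a run whose configurations include $\vect{q}$; its direction is $(\vect{c}+\vect{a},\vect{c}+\vect{b})$ for some $(\vect{a},\vect{b}) \in P$. Let $\rho' \in \Omega_\gamma$ be a run whose configurations include $\vect{c}+\vect{h}$; its direction is $(\vect{c}+\vect{a}',\vect{c}+\vect{b}')$ for some $(\vect{a}',\vect{b}') \in P$. Split $\rho' = \rho_1' \cdot \rho_2'$ at an occurrence of $\vect{c}+\vect{h}$, so $\rho_1'$ witnesses $\vect{c}+\vect{a}' \to_{\vectSet{T}}^{\ast} \vect{c}+\vect{h}$ and $\rho_2'$ witnesses $\vect{c}+\vect{h} \to_{\vectSet{T}}^{\ast} \vect{c}+\vect{b}'$.

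Next I use monotonicity (which holds because in our expressions $\ast$ is only applied to monotone relations, so every step $\vect{x} \to_{\vectSet{T}} \vect{y}$ can be shifted by any non-negative vector to $\vect{x}+\vect{k} \to_{\vectSet{T}} \vect{y}+\vect{k}$) to shift each piece step-by-step. Shifting $\rho_1'$ by $\vect{a}$ gives a run from $\vect{c}+\vect{a}'+\vect{a}$ to $\vect{c}+\vect{h}+\vect{a}$; shifting $\rho$ by $\vect{h}$ gives a run from $\vect{c}+\vect{a}+\vect{h}$ to $\vect{c}+\vect{b}+\vect{h}$ that passes through $\vect{q}+\vect{h}$; shifting $\rho_2'$ by $\vect{b}$ gives a run from $\vect{c}+\vect{h}+\vect{b}$ to $\vect{c}+\vect{b}'+\vect{b}$. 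The endpoints are designed to match, so concatenation yields a valid run $\sigma$ of $\vectSet{T}^{\ast}$ that passes through $\vect{q}+\vect{h}$.

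Finally, I check $\sigma \in \Omega_\gamma$. Its direction is $(\vect{c}+\vect{a}+\vect{a}',\,\vect{c}+\vect{b}+\vect{b}') = (\vect{c},\vect{c}) + (\vect{a}+\vect{a}',\,\vect{b}+\vect{b}')$. Since $(\vect{a},\vect{b}),(\vect{a}',\vect{b}') \in P$ and $P$ is periodic (closed under $+$), the offset lies in $P$, so $\dir(\sigma) \in (\vect{c},\vect{c})+P$ and hence $\sigma \in \Omega_\gamma$. This yields $\vect{q}+\vect{h} \in \vectSet{Q}_\gamma$ as required. The only mildly delicate point is aligning the three shifts so that the starts and ends of consecutive pieces coincide exactly; the shifts $\vect{a}$, $\vect{h}$, $\vect{b}$ chosen above are precisely what makes this work, and after that periodicity of $P$ and monotonicity of $\to_{\vectSet{T}}$ do all the work.
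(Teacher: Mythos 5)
Your proof is correct and follows essentially the same route as the paper: pick a run $\rho$ through $\vect{q}$ and a run $\rho'$ through $\vect{c}+\vect{h}$, split $\rho'$ at $\vect{c}+\vect{h}$, shift the pieces by constants (the first half of $\rho'$ by $\vect{a}$, all of $\rho$ by $\vect{h}$, the second half of $\rho'$ by $\vect{b}$) using monotonicity of $\to_{\vectSet{T}}^{\ast}$, concatenate, and invoke periodicity of $P$ to conclude the resulting direction lies in $(\vect{c},\vect{c})+P$. This is exactly the paper's argument (with $\vect{x},\vect{y},\vect{x}',\vect{y}'$ renamed to $\vect{a},\vect{b},\vect{a}',\vect{b}'$), only spelled out in more detail on the endpoint alignment.
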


\begin{proof}
Let \(\vect{q} \in \vectSet{Q}_{\gamma}, \vect{h} \in \vectSet{H}_{\gamma}\). As \(\vect{q} \in \vectSet{Q}_{\gamma}\), there exists a pair \((\vect{x},\vect{y}) \in P\) such that \(\vect{c}+\vect{x} \to_{\Expression}^{\ast} \vect{q} \to_{\Expression}^{\ast} \vect{c}+\vect{y}\). Since \(\vect{h} \in \vectSet{H}_{\gamma}\) there exists a pair \((\vect{x}', \vect{y}') \in P\) such that \(\vect{c}+\vect{x}' \to_{\Expression}^{\ast} \vect{c}+\vect{h} \to_{\Expression}^{\ast} \vect{c}+\vect{y}'\). Since \(\to_{\Expression}^{\ast}\) is monotone, we obtain \(\vect{c}+(\vect{x}+\vect{x}') \to_{\Expression}^{\ast} \vect{q}+\vect{h} \to_{\Expression}^{\ast} \vect{c}+(\vect{y}+\vect{y}')\). Since \(P\) is periodic, we obtain \(\vect{q}+\vect{h} \in \vectSet{Q}_{\gamma}\).
\end{proof}

Importantly, this implies that if for some intraproduction \(\vect{h}\) and counter \(i\) we have \(\vect{h}(i)>0\), then \(\{\vect{q}(i) \mid \vect{q} \in Q_{\gamma}\}\) is unbounded. I.e., if \( i \in I_{\gamma}\), then \(\vect{h}(i)=0\).

\begin{corollary}
We have \(\pi_{\gamma}(\source(\rho))=s_{\gamma}=\pi_{\gamma}(\target(\rho))\) for every run \(\rho \in \Omega_{\gamma}\).
\end{corollary}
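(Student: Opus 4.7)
The plan is to deduce the corollary directly from the two facts just established: the inclusion $\vectSet{Q}_{\gamma}+\vectSet{H}_{\gamma} \subseteq \vectSet{Q}_{\gamma}$ and its consequence that any intraproduction $\vect{h}$ vanishes on the bounded coordinates, i.e.\ $\vect{h}(i)=0$ for all $i \in I_{\gamma}$. The key observation is that both endpoints of any run $\rho \in \Omega_{\gamma}$ give rise to intraproductions in a tautological way.

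Concretely, given $\rho \in \Omega_{\gamma}$, the definition of $\Omega_{\gamma}$ yields a pair $(\vect{x}, \vect{y}) \in P$ with $\dir(\rho)=(\vect{c}+\vect{x}, \vect{c}+\vect{y})$. Since $\source(\rho)=\vect{c}+\vect{x}$ and $\target(\rho)=\vect{c}+\vect{y}$ are configurations visited by a run in $\Omega_{\gamma}$, both lie in $\vectSet{Q}_{\gamma}$ by definition. Hence $\vect{x}, \vect{y} \in \vectSet{H}_{\gamma}$, i.e.\ both are intraproductions.

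Applying the consequence of the previous lemma, $\vect{x}(i)=0=\vect{y}(i)$ for every $i \in I_{\gamma}$. Projecting with $\pi_{\gamma}$ (which only reads the coordinates in $I_{\gamma}$) therefore gives
\[
\pi_{\gamma}(\source(\rho)) = \pi_{\gamma}(\vect{c})+\pi_{\gamma}(\vect{x}) = \pi_{\gamma}(\vect{c})= s_{\gamma},
\]
and symmetrically $\pi_{\gamma}(\target(\rho)) = s_{\gamma}$. There is no real obstacle here: everything reduces to unwinding the definitions of $\Omega_{\gamma}$, $\vectSet{H}_{\gamma}$, and $I_{\gamma}$, and the substantive work has already been done in the lemma showing $\vectSet{Q}_{\gamma}+\vectSet{H}_{\gamma}\subseteq \vectSet{Q}_{\gamma}$ via monotonicity of $\to_{\vectSet{T}}^{\ast}$.
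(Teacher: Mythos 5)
Your proof is correct and follows the same route as the paper: extract $(\vect{x},\vect{y}) \in P$ from the definition of $\Omega_{\gamma}$, observe that $\vect{x}$ and $\vect{y}$ are intraproductions because the endpoints of $\rho$ lie in $\vectSet{Q}_{\gamma}$, and invoke the consequence of the preceding lemma that intraproductions vanish on $I_{\gamma}$. The only difference is that you spell out the intermediate step $\vect{c}+\vect{x},\vect{c}+\vect{y}\in\vectSet{Q}_{\gamma}$, which the paper leaves implicit.
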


\begin{proof}
Since \(\rho \in \Omega_{\gamma}\) there exists \((\vect{x}, \vect{y})\in P\) such that \(\rho\) is a run from \(\vect{c}+\vect{x}\) to \(\vect{c}+\vect{y}\). Hence \(\vect{x}, \vect{y}\) are intraproductions, and \(\vect{x}(i)=0=\vect{y}(i)\) for every \(i \in I_{\gamma}\). Therefore \(\pi_{\gamma}(\source(\rho))=\pi_{\gamma}(\vect{c})=\pi_{\gamma}(\target(\rho))\).
\end{proof}

In particular, runs \(\rho \in \Omega_{\gamma}\) induce loops in \(G_{\gamma}\), and \(G_{\gamma}\) is hence strongly connected. 

Next, we want to prove that in fact every loop on \(s_{\gamma}\) is also induced by some run \(\rho \in \Omega_{\gamma}\), as well as the claim about simultaneous unboundedness of all unbounded counters. This requires multiple lemmas with again the same proof as in \cite{Leroux13}:

\begin{lemma}
\cite[Lemma VIII.6]{Leroux13} For every \(\vect{q} \leq \vect{q}'\) in \(\vectSet{Q}_{\gamma}\) there exists an intraproduction \(\vect{h} \in \vectSet{H}_{\gamma}\) such that \(\vect{q}' \leq \vect{q}+\vect{h}\).
\end{lemma}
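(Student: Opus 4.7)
The plan is to use the runs witnessing $\vect{q},\vect{q}' \in \vectSet{Q}_\gamma$ to build a larger run in $\Omega_\gamma$ via monotonicity of $\to_{\vectSet{T}}^{\ast}$ (available because $\ast$ is only applied to monotone relations by Corollary~\ref{CorollaryStarOnMonotone}, as exploited already in Lemma~\ref{LemmaPCPeriodic}), and then extract the required intraproduction. Fix $\rho \in \Omega_\gamma$ passing through $\vect{q}$ with source $\vect{c}+\vect{x}$ and target $\vect{c}+\vect{y}$, witnessing $(\vect{x},\vect{y}) \in P$; similarly fix $\rho' \in \Omega_\gamma$ passing through $\vect{q}'$ with $(\vect{x}',\vect{y}') \in P$.

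First, I would form a combined run by shifting $\rho$ upward by $\vect{x}'$, shifting $\rho'$ upward by $\vect{y}$, and concatenating. By monotonicity this yields
\[
\vect{c}+\vect{x}+\vect{x}' \to_{\vectSet{T}}^{\ast} \vect{q}+\vect{x}' \to_{\vectSet{T}}^{\ast} \vect{c}+\vect{y}+\vect{x}' \to_{\vectSet{T}}^{\ast} \vect{q}'+\vect{y} \to_{\vectSet{T}}^{\ast} \vect{c}+\vect{y}+\vect{y}',
\]
whose overall direction $(\vect{c}+\vect{x}+\vect{x}', \vect{c}+\vect{y}+\vect{y}')$ lies in $(\vect{c},\vect{c})+P$ by periodicity of $P$, so the combined run belongs to $\Omega_\gamma$. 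In particular $\vect{q}'+\vect{y} \in \vectSet{Q}_\gamma$, so the candidate $\vect{h}_0 := \vect{q}'+\vect{y}-\vect{c}$ would be an intraproduction if non-negative, and it satisfies $\vect{q}+\vect{h}_0 \geq \vect{q}'$ precisely when $\vect{y}+\vect{q} \geq \vect{c}$ coordinatewise.

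To secure these inequalities on the unbounded coordinates $i \notin I_\gamma$, I would iterate the construction: by periodicity of $P$, the pair $n(\vect{x},\vect{y})$ also lies in $P$, and a straightforward adaptation of the shift/concatenate step yields runs in $\Omega_\gamma$ passing through $\vect{q}'+n\vect{y}$ for arbitrary $n$. For $n$ large enough, $n\vect{y}$ dominates $\vect{c}-\vect{q}$ on every coordinate where $\vect{y}$ is strictly positive; by invoking further pairs in $P$ to supply positive weight on each remaining unbounded coordinate, and summing the resulting intraproductions via periodicity of $\vectSet{H}_\gamma$, I obtain a single $\vect{h} \in \vectSet{H}_\gamma$ satisfying $\vect{h} \geq \vect{q}'-\vect{q}$ on all unbounded coordinates.

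The main obstacle is the bounded coordinates $i \in I_\gamma$: every intraproduction vanishes there (as established just before the lemma via $\vect{c}+n\vect{h} \in \vectSet{Q}_\gamma$), so the lemma forces $\vect{q}(i)=\vect{q}'(i)$ for every $i \in I_\gamma$. Establishing this equality from $\vect{q} \leq \vect{q}'$ in $\vectSet{Q}_\gamma$ is the crux, and I expect it to reduce to a pumping contradiction: a strict increase $\vect{q}(i)<\vect{q}'(i)$ with both configurations reachable along runs in $\Omega_\gamma$ can, by combining witnesses exactly as in the construction above and iterating, generate configurations in $\vectSet{Q}_\gamma$ unbounded in coordinate $i$, contradicting $i \in I_\gamma$. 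Once this equality on $I_\gamma$ is secured, the intraproduction $\vect{h}$ from the previous paragraph satisfies $\vect{q}+\vect{h} \geq \vect{q}'$ on every coordinate, completing the proof.
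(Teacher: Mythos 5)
The shift-and-concatenate construction showing $\vect{q}'+\vect{y}\in\vectSet{Q}_\gamma$ (and hence $\vect{q}'+n\vect{y}\in\vectSet{Q}_\gamma$) is correct and matches the monotonicity technique the paper uses for the preceding lemma ($\vectSet{Q}_\gamma+\vectSet{H}_\gamma\subseteq\vectSet{Q}_\gamma$). However, the rest of the argument has a genuine gap: the claim that one can cover every remaining unbounded coordinate ``by invoking further pairs in $P$ to supply positive weight'' is false. Unboundedness of a coordinate $i$ in $\vectSet{Q}_\gamma$ is a property of configurations visited \emph{mid-run} and need not be reflected in any endpoint pair $(\vect{x},\vect{y})\in P$. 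Concretely, take $d'=1$, $\vect{c}=10$, $\Rel(\vectSet{T})=\N\times\N$ (the reachability relation of the one-counter VAS with transitions $\pm 1$), and $P=\{(0,0)\}$. Then $\vectSet{Q}_\gamma=\N$ is unbounded, $\vect{q}=0\leq\vect{q}'=5$ both lie in $\vectSet{Q}_\gamma$, yet every pair in $P$ has $\vect{x}(1)=\vect{y}(1)=0$. Your candidate $\vect{q}'+n\vect{y}-\vect{c}=-5$ is negative for every $n$, and $P$ supplies no pair with positive weight, so your construction produces no usable intraproduction. The lemma still holds here (take $\vect{h}=5$, witnessed by visiting $15$ on a run $10\to 15\to 10$), but the witness is a mid-run peak that endpoint arithmetic over $P$ never reaches.

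The same obstruction undermines the sketched pumping contradiction for bounded coordinates: iterating your combination only produces configurations of the form $\vect{q}'+n\vect{y}$, whose $i$-th entry is constant whenever $\vect{y}(i)=0$, so this cannot demonstrate unboundedness of $i$ and hence cannot close the contradiction. You are right that the lemma forces $\vect{q}(i)=\vect{q}'(i)$ on $I_\gamma$ — this is essentially (a special case of) the later Lemma~\ref{LemmaStatesIncomparable}, which the paper states separately — but establishing it needs a finer analysis of the two witness runs than adding $P$-pairs to the endpoints. Note that the paper does not reproduce a proof of this lemma; it cites \cite[Lemma VIII.6]{Leroux13} and relies on that argument carrying over verbatim. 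To close your gap you would need to adapt Leroux's original construction, which extracts the required intraproduction from the structure of the runs through $\vect{q}$ and $\vect{q}'$ themselves (not merely from their source/target displacements in $P$), and this is where the genuine work of the lemma lies.
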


\begin{lemma}
\cite[Lemma VIII.7]{Leroux13} There exists an intraproduction \(\vect{h} \in \vectSet{H}_{\gamma}\) such that \(\{i \mid \vect{h}(i)=0\}=I_{\gamma}\). \label{LemmaSimultaneouslyUnbounded}
\end{lemma}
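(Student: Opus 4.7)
The plan is to reduce the statement to a per-counter construction and combine by periodicity. First I would recall the observation made immediately after the periodicity of $\vectSet{H}_{\gamma}$ was established: for every intraproduction $\vect{h}' \in \vectSet{H}_{\gamma}$ and every bounded index $j \in I_{\gamma}$ we automatically have $\vect{h}'(j)=0$, because otherwise $\vect{c}+\N\vect{h}' \subseteq \vectSet{Q}_{\gamma}$ would make the $j$-th counter unbounded, contradicting $j \in I_{\gamma}$. Hence the ``$\supseteq$'' inclusion $I_{\gamma} \subseteq \{i \mid \vect{h}(i)=0\}$ in the conclusion is free for any intraproduction $\vect{h}$, and it suffices to exhibit an intraproduction that is strictly positive on every unbounded index $i \notin I_{\gamma}$.

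Next I would construct, for each fixed $i \notin I_{\gamma}$, a single intraproduction $\vect{h}_i \in \vectSet{H}_{\gamma}$ with $\vect{h}_i(i)>0$. Since $i \notin I_{\gamma}$, the set $\{\vect{q}(i) \mid \vect{q} \in \vectSet{Q}_{\gamma}\}$ is infinite, so I can pick configurations $\vect{q}_1, \vect{q}_2, \dots \in \vectSet{Q}_{\gamma}$ with $\vect{q}_n(i)$ strictly increasing in $n$. Dickson's Lemma (Lemma \ref{LemmaDickson}) applied to this sequence in $\N^{d'}$ provides indices $j<k$ with $\vect{q}_j \leq \vect{q}_k$, and since the $i$-th components are strictly increasing we have in particular $\vect{q}_k(i) > \vect{q}_j(i)$. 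Applying the previous lemma (\cite[Lemma VIII.6]{Leroux13} as stated just above) to the pair $\vect{q}_j \leq \vect{q}_k$ yields an intraproduction $\vect{h}_i \in \vectSet{H}_{\gamma}$ with $\vect{q}_k \leq \vect{q}_j+\vect{h}_i$; restricting to coordinate $i$ gives $\vect{h}_i(i) \geq \vect{q}_k(i)-\vect{q}_j(i) > 0$, as desired.

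Finally I would combine these pieces by setting $\vect{h} := \sum_{i \notin I_{\gamma}} \vect{h}_i$. Since $\vectSet{H}_{\gamma}$ is periodic (closed under addition and contains $\vect{0}$) and the sum is finite, $\vect{h} \in \vectSet{H}_{\gamma}$. For $i \notin I_{\gamma}$, nonnegativity of the summands gives $\vect{h}(i) \geq \vect{h}_i(i) > 0$; for $j \in I_{\gamma}$ the initial observation gives $\vect{h}_i(j) = 0$ for every summand, hence $\vect{h}(j)=0$. Therefore $\{i \mid \vect{h}(i)=0\} = I_{\gamma}$. The entire argument is essentially a packaging step on top of Lemma \ref{LemmaDickson}, the preceding lemma, and the $j \in I_{\gamma} \Rightarrow \vect{h}(j)=0$ observation; the only point requiring a small amount of care is extracting a Dickson-comparable pair that is \emph{strictly} larger in the coordinate $i$ one wants to pump, which is handled by pre-selecting the sequence with strictly increasing $i$-th component, so I do not anticipate a genuine obstacle here.
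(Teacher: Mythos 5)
Your proof is correct and follows the standard route: the paper only cites \cite[Lemma VIII.7]{Leroux13} without re-proving it, and your argument — one intraproduction $\vect{h}_i$ per unbounded coordinate via Dickson plus the preceding Lemma VIII.6, then summing using periodicity of $\vectSet{H}_{\gamma}$, with the reverse inclusion free from the observation after the $\vectSet{Q}_{\gamma}+\vectSet{H}_{\gamma}\subseteq\vectSet{Q}_{\gamma}$ lemma — is exactly how Leroux proves it. No gaps.
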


Observe that Lemma \ref{LemmaSimultaneouslyUnbounded} proves the fact about ``simultaneous unboundedness'': By repeating such an intraproduction every unbounded counter increases arbitrarily \emph{simultaneously}. 

The following lemma shows that for every cycle in \(S_{\gamma}\), the lemma even states it for any sequence of states, one can find a run visitting them in sequence.

\begin{lemma}
\cite[Lemma VIII.9]{Leroux13} For every sequence \(s_1, \dots, s_k \in S_{\gamma}\) there exist \((\vect{x}, \vect{y}) \in P\) and \(\vect{q}_1, \dots, \vect{q}_k \in \vectSet{Q}_{\gamma}\) such that \(s_j=\pi_{\gamma}(\vect{q}_j)\) for every \(1\leq j \leq k\) and such that 

\(\vect{c}+\vect{x} \to_{\Expression}^{\ast} \vect{q}_1 \to_{\Expression}^{\ast} \dots \to_{\Expression}^{\ast} \vect{q}_k \to_{\Expression}^{\ast} \vect{c}+\vect{y}.\)\label{LemmaCyclesImplementable}
\end{lemma}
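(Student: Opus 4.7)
The plan is to pick, for each state $s_j$ in the given sequence, a witness run of $\Omega_\gamma$ that passes through some configuration projecting to $s_j$, and then glue these $k$ witness runs into a single run by exploiting the monotonicity of $\to_{\vectSet{T}}^{\ast}$ together with the periodicity of $P$. The algebraic content is just periodicity of $P$ plus a careful choice of ``offsets'' that make successive pieces line up; the work is done by monotonicity.

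Concretely, for each $j \in \{1,\dots,k\}$, since $s_j \in S_\gamma = \pi_\gamma(\vectSet{Q}_\gamma)$, I would pick a pair $(\vect{x}_j, \vect{y}_j) \in P$ together with a configuration $\vect{q}_j' \in \vectSet{Q}_\gamma$ satisfying $\pi_\gamma(\vect{q}_j') = s_j$ and
\[ \vect{c} + \vect{x}_j \to_{\vectSet{T}}^{\ast} \vect{q}_j' \to_{\vectSet{T}}^{\ast} \vect{c} + \vect{y}_j. \]
Now set $\vect{x} := \sum_j \vect{x}_j$ and $\vect{y} := \sum_j \vect{y}_j$, both of which still lie in $P$ by periodicity. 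Define offsets $\vect{o}_j := \sum_{i<j} \vect{y}_i + \sum_{i>j} \vect{x}_i$ and candidate visited configurations $\vect{q}_j := \vect{q}_j' + \vect{o}_j$. By Corollary \ref{CorollaryStarOnMonotone} we may assume $\ast$ is applied only to monotone relations, so $\to_{\vectSet{T}}^{\ast}$ is monotone. Shifting the $j$-th witness uniformly by $\vect{o}_j$ therefore gives $\vect{c} + \vect{x}_j + \vect{o}_j \to_{\vectSet{T}}^{\ast} \vect{q}_j \to_{\vectSet{T}}^{\ast} \vect{c} + \vect{y}_j + \vect{o}_j$, and a routine telescoping check shows that $\vect{x}_1 + \vect{o}_1 = \vect{x}$, that $\vect{y}_k + \vect{o}_k = \vect{y}$, and that $\vect{y}_j + \vect{o}_j = \vect{x}_{j+1} + \vect{o}_{j+1}$ for $1 \le j < k$. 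Hence the $k$ shifted witnesses concatenate into the single chain
\[ \vect{c} + \vect{x} \to_{\vectSet{T}}^{\ast} \vect{q}_1 \to_{\vectSet{T}}^{\ast} \dots \to_{\vectSet{T}}^{\ast} \vect{q}_k \to_{\vectSet{T}}^{\ast} \vect{c} + \vect{y}. \]

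To finish I would verify the two membership conditions. For $\pi_\gamma(\vect{q}_j) = s_j$: each $\vect{x}_i$ and $\vect{y}_i$ is an intraproduction (its shift of $\vect{c}$ appears on the run witnessing $(\vect{x}_i, \vect{y}_i) \in P$, which belongs to $\Omega_\gamma$), and hence vanishes on $I_\gamma$ by the observation made right after the definition of $\vectSet{H}_\gamma$; thus $\pi_\gamma(\vect{o}_j) = 0$ and $\pi_\gamma(\vect{q}_j) = \pi_\gamma(\vect{q}_j') = s_j$. For $\vect{q}_j \in \vectSet{Q}_\gamma$: $\vect{o}_j$ is itself an intraproduction because $\vectSet{H}_\gamma$ is closed under addition (this follows from $\vectSet{Q}_\gamma + \vectSet{H}_\gamma \subseteq \vectSet{Q}_\gamma$), so $\vect{q}_j = \vect{q}_j' + \vect{o}_j$ lies in $\vectSet{Q}_\gamma + \vectSet{H}_\gamma \subseteq \vectSet{Q}_\gamma$. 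The main obstacle is small but essential: it is the fact that intraproductions vanish on $I_\gamma$, which is what makes the offsets cancel under $\pi_\gamma$ and so prevents the glued run from drifting away from the prescribed sequence of states. That fact in turn rests on monotonicity of $\to_{\vectSet{T}}^{\ast}$ granted by Corollary \ref{CorollaryStarOnMonotone}; without the monotonicity restriction the whole gluing strategy would collapse, which is ultimately why the rewriting by Corollary \ref{CorollaryStarOnMonotone} is indispensable for this section.
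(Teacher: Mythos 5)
Your proof is correct, and it uses the same ingredients as the argument the paper defers to in \cite[Lemma VIII.9]{Leroux13}: choose a witness run in \(\Omega_\gamma\) through each target state, then use periodicity of \(P\) and monotonicity of \(\to_{\vectSet{T}}^{\ast}\) to glue the witnesses, relying on the fact that intraproductions vanish on \(I_\gamma\) to keep the projected states fixed. The only difference from Leroux's original presentation is cosmetic — he glues by induction on \(k\), whereas you do the gluing in one shot with the explicit offsets \(\vect{o}_j\); the telescoping identity \(\vect{y}_j+\vect{o}_j=\vect{x}_{j+1}+\vect{o}_{j+1}\) checks out, and the verification that \(\vect{q}_j\in\vectSet{Q}_\gamma\) (via \(\vectSet{Q}_\gamma+\vectSet{H}_\gamma\subseteq\vectSet{Q}_\gamma\) and closure of \(\vectSet{H}_\gamma\) under addition) is exactly right.
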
 

Now we can start defining \(\vectSet{R}_{\gamma}\). The difficult (and new) part is how the \(\dir(\vectSet{P}_{\Expression, \eta_i})\) for the different generalized transitions \(\eta_i\) influence \(\vectSet{R}_{\gamma}\). The by far most important observation to answer this is that for every generalized transition \(\eta\), in particular for every edge in the graph \(G_{\gamma}\), the corresponding transformer relation \(\vectSet{P}_{\Expression, \eta}\) is \emph{reflexive}: It clearly contains \((\vect{0}, \vect{0})\), and since \(\Expression\) and hence \(\vectSet{P}_{\Expression, \eta}\) is monotone, it hence contains \((\vect{x}, \vect{x})\) for every \(\vect{x}\). Reflexivity is useful because if \(\vectSet{R}_i\) is reflexive for every \(i\), it implies \(\vectSet{R}_i \subseteq \vectSet{R}_1 \circ \dots \circ \vectSet{R}_k\). I.e. if some edge can pump a vector \((\vect{e}, \vect{f})\), then also some cycle can. And if some cycle can, then also a composition of minimal cycles can, which is non-trivial since not every cycle is a composition of minimal cycles (cycles can be inserted in the middle of a minimal cycle). So we can truly limit ourselves to single transitions.

Furthermore, \(\vectSet{P}_{\Expression, \eta}\) is smooth periodic fulfilling properties 1 and 2' since it can be written as composition of \(\vectSet{P}_{\Expression_i^{\ast}, \vect{c}_i}\) by Lemma \ref{LemmaEquivalentCharacterizationOfPRho}. Every one of these \(\vectSet{P}_{\Expression_i^{\ast}, \vect{c}_i}\) is smooth by induction, and the composition is hence smooth by Lemma \ref{LemmaCompositionSmooth}. Regarding (directions of) reflexive smooth periodic relations, we have

\begin{theorem}
\cite[Theorem VII.1]{Leroux13} Transitive closures of finite unions of reflexive definable cone relations over \(\Q_{\geq 0}^d\) are reflexive definable cone relations.\label{TheoremTransitiveClosureDefinable}
\end{theorem}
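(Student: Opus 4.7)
The goal is to show that $R^{\ast} := \bigl(\bigcup_{i=1}^{m} R_i\bigr)^{\ast}$ is reflexive, conic, and definable in $\FO(\Q,+,\geq)$. Reflexivity is immediate since each $R_i$ already contains the diagonal of $\Q_{\geq 0}^d$. For conicity I would argue directly on chain witnesses: given chains $x_1 = y_0, \ldots, y_n = z_1$ and $x_2 = y_0', \ldots, y_k' = z_2$ witnessing $(x_1, z_1), (x_2, z_2) \in R^{\ast}$, shifting the first chain by $x_2$ and then the tail chain by $z_1$ produces a chain from $x_1 + x_2$ to $z_1 + z_2$. Each shifted step $(y_{j-1} + u, y_j + u) = (y_{j-1}, y_j) + (u,u)$ stays in the same $R_i$ as the original step, because $(u,u) \in R_i$ by reflexivity and $R_i$ is closed under Minkowski sum. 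Positive scaling uses only that each $R_i$ is a cone.

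The main content is definability. My plan is to use quantifier elimination in $\FO(\Q,+,\geq)$ to decompose each $R_i$ as a finite union of relatively open polyhedral cones in $\Q^{2d}$. Composition of relations corresponds to existential projection along an intermediate $d$-dimensional variable, and projections of polyhedral cones are polyhedral, so every finite iterate $R^{n}$ is already quantifier-free definable as a finite union of polyhedral cones. The remaining question is whether the infinite union $R^{\ast} = \bigcup_{n} R^{n}$ is definable; already for the single reflexive conic relation $R = \{(a,b) \mid a \leq b \leq 2a\}$ the iterates $R^{n} = \{(a,b) \mid a \leq b \leq 2^{n} a\}$ strictly increase, yet the union $R^{\ast} = \{(a,b) \mid a \leq b \text{ or } a = b = 0\}$ is again a definable cone.

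I would capture this collapse by stratifying $\Q_{\geq 0}^{2d}$ into the finitely many relatively open faces indexed by the supports of the two components. On each face, reflexivity lets us pad chains with diagonal steps in any $R_i$, and Dickson's lemma applied to the sequence of intermediate support patterns of chain witnesses implies that, after some bounded length $N = N(d,m)$, every support sequence that any longer chain realises has already been realised. A conic acceleration step then shows that every extremal direction of $R^{\ast}$ inside a given face is realised (up to scaling and reflexive padding) by chains of length at most $N$, and that the restriction of $R^{\ast}$ to each face is therefore a single relatively open polyhedral cone. Taking the finite union over faces yields the desired quantifier-free formula for $R^{\ast}$.

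The hard step is the conic acceleration inside each face. Although the support patterns stabilize by Dickson, the magnitudes of intermediate vectors along a chain can grow unboundedly, as the toy example already shows, so one cannot hope for $R^{\ast} = R^{N}$ on the nose; one must argue instead that every asymptotic direction reachable within a face is already generated by a bounded-length chain. This is exactly the geometric content of Leroux's Theorem~VII.1 in \cite{Leroux13}, which our proof invokes as a black box; what we actually need downstream in Lemma~\ref{LemmaCompositionSmooth} is only that $R^{\ast}$ is a reflexive definable cone, together with the overapproximating cone it supplies for property~2'.
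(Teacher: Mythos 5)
The paper does not reprove this statement; it is quoted verbatim as Theorem~VII.1 of Leroux~\cite{Leroux13}, and the paper relies on it as a black box, so there is no in-paper proof to compare against. Your reflexivity argument and your chain-shifting argument for conicity are correct and complete: shifting the first chain by $x_2$ and the second by $z_1$, using that each $R_i$ is reflexive and closed under Minkowski sum, does yield a valid witness for $(x_1+x_2,\,z_1+z_2)$, and scaling every step of a chain by $\lambda>0$ handles positive scaling. For definability you honestly acknowledge that the ``conic acceleration inside each face'' is the substance of Leroux's theorem and that your proposal invokes it as a black box, which matches exactly what the paper does. Two minor cautions on the sketch itself: (i) support patterns live in the finite poset $\{0,1\}^d$, so what you need for sequences of support patterns is Higman's lemma over a finite alphabet (or simply pigeonhole for repeated patterns), not Dickson's lemma; (ii) you correctly flag that a length bound $N(d,m)$ cannot give $R^{\ast}=R^{N}$ because intermediate magnitudes grow without bound, but it is worth emphasising that closing this gap is not a technicality to be smoothed over later --- it is essentially the whole theorem, and nothing in the sketch as written actually resolves it. Since the paper itself takes the result on faith from \cite{Leroux13}, so may you; just present the stratification plan as motivation rather than as a reduction to routine polyhedral bookkeeping.
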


This leaves one last problem for defining \(R_{\gamma}\): The above suggests taking \(\dir(\vectSet{P}_{\Expression, \eta})\) for every edge \(\eta\), but this runs into a problem: Let \(\eta\) be an edge from state \(s\) to state \(t\). For \((\vect{e}, \vect{f}) \in \dir(\vectSet{P}_{\Expression, \eta})\), the larger run \(\eta' \geq_{\Omega(\Expression)} \eta\) where \((\vect{e}, \vect{f})\) is pumpable might fulfill \(\eta' \not \in \Omega_{s,t}\). This should remind the reader of a problem we discussed (and resolved) when dealing with composition of periodic relations in Section \ref{SectionSmoothnessComposition}: The answer is property 2'. 

Hence we define for every edge \(\eta\) the well-directed periodic relation \(P_{\eta}:=\{\dirOfRun(\eta')-\dirOfRun(\eta) \mid \eta' \geq_{\Omega(\Expression)} \eta, \eta' \in \Omega_{s,t}\} \subseteq \vectSet{P}_{\Expression, \eta}\). By property 2' for \(\vectSet{P}_{\Expression, \eta}\) we obtain a definable cone relation \(R_{\eta}\) containing \(\dir(P_{\eta})\) and such that for every \((\vect{e}, \vect{f}) \in R_{\eta}\) there exists a run \(\eta' \geq_{\Omega_(\Expression)} \eta\) with \(\eta' \in \Omega_{s,t}\) such that in \(\eta'\) we can pump \((\vect{e}, \vect{f})\), resolving our above problem.

\begin{remark}
There is another perspective on \(R_{\eta}\) and its definition for readers familiar with KLM decomposition. Similar to KLM decomposition, we require every object in the graph to be unbounded, in particular transitions. The problem we ran into is that pumps \((\vect{e}, \vect{f}) \in \dir(\vectSet{P}_{\Expression, \eta})\) might be bounded, even if their corresponding transition \(\eta\) is unbounded. The definition of \(R_{\eta}\) indirectly removes pumps \(\in \dir(\vectSet{P}_{\Expression, \eta})\) which are bounded.
\end{remark}

\begin{definition}
 For every cycle \(\rho_{cyc}=\eta_1 \dots \eta_k\) in \(G_{\gamma}\), let \(\dirOfRun(\rho_{cyc})\) be the basic effect of the cycle. We define \(\vectSet{R}_{\gamma,1}\) as the set of pairs \((\vect{e}, \vect{f})\) such that \(\vect{f}-\vect{e}\) is the basic effect of some cycle in \(G_{\gamma}\). This was used as \(\vectSet{R}_{\gamma}\) in the case of normal VASS, and is clearly definable.

We define \(\vectSet{R}_{\gamma,2}:=\bigcup_{\eta_{base}} R_{\eta_{base}}\), where \(\eta_{base}\) ranges over edges of \(G_{\gamma}\).

We define \(\vectSet{R}_{\gamma}:=(\vectSet{R}_{\gamma,1} \cup \vectSet{R}_{\gamma,2})^{\ast}\), which is definable by Theorem \ref{TheoremTransitiveClosureDefinable}.
\end{definition}

We have to prove that with this choice for \(\vectSet{R}_{\gamma}\), property 2' is fulfilled. I.e. we need \(\dir(P) \subseteq \vectSet{R}_{\gamma}\) as well as the following lemma.

\begin{lemma}
For every \((\vect{e},\vect{f}) \in \vectSet{R}_{\gamma}\) there exists \((\vect{a}, \vect{b}) \in P\) and \(n\in \N\) such that \((\vect{c}, \vect{c})+(\vect{a}, \vect{b})+\N n(\vect{e}, \vect{f})\) is flattable.
\end{lemma}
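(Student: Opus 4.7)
The approach is to turn the abstract membership \((\vect{e}, \vect{f}) \in \vectSet{R}_\gamma\) into a concrete flattable witness by combining the graph structure of \(G_\gamma\) with the inductive hypothesis (property 2.') applied to \(\PosEff\) of minimal cycles. After multiplying \((\vect{e}, \vect{f})\) by a positive integer to reach the integer lattice, I would first decompose it using \(\vectSet{R}_\gamma = (\bigcup_s \vectSet{R}_{\gamma,s})^*\) and \(\vectSet{R}_{\gamma,s} = \vectSet{R}_{\gamma,s,1} + \vectSet{R}_{\gamma,s,2}\): this writes \((\vect{e}, \vect{f})\) as a composition of pieces, each of the form ``basic cycle effect at some state \(s_j\)'' plus ``element of \((\bigcup_{\rho_{mc}\text{ on }s_j} \dir(\PosEff(\rho_{mc})))^*\)''. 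The second component in turn decomposes into a finite composition of individual directions \((\vect{c}_{j,t}, \vect{d}_{j,t}) \in \dir(\PosEff(\rho_{mc,j,t}))\) for minimal cycles \(\rho_{mc,j,t}\) on \(s_j\).

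Next, for each minimal cycle \(\rho_{mc,j,t}\) appearing in this decomposition, I would apply property 2.' of the smooth periodic relation \(\PosEff(\rho_{mc,j,t})\). This property holds by Lemma \ref{LemmaCompositionSmooth} applied to the composition \(\vectSet{P}_{\vectSet{T}, \eta_1} \circ \cdots \circ \vectSet{P}_{\vectSet{T}, \eta_r}\), each factor of which is smooth periodic and satisfies properties 1.\ and 2.' by the structural-induction hypothesis. It yields an integer \(n_{j,t}\) and a pair \((\vect{a}_{j,t}', \vect{b}_{j,t}') \in \PosEff(\rho_{mc,j,t})\) such that \(\dir(\rho_{mc,j,t}) + (\vect{a}_{j,t}', \vect{b}_{j,t}') + \N n_{j,t}(\vect{c}_{j,t}, \vect{d}_{j,t})\) is flattable with respect to \(\vectSet{T}\).

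Then, using the strong connectivity of \(G_\gamma\) and Lemma \ref{LemmaCyclesImplementable}, I would construct a single skeleton run \(\tilde\rho \in \Omega_\gamma\) that visits the required states \(s_j\) in order (returning to \(s_\gamma\) in between) and on each visit executes each minimal cycle \(\rho_{mc,j,t}\) with the pre-offset \((\vect{a}_{j,t}', \vect{b}_{j,t}')\) already absorbed; the additional initial and final capacity needed to pay for all these local offsets is obtained by prefixing and suffixing with multiples of an intraproduction supplied by Lemma \ref{LemmaSimultaneouslyUnbounded}. Setting \((\vect{a}, \vect{b}) := \dir(\tilde\rho) - (\vect{c}, \vect{c})\), the membership \(\tilde\rho \in \Omega_\gamma\) immediately gives \((\vect{a}, \vect{b}) \in P\).

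Finally, taking \(n\) to be a common multiple of the integer scaling from the first paragraph and of the \(n_{j,t}\), I would assemble the relation \((\vect{c}, \vect{c}) + (\vect{a}, \vect{b}) + \N n(\vect{e}, \vect{f})\) from (i) linear path schemes \(\mtc(\dir(\eta))\) coming from the \(\vectSet{T}\)-transitions traversed by \(\tilde\rho\) (which, via iteration of basic cycles in \(G_\gamma\), handle the basic-effect part of \((\vect{e}, \vect{f})\)) and (ii) the flattable sub-relations produced in the previous paragraph, inserted at the corresponding minimal-cycle positions of \(\tilde\rho\) to handle the extra-effect part. Property 1.\ for \(\vectSet{T}\) (Lemma \ref{LemmaBasicflattabilityProperties}) combines effects on a per-cycle basis, and the definition of flattability for \(\vectSet{T}^*\) then stitches them together globally. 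The main obstacle I anticipate is the bookkeeping inside the skeleton construction: ensuring that all the local offsets \((\vect{a}_{j,t}', \vect{b}_{j,t}')\) together with the required basic-cycle iterations can be realized simultaneously inside a single run in \(\Omega_\gamma\) whose direction lies in \(P\). This will hinge on monotonicity of \(\to_{\vectSet{T}}^*\) (Lemma \ref{LemmaPCPeriodic}) together with the intraproduction lemmas, which let us ``pre-pay'' all finite capacity requirements by iterating a single intraproduction sufficiently often.
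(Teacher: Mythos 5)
Your proposal follows essentially the same route as the paper's proof: decompose \((\vect{e},\vect{f})\) through the definitions of \(\vectSet{R}_\gamma\), \(\vectSet{R}_{\gamma,s,1}\), \(\vectSet{R}_{\gamma,s,2}\); invoke property 2.' (established via Lemma~\ref{LemmaCompositionSmooth}) for the \(\PosEff\) of the relevant minimal cycles; build a skeleton run via Lemma~\ref{LemmaCyclesImplementable}, enabled by the intraproduction of Lemma~\ref{LemmaSimultaneouslyUnbounded}; and assemble the flattable relation from linear path schemes plus the inner flattable pieces. The ``bookkeeping obstacle'' you flag at the end is exactly what the paper addresses with two small but crucial observations: by passing to a larger \(n\) one may assume the skeleton traverses each required minimal cycle at least as often as needed, and reflexivity of each \(\PosEff(\rho_{mc})\) lets one insert the trivial modification \((\vect 0,\vect 0)\) on surplus traversals, so the counts need not match exactly.
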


\begin{proof}
Let \((\vect{e}, \vect{f}) \in \vectSet{R}_{\gamma}\). Then there exists \(k \in \N\) and \(\vect{e}_0, \vect{e}_1, \dots, \vect{e}_k\) with \(\vect{e}_0=\vect{e}\) and \(\vect{e}_k=\vect{f}\) such that every \((\vect{e}_i, \vect{e}_{i+1}) \in \vectSet{R}_{\gamma,1} \cup \vectSet{R}_{\gamma,2}\). The idea is to create for every \(i\) a run \(\rho_i: \vect{c}+\vect{a}_i \to_{\Expression}^{\ast} \vect{c}+\vect{b}_i\) with \((\vect{a}_i, \vect{b}_i) \in P\) such that there exists \(\rho_i' \geq_{\Omega(\Expression^{\ast})} \rho_i\) fulfilling \(\dirOfRun(\rho_i')-\dirOfRun(\rho_i)=n(\vect{e}_i, \vect{e}_{i+1})\) for some \(n\). This would imply that \(\dirOfRun(\rho_i)+\N (\dirOfRun(\rho_i')-\dirOfRun(\rho_i)) =(\vect{c}, \vect{c})+(\vect{a}_i, \vect{b}_i)+\N n (\vect{e}_i, \vect{e}_{i+1})\) is flattable by Lemma \ref{LemmaBasicflattabilityProperties}(1.), and thereby the statement by taking the run \(\vect{c}+\sum_{i=0}^{k-1} \vect{a}_i \to_{\rho_0, \dots, \rho_{k-1}} \vect{c}+\sum_{i=0}^{k-1} \vect{b}_i\). 

In other words, without loss of generality \((\vect{e}, \vect{f}) \in \vectSet{R}_{\gamma,1} \cup \vectSet{R}_{\gamma,2}\).

The case of \(\vectSet{R}_{\gamma,1}\) is the same as in \cite[Lemma VIII.10]{Leroux13}. Hence we assume \((\vect{e}, \vect{f}) \in \vectSet{R}_{\gamma,2}\).

By definition of \(\vectSet{R}_{\gamma,2}\), there exist \((\vect{a}', \vect{b}') \in P_{\eta}\) and \(n \in \N\) such that \((\vect{a}', \vect{b}')+\N n (\vect{e}, \vect{f}) \subseteq \vectSet{P}_{\Expression, \eta}\). By definition of \(P_{\eta}\) there exists a run \(\eta' \in \Omega_{s,t}\) such that \(\eta' \geq_{\Omega(\Expression)} \eta\). The above containment then shows \(n(\vect{e}, \vect{f}) \in \vectSet{P}_{\eta'}\). By definition of \(\Omega_{s,t}\), there exists a run \(\rho \in \Omega_{\gamma}\) from \(\vect{c}+\vect{a} \to_{\rho} \vect{c}+ \vect{b}\) for some vector \((\vect{a}, \vect{b}) \in P\) such that \(\rho\) contains the step \(\eta'\). Since the relations \(\vectSet{P}_{\Expression, \eta}\) of the other transitions \(\eta\) on this run are reflexive, vectors pumpable into \(\eta'\) are pumpable into \(\rho\), i.e. we obtain \(n(\vect{e}, \vect{f}) \in \vectSet{P}_{\Expression, \eta'} \subseteq \vectSet{P}_{\Expression^{\ast}, \rho}\). In fact this containment is obtained via pumping a difference between runs and hence flattable by Lemma \ref{LemmaBasicflattabilityProperties} (1.).
\end{proof}

Next we prove \(\dir(P) \subseteq \vectSet{R}_{\gamma}\). We need the following, which works the same as in \cite{Leroux13}.

\begin{lemma}
\cite[Lemma VIII.11]{Leroux13} States in \(S_{\gamma}\) are incomparable. \label{LemmaStatesIncomparable}
\end{lemma}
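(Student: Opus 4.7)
The plan is to argue by contradiction: assuming $s \lneq s'$ in $S_\gamma$ with $s(i) < s'(i)$ for some bounded coordinate $i \in I_\gamma$, I will construct an intraproduction whose $i$-th entry is strictly positive, contradicting the already established fact (right after the definition of $H_\gamma$) that every intraproduction vanishes on coordinates in $I_\gamma$.

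First, by definition of $S_\gamma = \pi_\gamma(Q_\gamma)$, pick configurations $q,q' \in Q_\gamma$ with $\pi_\gamma(q)=s$ and $\pi_\gamma(q')=s'$. On the bounded coordinates we already have $q \leq q'$, but on the unbounded coordinates the two vectors need not be comparable. The remedy is to inflate $q'$ only on unbounded coordinates: by Lemma~\ref{LemmaSimultaneouslyUnbounded} there is an intraproduction $\vect{h} \in H_\gamma$ with $\vect{h}(j)>0$ for every $j \notin I_\gamma$ and $\vect{h}(j)=0$ for every $j \in I_\gamma$. Choose $n \in \N$ large enough that $q'(j)+n\vect{h}(j) \geq q(j)$ for every $j \notin I_\gamma$, and set $q'' := q' + n\vect{h}$. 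Iterating $Q_\gamma + H_\gamma \subseteq Q_\gamma$ gives $q'' \in Q_\gamma$; since $\vect{h}$ is null on $I_\gamma$, $\pi_\gamma(q'') = s'$; and by construction $q'' \geq q$ coordinatewise (agreeing with $q' \geq q$ on $I_\gamma$ and exceeding $q$ on $j \notin I_\gamma$ by the choice of $n$).

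Next, apply the first bulleted lemma of Subsection~\ref{SubsectionAppendixTStarCase} (the analog of \cite[Lemma VIII.6]{Leroux13}): since $q \leq q''$ in $Q_\gamma$, there exists an intraproduction $\vect{h}' \in H_\gamma$ with $q'' \leq q + \vect{h}'$. Reading the $i$-th coordinate gives $\vect{h}'(i) \geq q''(i) - q(i) = s'(i) - s(i) \geq 1$, so $\vect{h}'(i) > 0$. But the observation right after the proof of $\vectSet{Q}_\gamma + \vectSet{H}_\gamma \subseteq \vectSet{Q}_\gamma$ says that every intraproduction has $\vect{h}'(i)=0$ for $i \in I_\gamma$ (otherwise iterating would make the $i$-th coordinate unbounded on $Q_\gamma$, contradicting $i \in I_\gamma$). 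This is the desired contradiction, so $s = s'$.

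The only step requiring care is Step~2: we must make the two configurations globally comparable while preserving both membership in $Q_\gamma$ and the projection onto $I_\gamma$. Lemma~\ref{LemmaSimultaneouslyUnbounded}, which provides an intraproduction that is simultaneously positive on exactly the unbounded coordinates, is precisely tailored for this — this is really the place where the hypothesis that all unbounded counters are unbounded \emph{simultaneously} pays off. Everything else is bookkeeping using the already proved periodicity $\vectSet{Q}_\gamma + \vectSet{H}_\gamma \subseteq \vectSet{Q}_\gamma$ and the Lemma VIII.6 analog.
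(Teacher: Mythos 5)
Your proof is correct, and it is essentially the argument the paper refers to when it says the lemma "works the same as in \cite{Leroux13}": one assumes two comparable but distinct states, lifts them to configurations in $Q_\gamma$, uses the simultaneous-unboundedness intraproduction of Lemma~\ref{LemmaSimultaneouslyUnbounded} to fix up the unbounded coordinates so that the two configurations become globally comparable, applies the Lemma~VIII.6 analog to extract an intraproduction that is strictly positive on a coordinate of $I_\gamma$, and contradicts the observation that intraproductions vanish on $I_\gamma$. All the ingredients you invoke ($Q_\gamma + H_\gamma \subseteq Q_\gamma$ giving periodicity of $H_\gamma$, the two restated Leroux lemmas, and the vanishing of intraproductions on bounded coordinates) are available at this point in the appendix, so the dependency order is respected.
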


%
%
%

Now we can prove that \(\dir(P) \subseteq \vectSet{R}_{\gamma}\).

\begin{lemma}
\(\dir(P) \subseteq \vectSet{R}_{\gamma}\).
\end{lemma}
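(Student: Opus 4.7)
My plan is to use the wqo on $\Omega(\vectSet{T}^{\ast})$ to realise directions of $P$ as differences of comparable runs, and then decompose such a difference along the cycle structure of $G_{\gamma}$ to match the building blocks of $\vectSet{R}_{\gamma}$.

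First I would take $(\vect{v}, \vect{w}) \in \dir(P)$, fix $m \in \N_{+}$ and a base pair $(\vect{x}_0, \vect{y}_0)$ with $(\vect{x}_0, \vect{y}_0) + nm(\vect{v}, \vect{w}) \in P$ for every $n$, and realise each such pair by a run $\rho_n \in \Omega_{\gamma}$. The wqo on $\Omega(\vectSet{T}^{\ast})$ then provides indices $n_1 < n_2$ with $\rho_{n_1} \leq_{\Omega(\vectSet{T}^{\ast})} \rho_{n_2}$; setting $M := (n_2 - n_1)m$, the identity $M(\vect{v}, \vect{w}) = \dir(\rho_{n_2}) - \dir(\rho_{n_1})$ holds, and since $\vectSet{R}_{\gamma}$ is a cone it suffices to place this difference into $\vectSet{R}_{\gamma}$.

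Next I would decompose the runs. Writing $\rho_{n_1} = \eta_1 \ldots \eta_k$ and using the wqo-induced matching $f$, decompose $\rho_{n_2} = \sigma_0 \eta'_{f(1)} \sigma_1 \ldots \sigma_{k-1} \eta'_{f(k)} \sigma_k$ with $\eta_i \leq_{\Omega(\vectSet{T})} \eta'_{f(i)}$. Set $s_i := \pi_{\gamma}(\target(\eta_i))$ and $s_0 := s_{\gamma}$. Applying Lemma \ref{LemmaStatesIncomparable} to the pointwise inequalities $\source(\eta_i) \leq \source(\eta'_{f(i)})$ and $\target(\eta_i) \leq \target(\eta'_{f(i)})$, whose endpoints all project into $S_{\gamma}$, forces the matched configurations of $\rho_{n_2}$ to also project to $s_{i-1}$ and $s_i$ respectively. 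Consequently each $\sigma_i$ induces a cycle at $s_i$ in $G_{\gamma}$ whose effect $\vect{e}_i := \target(\sigma_i) - \source(\sigma_i)$ is a cycle effect at $s_i$, and each offset $(\vect{p}_i, \vect{q}_i) := \dir(\eta'_{f(i)}) - \dir(\eta_i)$ lies in $\vectSet{P}_{\vectSet{T}, \eta_i}$ by property $(1)$ of Lemma \ref{LemmaBasicflattabilityProperties} applied inside $\vectSet{T}$. A direct computation using $M\vect{v} = \vect{p}_1 - \vect{e}_0$, $\vect{p}_{i+1} - \vect{q}_i = \vect{e}_i$ for $1 \leq i \leq k-1$, and $M\vect{w} = \vect{q}_k + \vect{e}_k$ realises $(M\vect{v}, M\vect{w})$ as the composition chain $(M\vect{v}, \vect{p}_1) \circ (\vect{p}_1, \vect{q}_1) \circ (\vect{q}_1, \vect{p}_2) \circ \ldots \circ (\vect{p}_k, \vect{q}_k) \circ (\vect{q}_k, M\vect{w})$, whose odd-indexed factors are cycle-effect pairs at $s_i$ (elements of $\vectSet{R}_{\gamma, s_i, 1}$) and whose even-indexed factors are pumping offsets at $\eta_i$.

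The main obstacle will be absorbing each pumping offset $(\vect{p}_i, \vect{q}_i)$ into a contribution in $\vectSet{R}_{\gamma, s_{i-1}, 2}$. The plan is first to reduce $\eta_i$ to an edge of $G_{\gamma}$ by replacing it with a minimal $\eta^{\min}_i \leq_{\Omega(\vectSet{T})} \eta_i$ in $\Omega_{s_{i-1}, s_i}$, transporting the excess $\dir(\eta_i) - \dir(\eta^{\min}_i) \in \vectSet{P}_{\vectSet{T}, \eta^{\min}_i}$ into a correspondingly increased offset via periodicity. Once $\eta_i$ is an edge of $G_{\gamma}$, strong connectedness of $G_{\gamma}$ (a consequence of Lemma \ref{LemmaCyclesImplementable}) lets me close $\eta_i$ to a minimal cycle $\rho_{mc}$ at $s_{i-1}$; then, using the fact that $\vectSet{P}_{\vectSet{T}, \mu}$ is reflexive (since $\Rel(\vectSet{T})$ is monotone) for the remaining edges $\mu$ of $\rho_{mc}$, the pair $(\vect{p}_i, \vect{q}_i)$ embeds as an element of $\PosEff(\rho_{mc})$, and by periodicity lies in $\dir(\PosEff(\rho_{mc})) \subseteq \vectSet{R}_{\gamma, s_{i-1}, 2}$. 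Absorbing each such pumping link into the adjacent cycle-effect link as a sum, matching the definition $\vectSet{R}_{\gamma, s} = \vectSet{R}_{\gamma, s, 1} + \vectSet{R}_{\gamma, s, 2}$, finally yields a composable chain in $\bigcup_s \vectSet{R}_{\gamma, s}$ whose composition places $M(\vect{v}, \vect{w})$, and hence by conicity $(\vect{v}, \vect{w})$ itself, in $\vectSet{R}_{\gamma}$.
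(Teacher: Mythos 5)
Your proposal follows the paper's argument closely: rescale a direction of $P$ to get a line of pairs in $P$, realise each by a run in $\Omega_{\gamma}$, use the wqo on $\Omega(\vectSet{T}^{\ast})$ to extract a comparable pair $\rho_{n_1}\leq\rho_{n_2}$, decompose $\rho_{n_2}$ along the matching into pumping offsets on the $\eta_i$ and interleaved excess runs $\sigma_i$, and use Lemma~\ref{LemmaStatesIncomparable} to conclude the $\sigma_i$ project to cycles. The explicit composition chain with the telescoping identities $M\vect{v}=\vect{p}_1-\vect{e}_0$, $\vect{p}_{i+1}-\vect{q}_i=\vect{e}_i$, $M\vect{w}=\vect{q}_k+\vect{e}_k$ is correct and a welcome precision over the paper's terser phrasing.

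There is, however, one imprecision that would need repair. You assert that each $\sigma_i$ "induces a cycle at $s_i$ whose effect $\vect{e}_i:=\target(\sigma_i)-\source(\sigma_i)$ is a cycle effect at $s_i$ (elements of $\vectSet{R}_{\gamma,s_i,1}$)." This is not true as stated: $\sigma_i$ is a run through generalized transitions $\eta'_{i,j}$ that need not themselves be edges of $G_{\gamma}$, so $\vect{e}_i$ contains both a basic cycle effect and pumping internal to $\sigma_i$, i.e.\ $\dir(\sigma_i)\in\vectSet{R}_{\gamma,s_i,1}+\vectSet{R}_{\gamma,s_i,2}$, not $\vectSet{R}_{\gamma,s_i,1}$. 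You later absorb pumping offsets into $\vectSet{R}_{\gamma,\cdot,2}$, but only the offsets $(\vect{p}_i,\vect{q}_i)$ coming from $\eta_i\leq\eta'_{f(i)}$ — the pumping \emph{inside} the $\sigma_i$ needs the same treatment. The paper's proof handles exactly this point: it decomposes $\rho_i'$ ($=$ your $\sigma_i$) into its generalized transitions $\eta'_{i,1}\dots\eta'_{i,k_i}$, picks minimal $\eta_{i,j}\leq\eta'_{i,j}$ in $\Omega_{s,t}$ (these are edges of $G_{\gamma}$), puts $\dir(\eta_{i,1}\dots\eta_{i,k_i})$ into $\vectSet{R}_{\gamma,s_i,1}$, and puts the residual $\dir(\rho_i')-\dir(\eta_{i,1}\dots\eta_{i,k_i})$ into $\vectSet{R}_{\gamma,s_i,2}$. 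Your reduce-to-minimal-edge trick for $\eta_i$ is exactly what is needed here too, so the gap is easily closed by applying it uniformly. Also note that $\dir(\eta'_{f(i)})-\dir(\eta_i)\in\vectSet{P}_{\vectSet{T},\eta_i}$ is immediate from the definition of $\vectSet{P}_{\vectSet{T},\eta_i}$; invoking Lemma~\ref{LemmaBasicflattabilityProperties} is unnecessary.
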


\begin{proof}
Let \((\vect{e}, \vect{f}) \in \dir(P)\). By potentially rescaling, we obtain \((\vect{x}, \vect{y})\in P\) such that \((\vect{x}, \vect{y})+\N (\vect{e}, \vect{f}) \subseteq P\). For every \(n\), let \(\rho_n\) be a run with \(\dirOfRun(\rho_n)=(\vect{c}, \vect{c})+(\vect{x}, \vect{y})+n(\vect{e}, \vect{f})\). Since \(\leq_{\Omega(\Expression^{\ast})}\) is a wqo, there exist indices \(n,m\) such that \(\rho_n \leq_{\Omega(\Expression^{\ast})} \rho_m\). Write \(\rho_n=\eta_1 \dots \eta_k\). Since \(\rho_n \leq_{\Omega(\Expression^{\ast})} \rho_m\), we can write \(\rho_m=\rho_0' \eta_1' \rho_1' \dots \eta_k' \rho_k'\). Clearly we have \(\dirOfRun(\eta_i')-\dirOfRun(\eta_i) \in \vectSet{R}_{\gamma,2}\), since the run \(\rho_m \in \Omega_{\gamma}\) and hence \(\dirOfRun(\eta_i')-\dirOfRun(\eta_i) \in P_{\eta_i} \subseteq R_{\eta_{base,i}}\), where \(\eta_{base,i}\) is a minimal transition with \(\eta_{base,i} \leq_{\Omega(\Expression)} \eta_i\). So we mainly have to deal with the \(\rho_i'\). Let \(\vect{v}_i:=\source(\rho_i')\) for \(i \in \{0,\dots, k\}\), and \(\vect{w}_i:=\target(\rho_i')\). We claim that the \(\rho_i'\) are cycles.

Proof of claim: Since \(\target(\eta_i)=\source(\eta_{i+1}) \leq \source(\eta_{i+1}')\), we have \(s_i:=\pi_{\gamma}(\target(\eta_i)) \leq \pi_{\gamma}(\source(\eta_{i+1}'))\). By Lemma \ref{LemmaStatesIncomparable}, we obtain \(\pi_{\gamma}(\target(\eta_i))=\pi_{\gamma}(\source(\eta_{i+1}'))\). Similarly, from \(\target(\eta_i) \leq \target(\eta_i')\) and again Lemma \ref{LemmaStatesIncomparable} we obtain \(\pi_{\gamma}(\target(\eta_i))=\pi_{\gamma}(\target(\eta_i'))\). Hence \(\target(\eta_i')=\source(\rho_i')\) and \(\target(\rho_i')=\source(\eta_{i+1}')\) project to the same state \(s_i\). It follows that \(\rho_i'\) is a cycle on \(s_i\).

We claim that the effect of any cycle \(\rho'\) which occurs along a run \(\rho \in \Omega_{\gamma}\) is in \(\vectSet{R}_{\gamma}\).

Proof of claim: Let \(\rho'\) be a cycle which occurs along a run \(\rho\). Write \(\rho'=\eta_1' \dots \eta_k'\). Let \(\eta_i \leq \eta_i'\) be the corresponding minimal transitions (labels of the edges of \(G_{\gamma}\)). By definition we have that the basic effect of \(\eta_1 \dots \eta_k\) is in \(R_{\gamma,1}\). It remains to deal with the non-basic effect. Since \(\eta_i \leq_{\Omega(\Expression)} \eta_i'\) and the transition \(\eta_i'\) occurs along a run in \(\Omega_{\gamma}\), namely \(\rho\), we have \(\dirOfRun(\eta_i')-\dirOfRun(\eta_i) \in P_{\eta_i} \subseteq R_{\eta_i}\). Hence the extra effect \(\dirOfRun(\eta_1' \dots \eta_k')-\dirOfRun(\eta_1 \dots \eta_k)=[\dirOfRun(\eta_1')-\dirOfRun(\eta_1)] \circ \dots \circ [\dirOfRun(\eta_k')-\dirOfRun(\eta_k)] \in R_{\eta_1} \circ \dots \circ R_{\eta_k} \subseteq \vectSet{R}_{\gamma}\).
\end{proof}

\end{document}